\title{Determinisation and Unambiguisation of Polynomially-Ambiguous Rational Weighted Automata} %
\titlerunning{Determinisation of Polynomially-Ambiguous Rational Weighted Automata} %
\author{Ismaël Jecker}{University of Franche-Comté, Besançon, France}{}{}{}%
\author{Filip Mazowiecki}{University of Warsaw, Poland}{}{https://orcid.org/0000-0002-4535-6508}{}%
\author{David Purser}{University of Liverpool, UK}{}{https://orcid.org/0000-0003-0394-1634}{}%
\authorrunning{I. Jecker, F. Mazowiecki and D. Purser} %
\keywords{Weighted automata, determinisation problem, finite \& polynomial ambiguity} %
\newcommand*{\eg}{e.g.\@\xspace}
\newcommand*{\ie}{i.e.\@\xspace}
\newcommand{\id}{\mathsf{I\kern-0.8pt d}}
\newcommand{\A}{\mathcal{A}}
\newcommand{\B}{\mathcal{B}}
\newcommand{\C}{\mathcal{C}}
\newcommand{\D}{\mathcal{D}}
\newcommand{\U}{\mathcal{U}}
\newcommand{\T}{\mathcal{T}}
\newcommand{\runs}{\#\mathit{runs}}
\newcommand{\ramsey}{R_\A}
\newcommand{\Z}{\mathbb{Z}}
\newcommand{\N}{\mathbb{N}}
\newcommand{\Q}{\mathbb{Q}}
\newcommand\underrel[2]{\mathrel{\mathop{#2}\limits_{#1}}}
\begin{document}

\maketitle
\begin{abstract}
We study the determinisation and unambiguisation problems of weighted automata over the rational field: Given a weighted automaton, can we determine whether there exists an equivalent deterministic, respectively unambiguous, weighted automaton?
Recent results by Bell and Smertnig show that the problem is decidable, however they do not provide any complexity bounds. We show that both problems are in PSPACE for polynomially-ambiguous weighted automata.
\end{abstract}

\section{Introduction}
Weighted automata are a popular model of computation assigning values to words~\cite{droste2009handbook}.
The domain of these values has a semiring structure
(a structure with addition and product that can define matrix multiplication).
Two popular domains are:
the field of rationals;
and the min-plus semiring over naturals extended with $+\infty$ (also known as the tropical semiring).
Depending on the domain the values have different interpretation.
The most intuitive setting, called \emph{probabilistic automata}~\cite{paz71},
is when the words are assigned their probability of being accepted
(this is a special case of the rational field domain).

We investigate the \emph{determinisation problem} (also known as the sequentiality problem):
given a weighted automaton decide if there is an equivalent deterministic weighted automaton.
If it is the case we say that the automaton is determinisable.
The problem can be stated for any semiring,
but it is very different depending on the choice of the domain (see the survey~\cite{LombardyS06}).
For example, if the semiring is the Boolean semiring then weighted automata coincide with finite automata and the problem trivialises as every automaton can be determinised.
Recently, the problem has been shown to be decidable for the rational field, but no complexity bounds were provided~\cite{abs-2209-02260}. For the tropical semiring determinisation remains an intriguing open problem:
decidability is known only for weighted automata with bounded ambiguity~\cite{KlimannLMP04,LombardyS06}.

Classes with bounded ambiguity can be defined for any automata model (not only weighted automata).
The simplest and most studied  class is the class of \emph{unambiguous automata}, where every word has at most one accepting run (but the automaton does not have to be deterministic). This class has received a lot of attention for many automata models (see \eg the survey~\cite{Colcombet15}). Unambiguous weighted automata are mathematically an elegant class of functions, as they capture functions that use only the semiring product.
A lot of research on the determinisation problem focused on the subproblem when the input automaton is unambiguous~\cite{Mohri97,KirstenM05}.
In this case automata that can be determinised are characterised by forbidding a simple pattern in the automaton, called \emph{twins property}, which can be detected in polynomial time. Due to these result papers often focus on the \emph{unambiguisation problem}, \ie whether there exist and equivalent unambiguous automaton.
The mentioned work deals with the tropical semiring and its variants.
A similar characterisation also works over the rational fields, which we describe in~\cref{appendix:twinproperty}. Thus for unambiguous weighted automata over rationals the determinisation problem is decidable in polynomial time, which we consider a folklore result.

\begin{toappendix}
\label{appendix:twinproperty}
In this section we describe the twin property for unambiguous weighted automata over the rational field. This is unnecessary for our proofs, but it gives an intuition why the unambiguisation and determinisation problems are similar.

Fix an unambiguous weighted automaton over rationals $\A= (Q, \Sigma, M, I, F)$. We say that it is trimmed if all states are reachable and coreachable (see \cref{sec:prelim}).
Note that for unambiguous weighted automata the existence of a run on word $w$ is equivalent to $\A(w) \neq 0$. This is because for unambiguous weighted automata the sum operator is not used, and thus \cref{remark:sumzero} does not apply here.
We write $q \xrightarrow{u} p$ if $M(u)_{q,p} \neq 0$. We write  $I \xrightarrow{u} p$ if there exists $r \in Q$ such that $I(r) \neq 0$ and $M(u)_{r,p} \neq 0$. In words, there is a nonzero run from an initial state to $p$ over the word $u$. Similarly, we write $p \xrightarrow{u} F$ if there exists $r \in Q$ such that $F(r) \neq 0$ and $M(u)_{p,r} \neq 0$. Note that if $\A$ is unambiguous then $r$ is unique both in runs from $I$ and runs to $F$.

We say that $\A$ has the twin property if for every $u,v \in \Sigma^*$ and $p,q \in Q$ such that $I \xrightarrow{u} p \xrightarrow{v} p$ and $I \xrightarrow{u} q \xrightarrow{v} q$ we have $|M(v)_{p,p}| = |M(v)_{q,q}|$.\footnote{Compared to standard twin property, \eg used in~\cite{Mohri97,KirstenM05} for the tropical semiring, we needed to change the weights of cycles to their absolute values.} A standard argument shows that the twin property can be detected in polynomial time by the reduction to the equality test for weighted automata (which is in polynomial time~\cite{Tzeng96,kiefer13}.)

\begin{lemma}
The twin property of an unambiguous automaton $\A$ can be decided in polynomial time.
\end{lemma}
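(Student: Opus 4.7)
The plan is to reduce the twin-property test to a polynomial number of equivalence tests between $\Q$-weighted automata, each of which is solvable in polynomial time by the Tzeng/Kiefer-et-al.\ procedure~\cite{Tzeng96,kiefer13}. I proceed in three phases. First, I compute in polynomial time the set $S\subseteq Q\times Q$ of pairs $(p,q)$ for which some common word $u$ satisfies $I\xrightarrow{u}p$ and $I\xrightarrow{u}q$; this is ordinary forward-reachability in the product $\A\times\A$ from the set of initial pairs $\{(r,s) : I(r)\neq 0,\, I(s)\neq 0\}$. For pairs outside $S$ the twin condition is vacuous, so this restriction is needed for soundness as well as efficiency.

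The heart of the argument is to recast the absolute-value equality $|M(v)_{p,p}|=|M(v)_{q,q}|$ (with its implicit ``both cycles exist'' side condition) as a polynomial identity that a weighted automaton can express. Since $|a|=|b|\iff a^2=b^2$ over $\Q$, the twin condition at $(p,q)$ becomes ``$M(v)_{p,p}^2=M(v)_{q,q}^2$ whenever both sides are non-zero''. Multiplying through by the common factor $M(v)_{p,p}M(v)_{q,q}$ absorbs the side condition and yields the equivalent unconditional identity
\[
M(v)_{p,p}^3\,M(v)_{q,q} \;=\; M(v)_{p,p}\,M(v)_{q,q}^3 \qquad \text{for every } v\in\Sigma^*,
\]
because if either factor vanishes both sides are $0$, and otherwise we may divide by $M(v)_{p,p}M(v)_{q,q}$. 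Each side is realised by the $4$-fold Hadamard product $\A^{\otimes 4}$, whose state space is $Q^4$ and whose transition from $(r_1,r_2,r_3,r_4)$ to $(r_1',r_2',r_3',r_4')$ on letter $a$ carries weight $\prod_i M(a)_{r_i,r_i'}$. Concretely, placing unit initial and final weight on $(p,p,p,q)$ gives an automaton $\B_1^{p,q}$ computing $v\mapsto M(v)_{p,p}^3 M(v)_{q,q}$, and on $(p,q,q,q)$ gives $\B_2^{p,q}$ computing $v\mapsto M(v)_{p,p}M(v)_{q,q}^3$.

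Finally, for each $(p,q)\in S$ I invoke the polynomial-time equivalence test on $\B_1^{p,q}$ and $\B_2^{p,q}$; $\A$ satisfies the twin property iff every such test succeeds. The procedure is polynomial since $|S|\leq |Q|^2$ and each $\B_i^{p,q}$ has $|Q|^4$ states. I expect the main conceptual obstacle to be the linearisation step: neither the absolute value $|\cdot|$ nor the ``both non-zero'' quantifier is natively expressible by a $\Q$-weighted automaton, so the reduction hinges on finding a single polynomial identity that captures both. The squaring equivalence together with the common-factor multiplication is the key observation that makes the reduction go through, and I do not foresee any difficulty in the routine verification that the Hadamard-product construction computes the claimed series.
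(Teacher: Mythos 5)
Your proof is correct, but it takes a genuinely different route from the paper's. The paper constructs two automata $\A_1,\A_2$ over an alphabet of transition pairs separated by a $\$$: each automaton deterministically tracks a run over the prefix $u$, and upon seeing $\$$ switches to a copy of $\A$ whose edge weights are replaced by their absolute values, so that $\A_1$ outputs $|M(v)_{p,p}|$ and $\A_2$ outputs $|M(v)_{q,q}|$; a \emph{single} equivalence test $\A_1 = \A_2$ then settles the matter. The crucial observation enabling this is that replacing weights by absolute values is safe precisely because a trimmed unambiguous automaton admits at most one path between any two states on any word, so there is no summation that could go wrong. You instead iterate over the polynomially many pairs $(p,q)$ reachable from the initial states by a common word, and for each fixed pair replace the conditional statement ``$|M(v)_{p,p}| = |M(v)_{q,q}|$ whenever both cycles exist'' by the unconditional polynomial identity $M(v)_{p,p}^3\,M(v)_{q,q} = M(v)_{p,p}\,M(v)_{q,q}^3$, realised as the fourth Hadamard power of $\A$ with hard-wired source/target state. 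This circumvents the absolute-value trick entirely and uses unambiguity only implicitly, in the reachability precomputation (where $M(u)_{r,p}\neq 0$ must coincide with path existence, which holds because there is no cancellation). The paper's construction is a single test but relies on the ad hoc modification of edge weights; yours is a polynomially-bounded batch of tests built from standard closure operations (Hadamard products), which some may find cleaner to verify. Both yield a polynomial-time procedure by the NC$^2$ equivalence test of \cite{Tzeng96,kiefer13}.
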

\begin{proof}
We defined two automata $\A_1,\A_2$ operating on pairs of runs on $\A$. Intuitively, both $\A_1,\A_2$ will track and verify  the pair of runs in which $I \xrightarrow{u} p \xrightarrow{v} p$ and $I \xrightarrow{u} q \xrightarrow{v} q$. $\A_1$ will output the value of $M(v)_{p,p}$ and $\A_2$ will output the value of $M(v)_{q,q}$, which we require to be equal. We give further details:

Let $T \subseteq Q\times \Sigma\times Q\times \mathbb{Z}$ be transitions of $\A$, and so $w \in (T\otimes T)^*$ describes a pairs of runs in $\A$. We call a pair of runs $w$ valid if at each step both transitions are non-zero and take the same symbol in $\Sigma$, and the starting state of the transition complies with the previous.
We define $\A_1,\A_2$ to have non-zero value on words of the form $w_1\$w_2$, where $w_1$ and $w_2$ are both valid pairs of runs on words $u$ and $v$. We define $\A_1( w_1 \$ w_2 ) = |M_{p,p}(v)|$ where the first run in $w_1$ ends in state $p$ and $\A_2$ takes values $|M_{q,q}(v)|$ where second run on $u$ end in $q$. 

It follows that, $\A$ satisfies the twin property if and only if $\A_1 = \A_2$. Recall, deciding if $\A_1 = \A_2$ can be decided in polynomial time~\cite{Tzeng96,kiefer13}.

It remains to verify that $\A_1$ and $\A_2$ can be constructed in polynomial time. We describe $\A_1$ (as $\A_2$ is symmetric). Take $|\A| +1$ copies of $\A$: the first copy of $\A$ deterministically tracks the first run of $w_1$ and every transition has weight $1$. From state $p$, the $\$$ takes the automaton to a copy of $\A$ in which the initial and accepting state is $p$ and the weight of every transition has the absolute value of its usual weight (note that, $p$ was final state of the first run in $w_1$). Note that it is safe to take the absolute value of the weight since $\A$ is unambiguous.
\end{proof}

\begin{proposition}\label{proposition:twin}
A trimmed unambiguous weighted automaton $\A$ over the rational field is determinisable if and only if it has the twin property.
\end{proposition}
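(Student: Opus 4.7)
My plan is to adapt the classical Mohri--Kirsten twin-property characterisation from the tropical semiring to the rational field. The forward direction uses a pumping argument in a hypothetical equivalent deterministic automaton, while the backward direction carries out a weighted subset construction whose finiteness rests on the twin property. Absolute values appear in our twin property (as opposed to mere equality, used in the tropical setting) precisely because rational cycle weights may alternate in sign while remaining compatible with a deterministic computation.

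For the forward direction, suppose $\A$ is equivalent to a deterministic automaton $\D$ and fix $u, v, p, q$ as in the twin property. Since $\A$ is trimmed, I pick $w_p$ with $p \xrightarrow{w_p} F$; by unambiguity of $\A$, the run $I \xrightarrow{u} p \xrightarrow{v^n} p \xrightarrow{w_p} F$ is the unique accepting run on $u v^n w_p$ for every $n$, so $\A(u v^n w_p) = \alpha_p \cdot M(v)_{p,p}^n \cdot \beta_p$ with $\alpha_p, \beta_p$ nonzero. Meanwhile the sequence of states reached in $\D$ after reading $u v^n$ is ultimately periodic with some period $d$ and cycle weight $\rho$, so $\D(u v^{n_0 + kd} w_p) = c \cdot \rho^k \cdot g(w_p)$. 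Matching these expressions at two consecutive values of $k$ forces $M(v)_{p,p}^d = \rho$, whence $|M(v)_{p,p}|^d = |\rho|$. The identical argument for $q$ (reusing the same $n_0, d$) gives $|M(v)_{q,q}|^d = |\rho|$, and therefore $|M(v)_{p,p}| = |M(v)_{q,q}|$.

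For the backward direction I would build the deterministic automaton $\D$ by a weighted subset construction: states are normalised row vectors $\hat v_w := v_w / v_w(\ell(w))$, where $v_w := I \cdot M(w)$ and $\ell(w)$ is a canonical element (say, the smallest in a fixed order on $Q$) of the support of $v_w$. Transitions of $\D$ act by right multiplication by $M(a)$ followed by renormalisation, with the discarded scalar taken as the transition weight, and the final weight of $\hat v$ is $\hat v \cdot F$. A direct verification shows that this automaton is deterministic and computes $\A(w) = v_w \cdot F$; everything hinges on finiteness of the state space.

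The crux, and the main technical obstacle, is to show that only finitely many normalised configurations $\hat v_w$ arise. I would argue by contradiction: if infinitely many occur, pigeonhole yields an infinite family with a common support $S \subseteq Q$ along which some coordinate $\hat v_w(q)$ takes infinitely many values. A Ramsey-style pumping argument on runs in the product automaton $\A \otimes \A$ then extracts $u, v$ and states $p, q \in S$ with $I \xrightarrow{u} p \xrightarrow{v} p$ and $I \xrightarrow{u} q \xrightarrow{v} q$ such that the asymptotic behaviour of $v_{uv^n}(p) / v_{uv^n}(q)$ is governed by $(M(v)_{p,p}/M(v)_{q,q})^n$; the twin property forces $|M(v)_{p,p}/M(v)_{q,q}|=1$, bounding this ratio and contradicting the unboundedness extracted above. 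The delicate part is isolating the dominant contributions to $v_{uv^n}(p)$ and $v_{uv^n}(q)$ when multiple runs feed into $p$ and $q$, which requires repeated application of pigeonhole over runs.
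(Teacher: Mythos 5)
Your forward direction is essentially the paper's argument, just run directly rather than by contraposition: matching the eventually periodic behaviour of $\D$ on the prefixes $uv^n$ against the closed form $\alpha_p\, M(v)_{p,p}^n\, \beta_p$ forces $|M(v)_{p,p}|^d = |\rho| = |M(v)_{q,q}|^d$. One small thing worth spelling out is that the same period $d$ and cycle weight $\rho$ serve for both $p$ and $q$, since they depend only on the prefix $uv^n$ and not on the suffix read afterwards; that is what lets you cancel $|\rho|$.

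For the backward direction you take a genuinely different route. You propose a weighted subset construction on normalised row vectors $\hat v_w$, whereas the paper's deterministic automaton stores a bounded-length suffix of the input and, whenever the stored word reaches a Ramsey bound $R$, depumps an idempotent infix and records only a sign bit. The two constructions rest on the same ingredients (a Ramsey bound producing infixes $v$ with $\overline{M}(v)$ idempotent, hence diagonal for a trimmed unambiguous automaton; the twin property forcing diagonal entries reached via the same prefix to have equal absolute value), so either can be made to work. However, your finiteness argument as sketched has a gap. You assume infinitely many distinct $\hat v_w$, fix a common support, and then invoke Ramsey to extract a family $(uv^n)_{n}$ along which $v_{uv^n}(p)/v_{uv^n}(q)$ behaves like $(M(v)_{p,p}/M(v)_{q,q})^n$, aiming to contradict ``unboundedness.'' But the Ramsey/pigeonhole argument factorises a \emph{single} sufficiently long word as $u\,v\,w'$ with $M(v)$ of idempotent structure; it does not turn an arbitrary infinite set of words into an arithmetic family $uv^n$, so the intended contradiction does not close. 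The correct shape of the argument is an induction on $|w|$: if $|w|\geq R$, factor $w = u v w'$ with $\overline{M}(v)$ idempotent (hence diagonal), note that the twin property makes all diagonal entries of $M(v)$ reachable from $I$ via $u$ share one absolute value, and conclude $\hat v_{uvw'} = \pm\, \hat v_{uw'}$. Iterating, every $\hat v_w$ equals $\pm\, \hat v_{w_0}$ for some $w_0$ of length below $R$, which bounds the state space. Finally, the ``delicate part'' you flag about multiple runs feeding into $p$ or $q$ does not arise: in a trimmed unambiguous automaton there is at most one path from an initial state to any given state on any given word (two such paths, extended by a coreachability witness, would yield two accepting runs), so each entry $v_w(p)$ is a single path weight or $0$ and no dominance analysis is needed.
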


\begin{proof}
($\implies$) For a contradiction, suppose that the twin property does not hold and let $u,v \in \Sigma^*$ and $p,q \in Q$ be such that $I \xrightarrow{u} p \xrightarrow{v} p$, $I \xrightarrow{u} q \xrightarrow{v} q$ and $|M(v)_{p,p}| \neq |M(v)_{q,q}|$. We denote $x_p$ and $x_q$ to be the absolute values of unique runs in $I \xrightarrow{u} p \xrightarrow{v} p$, $I \xrightarrow{u} q \xrightarrow{v} q$, and we write $y_p = |M(v)_{p,p}|$, $y_q = |M(v)_{q,q}|$. Since $\A$ is trim there exist words $w_p$ and $w_q$ such that $p \xrightarrow{w_p} F$ and $q \xrightarrow{w_q} F$. Let $z_p, z_q$ be the absolute values of the unique runs. Then $\A(uv^nw_p) = x_pz_p\cdot y_p^n$ and $\A(uv^nw_q) = x_qz_q\cdot y_q^n$. Recall that $|y_p| \neq |y_q|$ and thus $\lim_{n \to +\infty}\frac{|\A(uv^nw_p)|}{|\A(uv^nw_q)|}$ is either $0$ or $+\infty$.

Suppose there is an equivalent deterministic automaton $\A'$, \ie $\A' = \A$. %
For every $n$ there is a unique run witnessing $I \xrightarrow{uv^n}$, let $a_n$ be its absolute value.
The words $w_p$, $w_q$ and the size of $\A'$ are fixed. Thus the values $|\A(uv^nw_p)|$ and $|\A(uv^nw_q)|$ are both equal to $a_n$ multiplied by some constants. This is a contradiction.

($\impliedby$) Suppose the twin property holds. The construction of the deterministic automaton is very similar to the second part of the proof of \cref{prop:pumpIsUnamb}. Namely, the deterministic automaton remembers unprocessed parts of the input word. Once the word becomes big enough the automaton depumps a value. We only sketch the construction as it is standard.

The size of the word stored in the automaton is bounded by $R$ such that for every word $w$ of length at least $R$ there exists an infix $v$ such that $M(v)$ has idempotent structure. Such an $R$ exists by standard Ramsey arguments, see \eg~\cite{Jecker21}. For trim unambiguous weighted automata this means that $M(v)$ is diagonal. Let $w = uvv'$. The twin property guarantees that all nonzero values on the diagonal reachable by $u$ must have the same absolute value. Thus every time we find such an idempotent $v$ we can deterministically depump this absolute value. Note that to correctly evaluate the automaton we also need extra states that allow us to keep the sign of the runs. This information can be stored in finite space, as unambiguous automata have a bounded number of runs~\cite{WeberS91} and we only need to remember the parities of negative signs.
\end{proof}

\end{toappendix}

Other popular classes of bounded ambiguity are: \emph{finitely-ambiguous automata} and \emph{polynomially-ambiguous} automata, where the number of accepting runs is bounded by a constant and a polynomial in the size of input word, respectively. Both classes are characterised by forbidding simple patterns that can be detected in polynomial time~\cite{WeberS91}. The decidability results of determinisation over the tropical semiring are for precisely these two classes~\cite{KlimannLMP04,LombardyS06}. In general bounded ambiguous classes of automata are well-known restrictions studied also in other areas~\cite{Raskin18,BarloyFLM22,CzerwinskiH22}.

We focus on the rational field, which recently gained more attention. For simplicity of explanation we present it now assuming weighted automata are over integers. In 2021 Bell and Smertnig~\cite{bell2021noncommutative} proved Reutenauer's conjecture~\cite{Reutenauer79}. It states that a weighted automaton is unambiguisable if and only the image of the automaton is a set of integers with finitely many prime divisors.\footnote{The conjecture (now theorem) is stated more generally for any field.} One implication is straightforward, as unambiguous weighted automata can only output values obtained as a product of its weights. The other implication is the core of the paper~\cite{bell2021noncommutative}. An important step is to compute the \emph{linear-hull} of the input automaton, which boils down to computing the linear Zariski closure of the semigroup generated by the input matrices. It turns out that the input automaton is unambiguisable if and only if its linear hull is unambiguous, reducing the unambiguisation problem to computing the linear hull.
The fact that it is computable essentially follows from computability of the Zariski closure (not linear)~\cite{HrushovskiOP018}. In a recent paper~\cite{abs-2209-02260} Bell and Smertnig focus on computing directly the linear Zariski closure, however, still no complexity upper bound is known.

\paragraph*{Our contribution}
Following the work of Bell and Smertnig we study the determinisation and unambiguisation problems over the rational field. Our result is the following
\begin{restatable}{theorem}{maintheorem}\label{theorem:main}
    Unambiguisability and determinisability are decidable
    in polynomial space
    for polynomially-ambiguous weighted automata.
\end{restatable}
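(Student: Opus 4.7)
The plan is to decompose the task into two parts: reduce determinisation to unambiguisation plus a twin-property check, then decide unambiguisability in PSPACE. For the reduction, if we can produce (possibly implicitly) an equivalent unambiguous automaton $\U$ whenever $\A$ is unambiguisable, then by \cref{proposition:twin} determinisability of $\U$ follows from the twin property, which is decidable in polynomial time. To remain in polynomial space I would either construct a $\U$ of at most exponential size described succinctly, or test the twin property on the fly by simulating the unambiguisation procedure on the original $\A$.

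For unambiguisability itself, the starting point is the Bell--Smertnig theorem: $\A$ is unambiguisable iff its image has only finitely many prime divisors. For a polynomially-ambiguous $\A$, the Weber--Seidl structural characterisation (absence of the IDA pattern) ensures that accepting runs on any word decompose into a bounded number of ``unambiguous phases'' moving through a DAG of strongly connected components, so $\A(w)$ is a sum of polynomially many products of transition weights along SCC-internal cycles. My aim is to distil from this a cycle-local condition that is necessary and sufficient for finite prime support. For necessity, iterating a cycle of weight $c$ that is reachable from the initial states and co-reachable to the accepting states yields image values of the form $\alpha c^n \beta$, so a new prime dividing $c$ forces unbounded $p$-adic valuations in the image, ruling out unambiguisability. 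Sufficiency is subtler because sums of contributions from distinct branches must be reconciled, and I expect to rely on Ramsey-style idempotent factorisations (with constant $\ramsey$) together with the tame multiplicative structure generated by the allowed primes, producing an unambiguous $\U$ of at most exponential size in the spirit of the construction sketched in \cref{proposition:twin}.

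Implementing this characterisation in PSPACE is the main obstacle. The natural algorithm is a nondeterministic guess-and-verify: guess a potentially ``bad'' cycle $(q,v)$ and verify via matrix arithmetic over $\Q$ that its weight introduces a new prime. Two difficulties must be overcome. First, a witness cycle $v$ could a priori be exponentially long; the plan is to show, using the restricted shape of cycles in polynomially-ambiguous automata, that any violation is already witnessed by a cycle of polynomial length, so the guess can be represented in polynomial space. Second, the set of ``allowed'' primes is not given a priori; I intend to bound it in terms of primes dividing the entries and determinants of a polynomially-described family of matrices derived from $\A$, which also bounds the bit-length of any prime that needs to be inspected. Combining these ingredients with iterated matrix squaring in polynomial space, and invoking Savitch's theorem to derandomise the nondeterministic guesses, should yield the claimed PSPACE upper bound for both unambiguisability and determinisability.
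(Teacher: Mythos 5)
Your proposal diverges from the paper's approach and contains several genuine gaps. The most fundamental one is the proposed reduction from determinisability to unambiguisability via \cref{proposition:twin}: checking the twin property requires having an unambiguous automaton $\U$ in hand, but the paper explicitly observes (see \cref{remark:deterministic} and the Conclusion) that while an equivalent unambiguous automaton exists when $\A$ is unambiguisable, no bound on its size can be extracted from the techniques, and indeed ``the size of the unambiguous automaton could be arbitrary big.'' Neither ``construct a $\U$ of exponential size described succinctly'' nor ``test the twin property on the fly'' is therefore justified. The paper avoids this step entirely by treating determinisability and unambiguisability symmetrically through the parallel notions of \emph{pumpability} and \emph{blind pumpability} (\cref{def:pumpable}), reducing each directly to a zeroness test with no intermediate $\U$ or $\D$ ever produced.

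The second gap concerns your cycle-local criterion. In a polynomially-ambiguous automaton $\A(uv^nw)$ is a sum over possibly polynomially many runs, and by \cref{lemma:pumpMatrixS} it takes the form $\sum_i d_i^n\,p_i(n)$ with $d_i$ the diagonal entries of a p-triangular $M(v)$ and $p_i$ polynomials of possibly non-zero degree. A single run along an iterated cycle of weight $c$ contributes $\alpha c^n\beta$, but this is not $\A(uv^nw)$, so ``a new prime dividing $c$ forces unbounded $p$-adic valuations'' does not follow; the sum can cancel or mask individual contributions. The actual necessary condition, pumpability, constrains the whole aggregate to collapse to a single geometric term $d^n\cdot\A(uv^{|\A|}w)$ with $d$ a diagonal entry, and polynomial ambiguity enters precisely to guarantee p-triangularity of idempotent $M(v)$ via \cref{lemma:polyAmbTriangular}, not to give a DAG-of-SCC decomposition into unambiguous phases. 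Finally, your claim that a violation is witnessed by a cycle of polynomial length is unsubstantiated: the paper proves no such bound. Instead it encodes weak pumpability into a weighted automaton $\mathcal{P}_{\A}$ of size exponential in $|\A|$ with PSPACE-computable transitions, and decides its zeroness with the NC$^2$ equivalence algorithm, which runs in space polylogarithmic in $|\mathcal{P}_{\A}|$ and hence polynomial in $|\A|$. A guess-and-verify approach streaming an exponentially long $u\$v\$w$ while tracking $\A(uv^{m}w)$ and $\A(uv^{m+1}w)$ would have to manipulate rationals of exponential bit-length, a difficulty the algebraic zeroness reduction sidesteps.
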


A detailed overview of our approach is summarised in \cref{section:overview}, after introducing the necessary technical preliminaries. Roughly speaking, our approach is to reduce the problems to patterns that behave like in the unary alphabet, \ie by pumping the same infix.  Over the unary alphabet the problem boils down to analysing simple properties of linear recursive sequences (see~\cite{BarloyFLM22} or~\cite{Kostolanyi22}). For polynomially-ambiguous weighted automata we crisply characterise the automata which can be determinised/unambiguised through a notion we call pumpability. Crucially, our notion of pumpability can be decided using a standard zeroness test for weighted automata --- the automaton we require to do this is of exponential size, but combining this with an NC$^2$ algorithm for zeroness we obtain polynomial space.
Our proofs are surprising for two reasons. Both of our decision procedures for unambiguisation and determinisation are in PSPACE, but our algorithm is not effective, \ie in the case there is an equivalent deterministic/unambiguous automaton our algorithm does not construct it. An equivalent deterministic automaton can be constructed using our proof techniques, but the deterministic automaton would be of nonelementary size. We leave open whether the equivalent deterministic automaton need be this large, or whether the equivalent unambiguous automaton can be constructed using our proof techniques.

\paragraph*{Related work}
The seminal result for weighted automata over the rational field is due to Sch{\"{u}}tzenberger~\cite{Schutzenberger61b}, who proved that equivalence is decidable. The problem is known to be even in NC$^2$~\cite{Tzeng96,kiefer13} (thus in particular in polynomial time).
Other problems like containment or emptiness are undecidable already for probabilistic automata~\cite{paz71}.
Recently these undecidable problems gained attention for weighted automata with restricted ambiguity.
The goal is to determine the border of decidability between the classes: finitely ambiguous, polynomially-ambiguous, full class of weighted automata~\cite{FijalkowR017,DaviaudJLMP021,Bell22,ChistikovKMP22,CzerwinskiLMPW22}. The determinisation problem is known to be a particular variant of register minimisation, see \eg~\cite{Daviaud20}. Recently, following the work of Bell and Smertnig~\cite{bell2021noncommutative}, it was proved that register minimisation is also decidable over the rational field~\cite{abs-2307-13505}.

\section{Preliminaries}
\label{sec:prelim}

\subsection{Weighted automata}
A weighted automaton $\A$ is a tuple $(Q, \Sigma, M, I, F)$, where $Q$ is a finite set of states, $\Sigma$ is a finite alphabet, $M:\Sigma \to \mathbb{Q}^{Q\times Q}$ is a transition weighting function, and $I,F\subseteq\mathbb{Q}^Q$ are the initial and the final vectors, respectively. 

For every word $w = a_1\dots a_n$ we define the matrix $M(w)=M(a_1)M(a_2)  \ldots  M(a_n)$. We denote the empty word by $\epsilon$, and $M(\epsilon)$ is the identity matrix. For every word $w\in\Sigma^*$, the automaton outputs $\A(w) = I^T M(w) F$.	We say two automata $\A$ and $\B$ over $\Sigma$ are equivalent if $\A(w) = \B(w)$ for every $w\in\Sigma^*$

We can also interpret a weighted automaton as a finite automaton with weighted edges: when $x\ne 0$ we denote $M(a)_{p,q} = x$ by a transition $p \xrightarrow{a : x} q$. We say a state $q$ is initial if $I(q) \ne 0$ and final if $F(q) \ne 0$.  

The size of the automaton $\A$, denoted $|\A|$, is the number of states of the automaton. The norm of $\A$, denoted $||\A||$, is the largest absolute value of numerators and denominators of numbers occurring in $M,I$ and $F$. Observe the automaton can be represented using $O(|\A|^2\log(||\A||))$ bits.

\begin{definition}
    We say a matrix $M$ is \emph{p-triangular} if there exists a permutation matrix $P$  such that $P M P^{-1}$ is upper triangular.
\end{definition}

\begin{remark}
The diagonal entries of an upper-triangular matrix are exactly the eigenvalues of the matrix. Since permutations do not change which entries are on the diagonal (only their order), when we refer to diagonal entries of p-triangular matrix, we equivalently refer to the eigenvalues of the matrices.
\end{remark}

\subsection{Paths, runs, counting runs, and the monoid of structures}\label{subsection:paths}

A path of $w= a_1\dots a_n$ in $\A$ is a sequence of states $q_1,\dots,q_{n+1}$ such that for $1\le i \le n$ we have $q_i \xrightarrow{a_i : x_i} q_{i+1}$ (recall, this notation entails that $x_i \ne 0$). The value of a path is the product of the $x_i$.  A path is a cycle if $q_1 = q_{n+1}$. A path is a run if $q_1$ is initial and $q_{n+1}$ is final, and the value of the run is the product of $I(q_1),F(q_{n+1})$ and the value of the path.

A state $q$ is reachable if there exists an initial state $p$ and a word $w$ such that there is a path from $p$ to $q$. A state $q$ is coreachable if there exists a finial state $r$ and a word $w$ such that there is a path from $q$ to $r$. Henceforth, we assume every state is reachable and coreachable (by trimming the automaton where necessary).

We denote by $\runs_\A(w)$ the number of distinct runs on word $w$ in $\A$.

\begin{remark}\label{remark:sumzero}
The existence of a run from $q$ to $p$ on $w$ does not necessarily entail that $ \A(w) \ne 0$. The value of all the runs on $w$ may sum to zero, as $\A$ is not necessarily non-negative.
\end{remark}

We can also define the number of runs by treating the underlying finite automaton as a weighted automaton with weight $1$ on every non-zero transition.
Given $x\in \mathbb{Q}$, let $\overline{x} = 1$ if $x \ne0 $ and $\overline{x} = 0$ if $x = 0$.
We extend the notion point wise to the transitions, initial and final states:
$\overline{M}(a)_{p,q} = \overline{{M}(a)_{p,q}}$,
$\overline{I}_q = \overline{I_q}$ and $\overline{F}_q = \overline{F_q}$.
Then we have that $
\runs_\A(w) = \overline{I}^T \overline{M}(a_1)\overline{M}(a_2)  \ldots  \overline{M}(a_n) \overline{F}$.

For a given matrix $M$, we say that $\overline{M}$ is the \emph{structure} of the matrix.
The set of structures equipped with the matrix multiplication $\otimes$
taken in the Boolean semiring (i.e. $1+1 = 1$) forms a monoid,
often called the monoid of \emph{Boolean matrices} in the literature.
We say that a matrix $M$ has \emph{idempotent structure} if its structure is an idempotent element of this monoid:
$\overline{M} = \overline{M}\otimes\overline{M}$.

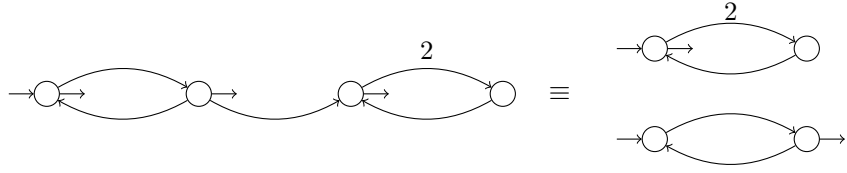
\begin{figure}
\centering
\begin{tikzpicture}
\node[circle,draw] (q1) at (-3,0) {};
\node[circle,draw] (q2) at (-1,0) {};
\node[circle,draw] at (3,0) (p1) {};
\node[circle,draw] at (1,0) (p2) {};

\node at (3.75,0) {\large $\equiv$};

\node[circle,draw]  at (5,0.6) (c1) {};
\node[circle,draw]  at (7,0.6) (c2) {};
\node[circle,draw]  at (5,-0.6) (d1) {};
\node[circle,draw]  at (7,-0.6) (d2) {};

\path
(q2) edge[->,bend right] node[below] {} (p2)
(q1) edge[->] ++(0.5,0)
(q2) edge[->] ++(0.5,0)
(q1) edge[<-] ++(-0.5,0)
(p2) edge[->] ++(0.5,0)
(p1) edge[->,bend left] node[below] {} (p2)
(p2) edge[->,bend left] node[above] {$2$} (p1)
(q1) edge[->,bend left] node[above] {} (q2)
(q2) edge[->,bend left] node[below] {} (q1)
;

\path 
(c1) edge[->] ++(0.5,0)
(c1) edge[<-] node[above]{} ++(-0.5,0)
(d1) edge[<-] ++(-0.5,0)
(d2) edge[->] ++(0.5,0)
(c1) edge[->,bend left] node[above,yshift=-0.08cm] {$2$} (c2)
(c2) edge[->,bend left] node[below,yshift=0.08cm] {} (c1)
(d1) edge[->,bend left] node[above] {} (d2)
(d2) edge[->,bend left] node[above] {} (d1)
;
\end{tikzpicture}
\caption{Example of a unary weighted automaton, where the input label of all edge is the letter~$a$ (omitted on the picture). Unlabelled edges are assumed to have weight $1$.
Observe, if $n$ is even then $\A(a^n)=1+\sum_{i=0}^{n/2 -1} 2^i = 2^{n/2}$ and if $n$ is odd then $\A(a^n)=1$.
The automaton on the left is polynomially-ambiguous,
and unambiguisable as depicted on the right, but the function is not determinisable.
}\label{fig:example}
\end{figure}

\subsection{Determinisim, ambiguity and decision problems}\label{subsection:ambiguity}
\begin{definition}
We say that an automaton $\A$ is:
\begin{itemize}
    \item \emph{deterministic} if it  has at most one non-zero entry in $I$, and $M(a)$ has at most one non-zero entry on every row for every $a\in\Sigma$,
	\item \emph{unambiguous} if $\runs_\A(w) \le 1$ for every word $w\in \Sigma^*$,
	\item \emph{finitely ambiguous} if there exists $k$ such that $\runs_\A(w) \le k$ for every word $w\in \Sigma^*$, 
	\item \emph{polynomially-ambiguous} if there exists a polynomial $\mathfrak{p}$ such that $\runs_\A(w) \le \mathfrak{p}(|w|)$ for every word $w\in \Sigma^*$, and
\item  \emph{exponentially ambiguous} otherwise, in particular, $\runs_\A(w) \le |Q|^{|w|+1}$ for every $w$.
\end{itemize}
\end{definition}
These characterisations lead to the following natural problems:
\begin{itemize}
    \item The \emph{determinisation problem} asks, given a weighted automaton $\A$, if there is an equivalent deterministic weighted automaton. If the answer is positive, we say $\A$ is \emph{determinisable}.
    \item The \emph{unambiguisation problem} asks, given a weighted automaton $\A$, if there is an equivalent unambiguous weighted automaton. If the answer is positive, we say $\A$ is \emph{unambiguisable}.
\end{itemize}
An example polynomially-ambiguous weighted automaton that is unambiguisable but not determinisable is depicted in~\cref{fig:example}.

\subsection{Closure properties}
We recall a standard result: weighted automata are closed under negation, difference and product. 

\begin{restatable}{lemma}{lemmaCombination}\label{lemma:combination}
    Let $\A_1$ and $\A_2$ be weighted automata over $\Sigma$
    that have size $m_1$ and $m_2$.
    \begin{itemize}[nolistsep]
    \item
    The function $- \A:  u \mapsto  - \A(u)$
    is recognised by an automaton of size $|\A|$ and norm $||\A||$;
    \item
    The function $\A_1 - \A_2:  u \mapsto \A_1(u) - \A_2(u)$
    is recognised by an automaton of size $|\A_1| + |\A_2|$ and norm $\max\{||\A_1||, ||\A_2||\}$;
    \item
    The function $\A_1 \cdot \A_2:  u \mapsto \A_1(u) \cdot \A_2(u)$
    is recognised by an automaton  of size $|\A_1| \cdot |\A_2|$ and norm $||\A_1|| \cdot ||\A_2||$.
    \end{itemize}
Furthermore,
if we assume that we can compute in space
$O(\log(||\A_1||)$ (respectively $O(\log(||\A_2||)$)
the weight of a transition in $\A_1$ (respectively $\A_2$)
given by a letter $a \in \Sigma$ and two states of $\A_1$ (respectively $\A_2$),
then we can compute in space $O(\log(||\A_1||\cdot||\A_2||)$
the weight of a transition in $\A_1 - \A_2$ or $\A_1\cdot\A_2$
given by a letter $a\in\Sigma$ and two states of the corresponding automaton.
\end{restatable}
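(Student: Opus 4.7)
The plan is to give the three constructions explicitly and verify each of the three bullet points in turn, then address the space-complexity addendum.

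For negation, I would simply replace the final vector $F$ by $-F$ (equivalently, negate $I$, or negate $M(a)$ for one fixed letter — but negating $F$ is cleanest since $F$ appears linearly in $I^T M(w) F$). The state set, transition matrices, and initial vector are unchanged, so the size is $|\A|$ and the norm is at most $||\A||$.

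For the difference $\A_1 - \A_2$, the standard trick is to take the disjoint union: form $\A$ with state set $Q_1 \sqcup Q_2$, block-diagonal transition matrices $M(a) = \mathrm{diag}(M_1(a), M_2(a))$, initial vector $(I_1, I_2)$, and final vector $(F_1, -F_2)$. Then $\A(w) = I_1^T M_1(w) F_1 - I_2^T M_2(w) F_2 = \A_1(w) - \A_2(w)$. The size is $|\A_1| + |\A_2|$, and every entry used is an entry of either $\A_1$ or $\A_2$ (with at most a sign flip on $F_2$), so the norm is $\max\{||\A_1||, ||\A_2||\}$.

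For the product $\A_1 \cdot \A_2$, I would use the tensor (Kronecker) product construction: take state set $Q_1 \times Q_2$, $M(a) = M_1(a) \otimes M_2(a)$, $I = I_1 \otimes I_2$, $F = F_1 \otimes F_2$. The key identity $(A \otimes B)(C \otimes D) = (AC) \otimes (BD)$ gives $M(w) = M_1(w) \otimes M_2(w)$, hence $\A(w) = (I_1^T M_1(w) F_1)(I_2^T M_2(w) F_2) = \A_1(w) \cdot \A_2(w)$. The size is $|\A_1| \cdot |\A_2|$, and each entry of any of $M(a), I, F$ is a product of two rationals of absolute value at most $||\A_1||$ and $||\A_2||$ respectively (after clearing denominators), so the norm is at most $||\A_1|| \cdot ||\A_2||$.

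For the space bound, the point is that in all three constructions a single transition weight of the composite automaton is computable from $O(1)$ transition weights of the input automata, by reading off the corresponding indices: for $\A_1 - \A_2$ a state is tagged by which component it belongs to, and for $\A_1\cdot\A_2$ a state $(q_1, q_2)$ yields the entry $M_1(a)_{p_1,q_1}\cdot M_2(a)_{p_2,q_2}$, a single multiplication of two rationals each computable in space $O(\log||\A_i||)$. The product of two rationals of bit-length $O(\log||\A_1||)$ and $O(\log||\A_2||)$ has bit-length $O(\log(||\A_1||\cdot||\A_2||))$ and is computable in that space. There is no substantive obstacle here; the only small point to be careful about is to keep the descriptions purely local — no materialisation of the whole matrix — so that the composition is still space-efficient when the inputs are.
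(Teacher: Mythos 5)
Your proposal is correct and follows essentially the same approach as the paper: the paper negates $I$ where you negate $F$ (equivalent), forms the disjoint union for the difference, and uses the same state-product construction for $\A_1\cdot\A_2$ (which you phrase as a Kronecker product), with the same local-lookup argument for the space bound.
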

\begin{toappendix}
\lemmaCombination*
\begin{proof}
Negation is obtained by taking $-I$ as the initial vector, disjoint union of two automata encodes summation, which together give the difference.

The following construction gives the product: given $(Q,\Sigma, M, I, F)$, construct $(Q\times Q, \Sigma, M',I',F')$ with $M'$ given by transitions $(q,q') \xrightarrow{a: x\cdot y} (p,p')$ whenever $q\xrightarrow{a:x} p$ in $\A_1$ and $q'\xrightarrow{a:y}p'$ in $\A_2$. For all, $q,q'\in Q$  let $I'(q,q') = I(q)I(q')$ and  $F'(q,q') = F(q)F(q')$.

Given $q,q'\in \A_1-\A_2$ the transition probability of $M_{\A_1-\A_2}(a)_{q,q'}$ can easily be computed by looking up $M_{\A_1}(a)_{q,q'}$ or $M_{\A_2}(a)_{q,q'}$ assuming $q,q'$ were from the same automaton, or returning $0$ if not. 

Given $(q,q'),(p,p')\in \A_1\cdot \A_2$ the transition probability of $M_{\A_1\cdot\A_2}(a)_{q,q'}$ can easily be computed by taking the product of  $M_{\A_1}(a)_{q,q'}$ and $M_{\A_2}(a)_{p,p'}$.
\end{proof}
\end{toappendix}

\subsection{Assumptions}\label{sec:assumptions}
In this paper, without loss of generality, we only consider weighted automata that satisfy the following two assumptions:

\subparagraph*{Non-negative transitions}
We assume that the weight of every path is non-negative,
although the run, when combined with $I,F$, may be negative.
Formally, we assume that $M(a)_{p,q} \ge 0$ for all $a\in\Sigma,p,q\in Q$.
Should the condition fail,
there is a polynomial time algorithm to produce an equivalent weighted automaton
whose matrix entries are non-negative.
This can be done so that either the initial vector or the final vector is non-negative,
so only one of these may have negative values.
The full construction is available in \cref{appendix:construction}.
\begin{toappendix}
\subsection{Constructing an equivalent automaton with non-negative transitions}
    \label{appendix:construction}
Given $\A = (Q,\Sigma, M,I,F)$, we construct equivalent $\A' = (Q',\Sigma,M',I',F')$. Let $Q'= \{q_- \mid q\in Q \} \cup\{q_+ \mid q\in Q \}$.
Given a transition $q\xrightarrow{a:x}p$ in $\A$, we add the following transitions to $\A'$:
\begin{itemize}
\item $q_+ \xrightarrow{a:x} q_+$ and $q_- \xrightarrow{a:x} q_-$ if $x\ge 0$,
\item $q_+ \xrightarrow{a:-x} q_-$ and $q_- \xrightarrow{a:-x} q_+$ if $x< 0$.
\end{itemize}
For every $q\in Q$, we let:
\begin{itemize}
    \item $I(q_+) = I(q)$ if $I(q) \ge 0$ and $0$ otherwise,
    \item $I(q_-) = -I(q)$ if $I(q) < 0$ and $0$ otherwise,
    \item $F(q_+) = F(q)$,
    \item $F(q_-) = -F(q)$.
\end{itemize} 
\begin{claim}
$\A'(w) = \A(w)$ for all $w\in \Sigma*$.
\end{claim}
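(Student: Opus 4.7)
The plan is to establish a weight-preserving correspondence between paths in $\A$ and paths in $\A'$, in which the $\pm$ subscript in $\A'$ records the accumulated sign of the path so far. Concretely, I would prove by induction on $|w|$ the following lifting lemma: for every $q,p \in Q$, every word $w$, and every starting sign $\sigma \in \{+, -\}$, each path $\pi : q \to p$ on $w$ in $\A$ with value $v(\pi) \ne 0$ corresponds bijectively to a unique path $\pi'$ in $\A'$ from $q_\sigma$ to $p_{\sigma \cdot \mathrm{sgn}(v(\pi))}$ of value $|v(\pi)|$. The inductive step is immediate from the two cases of the construction: a non-negative transition $q \xrightarrow{a:x} p$ preserves the sign-track and copies the weight, while a negative transition flips the sign-track and carries $|x|$, so after $i$ steps the subscript equals $\sigma$ times the sign of $x_1 x_2 \cdots x_i$. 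In particular, paths in $\A'$ not matching this lift cannot exist because every transition in $\A'$ is of exactly one of the prescribed forms.

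Given this correspondence, I would expand $\A'(w) = \sum_{q,p,\sigma,\tau} I'(q_\sigma) \cdot M'(w)_{q_\sigma, p_\tau} \cdot F'(p_\tau)$ and group the sum according to the underlying path $\pi$ in $\A$. For a fixed $\pi: q \to p$ on $w$, setting $s = \mathrm{sgn}(v(\pi))$, exactly two lifts contribute: one from $q_+$ to $p_s$ of value $|v(\pi)|$ and one from $q_-$ to $p_{-s}$ of the same value. By the definition of $I'$, the first of these is nonzero only when $I(q) \ge 0$, and the second only when $I(q) < 0$; in either case the initial weight is $|I(q)|$. A short case split on $s \in \{+,-\}$ using $F'(p_+) = F(p)$ and $F'(p_-) = -F(p)$ then collapses the product to $I(q) \cdot v(\pi) \cdot F(p)$, which is precisely the contribution of $\pi$ to $\A(w)$. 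Summing over all paths $\pi$ yields $\A'(w) = \A(w)$.

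The only real subtlety is bookkeeping: one must distinguish the sign carried by the state subscript from the sign of the path value, and verify that the asymmetric definitions of $I'$ and $F'$ are arranged so that the two potential contributions from $q_+$ and $q_-$ never both fire (handling the $I(q) = 0$ case trivially) and that the final sign flip in $F'$ exactly cancels the $|v(\pi)|/v(\pi)$ discrepancy. This reduces to the identity $|I(q)| \cdot \mathrm{sgn}(I(q)) \cdot s \cdot \mathrm{sgn}(F'(p_{\sigma s})/F(p)) = I(q) \cdot v(\pi) \cdot F(p) / |v(\pi)|$ for the appropriate $\sigma$, which is straightforward to verify. I do not anticipate any deeper obstacle; the construction is engineered to make precisely this cancellation work, and the proof is a routine induction plus a four-case sign check.
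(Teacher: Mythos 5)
Your proposal is correct and follows essentially the same line as the paper's own argument: both establish a weight-preserving correspondence between paths of $\A$ and paths of $\A'$ (your $\mathrm{sgn}(v(\pi))$ is exactly the paper's parity of negative transitions), and then check the four sign cases on $\mathrm{sgn}(I(q))$ and $\mathrm{sgn}(v(\pi))$ against the asymmetric definitions of $I'$ and $F'$. You are slightly more explicit about formalising the lifting bijection for both starting signs $\sigma$ and noting that at most one of $I'(q_+),I'(q_-)$ is nonzero, but this is the same proof.
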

\begin{proof}

Consider a run on $w$ from $q$ to $p$ in $\A$. We consider the equivalent run in $\A'$, which goes from either $q_{+}$ or $q_{-}$ to $p_+$ or $p_-$.  It is clear that the absolute value is maintained by the translation, we verify the correct sign is preserved. The sign of $I',M'$ are positive, thus the sign of the run in $\A'$ depends only on $F'$.

First suppose the number of negative transitions taken in the run is even. In this case the sign should be that of $I(q)F(q)$. If $I(q)\ge 0$, then the path goes from $q_+$ to $p_+$, which has the same sign as $F(p)$.
If $I(q)< 0$ then the path goes from $q_-$ to $p_-$, in which case the sign of both $I$ and $F$ are swapped, thus the sign of $I(q)F(q)$ is the same as $I(q_-)F(p_-)$.

Secondly suppose the number of negative transitions taken in the run is odd, in which case the sign should be swapped from that of $I(q)F(p)$. If $I(q)\ge0$, the sign should be opposite to $F(p)$, indeed the path goes from $q_+$ to $p_-$, for which the sign $F'(p_-)$ is swapped from that of $F(p)$.
If $I(q)<0$, the sign should be the same as $F(p)$, indeed the path goes from $q_-$ to $p_+$, for which the sign of $F'(p_+)$ is the same as $F(p)$.
\end{proof}

\begin{remark}
    As similar construction can assume that negative values appear only in the initial vector, rather than the final vector.
\end{remark}

\end{toappendix}

\begin{remark}\label{remark:monoid}
    A crucial consequence of this assumption is that for every word $u \in \Sigma^*$,
    each entry $(M(u))_{ij}$ of the matrix $M(u)$ is non-zero if and only if there
    exists (at least) one path labelled by $u$ between $i$ and $j$.
    Note that this is \emph{not} necessarily the case if negative weights are allowed,
    since then distinct runs can cancel each other.
    Stated more formally, this assumption implies that the function mapping a word $u \in \Sigma^*$
    to to the structure $\overline{M}(u)$
    of the corresponding matrix is a monoid homomorphism between the free monoid $\Sigma^*$ and the finite monoid
    of Boolean matrices: $\overline{M}(uv) = \overline{M}(u) \cdot \overline{M}(v)$
    for every $u,v \in \Sigma^*$.
\end{remark}

\subparagraph*{Integer values}
We assume that the weights of the automaton are integer values,
rather than rational values.
That is, $M: \Sigma \to \Z^{Q\times Q}$, $I,F\in \Z^{Q}$.

In case the condition does not hold,
an integer weighted automaton $\A'$ can be constructed such that $\A$ is unambiguisable (resp. determinisable)
if and only if $\A'$ is unambiguisable  (resp. determinisable).
Let $X$ be the set of denominators of transition weights, initial weights and final weights
and let $x = \operatorname{lcm}(X)$. 
Note that $|x| \le ||\A||^{|X|}$, thus $\log(x) \le |X|\log(||\A||)$,
and observe that $|X|,\log(||\A||)$ are both polynomial in the size of the representation of $\mathcal{A}$.

We construct $\A'$ from $\A$ with $M,I,F$ replaced by $M',I',F'$, where $I'(q) = x I(q)$,$F'(q) = x F(q)$,
and $M'(a)_{p,q} = x M(a)_{p,q}$ for all $a\in \Sigma, p,q\in Q$. Now $\A'(w) = x^{|w|+2}\A(w)$.

\section{Overview of our main result}
\label{section:overview}
We start with an overview that presents the ideas
behind our decision procedure,
the technical details are presented in the following sections.
Our algorithms deciding whether a weighted automaton
$\A = (Q, \Sigma, M, I, F)$
is unambiguisable, respectively determinisable,
rely on the study of the behaviours of $\A$
over families of words of the form
$(uv^nw)_{n \in \mathbb{N}}$,
with $u,v,w \in \Sigma^*$.
Since $\A(uv^nw)$ is defined as
$I \cdot M(u) \cdot M(v)^n \cdot M(w) \cdot F$,
understanding these behaviours reduces to understanding
powers of matrices.

\begin{restatable}{lemma}{keyLemma}\label{lemma:pumpMatrixS}
    Let $M$ be an $m \times m$ p-triangular matrix
    with the set of diagonal entries
    $\{d_1,d_2,\ldots,d_k\} \subseteq \mathbb{N}$,
    and let $\vec{x},\vec{y} \in \mathbb{Q}^m$.
    There exist $k$ polynomials $p_1,p_2,\ldots,p_k$
    such that
    \begin{equation}\label{equation:sum}
    \vec{x}^T \cdot M^n \cdot \vec{y}=
    \sum_{i=1}^k d_i^n \cdot p_i(n)
    \textup{ for all }
    n \geq m.
    \end{equation}
    Moreover, 
    if the matrix $M$ is invertible;
    $p_1,p_2,\ldots,p_k$ are all constant polynomials;
    and at most one $p_j$ is not constantly $0$,
    then
    \begin{equation}\label{equation:invertible}
    \vec{x}^T \cdot M^n \cdot \vec{y}=
    d_j^n \cdot \vec{x}^T \cdot \vec{y}
     \textup{ for all }
    n \geq 0.
    \end{equation}
\end{restatable}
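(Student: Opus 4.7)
The plan is to invoke the theory of linear recurrent sequences. Since $M$ is p-triangular, its characteristic polynomial factors as $\chi(x) = \prod_{i=1}^k (x - d_i)^{m_i}$, where $m_i$ denotes the multiplicity of $d_i$ on the diagonal of $M$ (so $\sum_i m_i = m$). By Cayley--Hamilton, $\chi(M) = 0$, so the scalar sequence $a_n := \vec{x}^T M^n \vec{y}$ satisfies the order-$m$ linear recurrence whose characteristic polynomial is $\chi$.

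For the decomposition \cref{equation:sum}, I would appeal to the classical structure theorem for linear recurrences: when all roots of the characteristic polynomial are nonzero, the solution space is spanned by the $m$ sequences $n \mapsto n^j d_i^n$ with $i \in \{1,\ldots,k\}$ and $0 \leq j < m_i$, so every solution admits a unique representation $a_n = \sum_i d_i^n q_i(n)$ with $\deg q_i < m_i$, valid for all $n \geq 0$. The subtlety is the case where $0$ is an eigenvalue, because the sequences $n \mapsto n^j \cdot 0^n$ then fail to give the required number of independent solutions and the naive representation can break at small $n$. I would circumvent this using the Jordan canonical form of $M$, which exists over $\mathbb{Q}$ since every eigenvalue is an integer: writing $M = P J P^{-1}$ yields $M^n = P J^n P^{-1}$, and the $n$th power of each Jordan block for a nonzero eigenvalue $d_i$ has entries of the form $\binom{n}{j} d_i^{n-j}$, while the blocks for eigenvalue $0$ are nilpotent and their $n$th powers vanish once $n$ reaches their size, which is at most $m$. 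Collecting the surviving contributions for $n \geq m$ and grouping them by distinct diagonal value yields the required polynomials, with $p_i \equiv 0$ for any index with $d_i = 0$.

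For the second part, assume further that $M$ is invertible. Then $0$ is not an eigenvalue, no Jordan block is nilpotent, and the basic sequences $n^j d_i^n$ are linearly independent over all $n \geq 0$, so the decomposition $a_n = \sum_i d_i^n p_i(n)$ extends from $n \geq m$ to every $n \geq 0$. Under the additional hypotheses that each $p_i$ is constant and only $p_j$ is nonzero, say $p_j \equiv c$, this gives $a_n = c \cdot d_j^n$ for all $n \geq 0$. Specialising at $n = 0$ yields $\vec{x}^T \vec{y} = a_0 = c$, whence $c = \vec{x}^T \vec{y}$ and \cref{equation:invertible} follows.

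The main obstacle will be the careful handling of the eigenvalue $0$ in the first part. A purely recurrence-based argument has to work around the failure of the sequences $n^j \cdot 0^n$ to span the subspace of solutions supported on small indices; routing through Jordan form localises this difficulty to the nilpotent Jordan blocks, whose $n$th powers vanish precisely once $n \geq m$, which explains the threshold appearing in the statement.
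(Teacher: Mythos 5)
Your proof is correct, and it takes essentially the same route as the paper, modulo how much is delegated to standard results. The paper first shows (its Lemma~\ref{lemma:blockDecomp}) that a p-triangular matrix is similar to a block-diagonal matrix whose blocks are upper triangular with constant diagonal; for each block $B$ it writes $B = d\cdot\id + B_0$ with $B_0$ nilpotent and expands $B^n$ binomially, producing the polynomial explicitly and observing that the nilpotent tail vanishes for $n \geq m$. You instead go to full Jordan form and cite the structure theorem for linear recurrences (valid once all roots are nonzero), then patch the degenerate eigenvalue~$0$ by noting that nilpotent Jordan blocks die out by step $m$. These are the same idea at two levels of granularity; the paper's coarser block decomposition avoids having to assert existence of Jordan form over $\mathbb{Q}$, but your observation that it is available here because all eigenvalues are rational is sound.

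The one place you are genuinely slicker is the second claim. The paper proves a separate per-block statement (second half of its Lemma~\ref{lemma:pumpMatrixId}): if $d>0$ and the resulting polynomial is constant, every term $\vec{x}^T B_0^i \vec{y}$ with $i \geq 1$ must vanish, and the constant is forced to be $\vec{x}^T\vec{y}$. You instead observe that invertibility of $M$ removes the nilpotent obstruction entirely, so the representation $a_n = \sum_i d_i^n p_i(n)$ already holds for all $n\geq 0$ (by uniqueness of the $p_i$, using that the $d_i$ are distinct positive integers), and evaluating at $n=0$ gives $c = a_0 = \vec{x}^T\vec{y}$ directly. This is a shorter path to the same conclusion; if you flesh it out, do state the uniqueness of the polynomial decomposition explicitly, since that is what transfers the $p_i$ computed on $n\geq m$ to the Jordan-form expression valid on $n\geq 0$.
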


\begin{toappendix}
   
\subsection{Proof of Lemma~\ref{lemma:pumpMatrixS}}\label{subsection:technical}

\keyLemma*
\noindent
The proof %
relies on two lemmas. \cref{lemma:blockDecomp} shows that every p-triangular matrix can be transformed (via an appropriate change of basis) into
a diagonal block matrix where each block is an upper triangular matrix
whose diagonal entries share the same value. It is a standard construction
in linear algebra: for instance, such a change of basis is used to transform
a matrix into its Jordan normal form.

\begin{lemma}\label{lemma:blockDecomp}
    Let $M$ be a p-triangular matrix with diagonal entries
    $\{d_1,d_2,\ldots,d_k\} \subseteq \mathbb{N}$.
    There exists an invertible matrix $P$ 
    such that
    $P^{-1} M P = B_1 \oplus B_2 \oplus \cdots \oplus B_k$
    is a block diagonal matrix
    where each $B_i$ is an upper triangular matrix whose
    diagonal entries all equal $d_i$.
\end{lemma}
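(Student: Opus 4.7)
The plan is first to conjugate $M$ by the permutation matrix witnessing p-triangularity, so that without loss of generality $M$ itself is upper triangular; any subsequent change of basis then gives a matrix similar to the original $M$. Write the diagonal entries of $M$ as $\lambda_1,\ldots,\lambda_m$ (with repetition), and let $n_i$ be the number of occurrences of $d_i$ among them. Because $M$ is upper triangular, its characteristic polynomial reads off the diagonal as $\chi_M(t) = \prod_{i=1}^k (t - d_i)^{n_i}$, which splits into linear factors over $\Q$ since each $d_i$ lies in $\N$.

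The core tool I would use is the primary decomposition of $\Q^m$ with respect to $M$. Define the generalized eigenspace $V_i := \ker\bigl((M - d_i I)^m\bigr)$ for each $i$. Because the factors $(t - d_i)^{n_i}$ are pairwise coprime in $\Q[t]$, a standard Bezout/CRT argument combined with the Cayley--Hamilton theorem yields an $M$-invariant internal direct sum decomposition $\Q^m = V_1 \oplus V_2 \oplus \cdots \oplus V_k$, with $\dim V_i = n_i$.

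Next I would upper-triangularize the restriction of $M$ to each generalized eigenspace. By construction the operator $N_i := (M - d_i I)|_{V_i}$ is nilpotent, so the flag $\{0\} \subsetneq \ker N_i \subsetneq \ker N_i^2 \subsetneq \cdots \subsetneq V_i$ supplies a basis of $V_i$ in which $N_i$ is strictly upper triangular. In that basis $M|_{V_i} = d_i I + N_i$ is upper triangular with every diagonal entry equal to $d_i$, which is the desired block $B_i$. Concatenating the chosen bases in any order gives the columns of the invertible matrix $P$, and then $P^{-1} M P = B_1 \oplus B_2 \oplus \cdots \oplus B_k$.

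The main point that needs careful verification is the equality $\dim V_i = n_i$. One way to see this is to note that, in any basis adapted to the decomposition, the characteristic polynomial of $M$ factors as the product of characteristic polynomials of the blocks $M|_{V_i}$; since $M|_{V_i}$ is annihilated by $(t-d_i)^m$, its only eigenvalue is $d_i$ and its characteristic polynomial must be $(t-d_i)^{\dim V_i}$. Matching multiplicities against $\chi_M$ then forces $\dim V_i = n_i$. Beyond this bookkeeping the construction is entirely standard linear algebra, and it works precisely because all eigenvalues are rational, so no extension of the base field is needed.
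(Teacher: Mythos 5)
Your proof is correct. The paper itself gives no proof of this lemma, explicitly deferring to it being ``a standard construction in linear algebra'' akin to the change of basis used for Jordan normal form; your argument is precisely that standard construction, made explicit: reduce to the upper-triangular case via the permutation conjugation, apply primary decomposition into generalized eigenspaces (justified by the coprimality of the factors $(t-d_i)^{n_i}$ and Cayley--Hamilton, all over $\Q$ since the $d_i$ are natural numbers), and then upper-triangularize each restriction $M|_{V_i} = d_iI + N_i$ via the flag of kernels of the nilpotent part $N_i$. The dimension check $\dim V_i = n_i$ and the remark that rationality of the eigenvalues keeps everything over $\Q$ are the right points to highlight, and the only cosmetic nit is that ``concatenating the chosen bases in any order'' should be read as ``in any order of the blocks $V_i$, keeping each chosen basis contiguous and internally ordered,'' which is clearly what you intend.
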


We use this result to reduce the study of the powers of a matrix
to the study of the powers of its blocks.
We move to \cref{lemma:pumpMatrixId} that shows that these blocks are easy to handle,
using the fact that the diagonal entries of an individual block share the same value.

\begin{lemma}\label{lemma:pumpMatrixId}
    Let $B$ be an $m \times m$ upper triangular matrix
    whose diagonal entries all have the same value
    $d \in \mathbb{N}$,
    and let $\vec{x},\vec{y} \in \mathbb{Q}^m$.
    There exists a polynomial $p$
    such that
    \[
    \vec{x}^T \cdot B^n \cdot \vec{y}=
    d^n \cdot p(n)
    \textup{ for all }
    n \geq m.
    \]
    Moreover, if $d > 0$ and $p$ is a constant polynomial then
    \[
    \vec{x}^T \cdot B^n \cdot \vec{y}=
    d^n \cdot \vec{x}^T \cdot \vec{y}
     \textup{ for all }
    n \geq 0.
    \]
\end{lemma}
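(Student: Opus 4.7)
The plan is to decompose $B$ as $B = dI + N$, where $N := B - dI$ is strictly upper triangular (all of its diagonal entries are zero), and therefore nilpotent with $N^m = 0$. Since $dI$ and $N$ commute, the binomial theorem gives
\[
B^n = (dI + N)^n = \sum_{k=0}^{n} \binom{n}{k} d^{n-k} N^k,
\]
and the nilpotency of $N$ truncates this sum at $k = m-1$ as soon as $n \geq m-1$. Multiplying by $\vec{x}^T$ on the left and $\vec{y}$ on the right yields
\[
\vec{x}^T \cdot B^n \cdot \vec{y}
= \sum_{k=0}^{m-1} \binom{n}{k} d^{n-k} \bigl(\vec{x}^T N^k \vec{y}\bigr).
\]

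When $d > 0$, I would factor out $d^n$ and set
\[
p(n) := \sum_{k=0}^{m-1} \binom{n}{k} d^{-k} \bigl(\vec{x}^T N^k \vec{y}\bigr),
\]
which is a polynomial in $n$ of degree at most $m-1$, so the first claim is established for $n \geq m$. The degenerate case $d = 0$ simply reduces to $B^n = N^n = 0$ for $n \geq m$, so the identity holds with $p \equiv 0$ (using the convention $0^n = 0$ for $n \geq 1$).

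For the ``moreover'' part, I would exploit the fact that the polynomials $\binom{n}{0}, \binom{n}{1}, \dots, \binom{n}{m-1}$ form a basis of the rational polynomials of degree at most $m-1$. Hence $p$ is a constant polynomial if and only if the coefficient of $\binom{n}{k}$ vanishes for every $k \geq 1$, i.e.\ $\vec{x}^T N^k \vec{y} = 0$ for $1 \leq k \leq m-1$. Under this hypothesis $p(n) = \vec{x}^T N^0 \vec{y} = \vec{x}^T \vec{y}$, and the very same truncated binomial expansion shows that
\[
\vec{x}^T \cdot B^n \cdot \vec{y}
= \binom{n}{0} d^n \bigl(\vec{x}^T \vec{y}\bigr) = d^n \cdot \vec{x}^T \vec{y}
\]
already for every $n \geq 0$, because all other summands are zero.

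The argument is essentially a direct calculation, so there is no real obstacle; the only point requiring a bit of care is verifying that ``$p$ constant'' forces $\vec{x}^T N^k \vec{y}=0$ for all $k \geq 1$, which is where the linear independence of the binomial basis is used.
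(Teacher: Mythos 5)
Your proof is correct and follows essentially the same route as the paper's: decompose $B = d\cdot\id + N$ with $N$ nilpotent, apply the binomial theorem to get the truncated sum, treat $d=0$ separately, and for the "moreover" part observe that constancy of $p$ forces $\vec{x}^T N^k \vec{y} = 0$ for $k \geq 1$, after which the same expansion gives the clean identity for all $n \geq 0$. The only cosmetic difference is that you argue via linear independence of the binomial-coefficient basis while the paper reads off the degree of each summand, but these are the same observation.
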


\begin{proof}[Proof of Lemma~\ref{lemma:pumpMatrixId}]
Let $B$ be an upper triangular $m \times m$ matrix
whose diagonal entries all have the same value $d \in \mathbb{N}$.
Remark that in the specific case where $d=0$
the matrix $B$ is nilpotent ($B^m = 0$),
which immediately implies the desired statement:
for all vectors $\vec{x},\vec{y} \in \mathbb{Q}^m$
we have $\vec{x}^T \cdot B^n \cdot \vec{y}= 0$ for all $n \geq m$,
hence every polynomial $p$ fits.
We now suppose that $d>0$.
We decompose $B$ as the sum
of the matrix $d\cdot \id$ obtained by multiplying the identity by $d$,
and the remainder $B_0= B - d\cdot \id$.
This decomposition has the following properties:
\begin{itemize}
\item The two components commute multiplicatively:
$(d \cdot \id) \cdot B_0 = d B_0 = B_0 \cdot (d \cdot \id)$;
\item $B_0$ is a
strictly upper triangular matrix, and it is nilpotent: $B_0^m = 0$.
\end{itemize}
This allows us to give a nice expression for the value
of the powers of $B$:
\[
B^n
= \sum_{i=0}^n \binom{n}{i}(d \cdot \id)^{n-i} \cdot B_0^i
= \sum_{i=0}^{m'} \binom{n}{i}d^{n-i} B_0^i
= d^n \cdot \sum_{i=0}^{m'} \binom{n}{i} \cdot
\frac{B_0^i}{d^i}
\textup{ with }
m' = \min(n,m).
\]
Therefore, for every pair of vectors
$\vec{x},\vec{y} \in \mathbb{Q}^m$
we get
\begin{equation}\label{equation:power}
\vec{x}^T \cdot B^n \cdot \vec{y} =
d^n \cdot \underbrace{\sum_{i=0}^{m'} \binom{n}{i}
\cdot \frac{\vec{x}^T \cdot B_0^i \cdot \vec{y}}{d^i}}_{p(n)}
\textup{ with }
m' = \min(n,m).
\end{equation}
As indicated by the underbrace,
we let $p$ denote the sum in Equation~(\ref{equation:power}) in the case where $m' = m$.
To conclude the proof, it remains to show that the
expression $p$ is a polynomial in~$n$,
and that if $p$ is of degree $0$
then $\vec{x}^T \cdot B^n \cdot \vec{y}=
    d^n \cdot \vec{x}^T \cdot \vec{y}
     \textup{ for all }
    n \geq 0$.
Let us focus on the summands composing $p$:
For every $0 \leq i \leq m$ we have
\[
\binom{n}{i} \cdot \frac{\vec{x}^T \cdot B_0^i \cdot \vec{y}}{d^i} =
\underbrace{\frac{\vec{x}^T \cdot B_0^i \cdot \vec{y}}{d^i
\cdot i (i-1) (i-2) \cdot \ldots \cdot 2 \cdot 1}}_{\in \mathbb{Q}} \cdot
n(n-1)(n-2) \ldots (n-i+1).
\]
This expression is a polynomial of degree $i$
if $\vec{x}^T \cdot B_0^i \cdot \vec{y} \neq 0$,
and it is constantly zero if $\vec{x}^T \cdot B_0^i \cdot \vec{y} = 0$.
Since $p(n)$ is equal to the sum of all these expressions,
we get that $p(n)$ is a polynomial whose degree is equal to the largest $i$ satisfying
$\vec{x}^T \cdot B_0^i \cdot \vec{y} \neq 0$ (or $p(n)$ is constantly zero if no such $i$ exists).
In particular, if $p$ is a constant polynomial
we get that $\vec{x}^T \cdot B_0^i \cdot \vec{y} = 0$ for every $1 \leq i \leq m$, thus Equation~(\ref{equation:power})
yields the desired expression:
\[
\vec{x}^T \cdot B^n \cdot \vec{y}=
    d^n \cdot \binom{n}{0} \cdot \vec{x}^T \cdot B_0^0 \cdot \vec{y}=
    d^n \cdot \vec{x}^T \cdot \vec{y}
     \textup{ for all }
    n \geq 0.
    \qedhere
\]
\end{proof}

\begin{proof}[Proof of Lemma~\ref{lemma:pumpMatrixS}]
Let $M$ be an $m \times m$ upper triangular matrix with diagonal entries $\{d_1,d_2,\ldots,d_n\} \subseteq \mathbb{N}$.
We begin by transforming $M$ into a block diagonal matrix according to Lemma~\ref{lemma:blockDecomp}:
$P^{-1} M P = B_1 \oplus B_2 \oplus \cdots \oplus B_k$.
Since $M^n = P \cdot (B_1^n\oplus B_2^n \oplus \cdots \oplus B_k^n) \cdot P^{-1}$,
we get
\begin{equation}\label{eqn:first}
    \vec{x}^T \cdot M^n \cdot \vec{y}\underrel{\ref{eqn:first}.1}{=}
    \sum_{i=1}^k \vec{x}_i^T \cdot B_i^n \cdot \vec{y}_i\underrel{\ref{eqn:first}.2}{=}
    \sum_{i=1}^k d_i^n \cdot p_i(n) \textup{ for all } n \geq m, \textup{ where: }
\end{equation}
\begin{description}
\item[\ref{eqn:first}.1]
The vectors $\vec{x}_1^T,\vec{x}_2^T, \ldots, \vec{x}_k^T$, respectively $\vec{y}_1,\vec{y}_2, \ldots, \vec{y}_k$,
are obtained by cutting $\vec{x}^T \cdot P$, respectively $P^{-1} \cdot \vec{y}$,
into parts whose sizes fit the blocks $B_1,B_2,\ldots,B_k$;
\item[\ref{eqn:first}.2]
The polynomials $p_1,p_2,\ldots,p_n$ are obtained by applying Lemma~\ref{lemma:pumpMatrixId} to each block.
\end{description}
This concludes the proof of Equation~(\ref{equation:sum}) of Lemma~\ref{lemma:pumpMatrixS}.

In order to prove the second part, we now suppose that the matrix $M$ is invertible.
Since $M$ is an upper triangular matrix, this implies that none of its diagonal entries $d_i$ equals zero.
Therefore, if on top of that we suppose that $p_1,p_2,\ldots,p_k$ are all constant polynomials,
and that only $p_j$ is not constantly $0$,
we can apply the second part of Lemma~\ref{lemma:pumpMatrixId} to get:
\begin{description}
\item[\ref{eqn:sec}.1]
$\vec{x}_j^T \cdot B_j^n \cdot \vec{y}_j = d_j^n \vec{x}_i^T \cdot \vec{y}_i$ for all $n \in \mathbb{N}$.;
\item[\ref{eqn:sec}.2]
For every $i \neq j$, $\vec{x}_i^T \cdot B_i^n \cdot \vec{y}_i = 0$
for all $n \in \mathbb{N}$ (thus in particular $\vec{x}_i^T \cdot \vec{y}_i = 0$).
\end{description}
We use these properties to refine
Equation~\ref{eqn:first} into Equation~(\ref{equation:invertible}) of Lemma~\ref{lemma:pumpMatrixS}:
\begin{equation}\label{eqn:sec}
    \vec{x}^T \cdot M^n \cdot \vec{y}=
    \sum_{i=1}^k \vec{x}_i^T \cdot B_i^n \cdot \vec{y}_i\underrel{\ref{eqn:sec}.1-\ref{eqn:sec}.2}{=}
    d_j^n \cdot \vec{x}_j^T \cdot \vec{y}_j\underrel{\ref{eqn:sec}.2}{=}
    \sum_{i=1}^k d_j^n \cdot \vec{x}_i^T \cdot \vec{y}_i=
    d_j^n \cdot \vec{x}^T \cdot \vec{y} \textup{ for all } n \in \mathbb{N}.\qedhere
\end{equation}
\end{proof}
 
\end{toappendix}

It is standard that a sequence $\vec{x}^T \cdot M^n \cdot \vec{y}$ forms a linear recurrence sequence which admits a closed form like the one given in~\cref{equation:sum}.
Thus most of \cref{lemma:pumpMatrixS} can be derived from standard results on linear recursive sequences (see \eg~\cite[Section 2]{halava2005skolem}). We prove it in \cref{subsection:technical} to get the precise form required in~\cref{equation:invertible}. It is nontrivial as it holds for $n \ge 0$ (not just $n$ big enough). This will be crucial later in the proofs.
Overall, \cref{lemma:pumpMatrixS} has two purposes:

\cref{equation:sum}
expresses the behaviours of $\A$ over families of the form
$(uv^nw)_{n \geq |\A|}$.
This allows us to define the notion of \emph{pumpability}
(resp. {blind pumpability})
by forbidding \emph{bad} behaviours
(the ones that match no unambiguous automaton).
We show that this is a necessary criterion
for unambiguisability (resp. determinisability)
(see \cref{prop:UnambIsPump}).

\cref{equation:invertible}
allows us to establish pumpability (resp. {blind pumpability})
is sufficient to ensure unambiguisability
(resp. determinisability) for polynomially-ambiguous WA (see \cref{prop:pumpIsUnamb}).

One of the reasons we restrict ourselves 
to p-triangular matrices in \cref{lemma:pumpMatrixS}
(and further definitions) is to avoid
dealing with complex numbers.
For a general matrix $M$ an expression similar to \cref{equation:sum} also holds,
with the $d_i$ being \emph{eigenvalues} of $M$
(which happen to be the diagonal entries
if $M$ is p-triangular).
Eigenvalues of a matrix over $\mathbb{N}$
do not have to be rational numbers,
which make it more difficult to distinguish good
behaviours from bad.

\subparagraph{Excluding bad behaviours}
As a direct consequence of Equation~(\ref{equation:sum}),
we get that for every $u,v,w \in \Sigma^*$ such that $M(v)$ is a p-triangular matrix,
\[
    \A(uv^{n}w) = \sum_{i=1}^k d_i^{n} \cdot p_i(n)
    \textup{ for every } n \geq |\A|,
\]
where $\{d_1,d_2,\ldots,d_n\}$ is the set of diagonal entries of $M(v)$,
and $p_1,p_2,\ldots,p_n$ are polynomials.
We show that if $\A$ is unambiguisable,
then this expression
needs to collapse to some simple periodic expression
for every choice of $u,v,w$.
We formalise this through the notions of \emph{pumpable automata}
and \emph{blindly pumpable automata}
(in the latter, the pumping does not depend on the suffix,
which, as we will show, is required for determinisability).

\begin{definition}\label{def:pumpable}
    A weighted automaton $\A$ 
    is \emph{pumpable}
    if for all $u,v,w \in \Sigma^*$ such that $M(v)$
    is a p-triangular matrix
    there is an entry $d$ of the diagonal of $M(v)$ 
    satisfying
    \[
    \A(uv^{|\A| + n}w) = d^{n} \cdot \A(uv^{|\A|}w)
    \textup{ for every }
    n \in \mathbb{N}.
    \]
    Moreover, we say that $\A$ is \emph{blindly pumpable}
    if the entry $d$ does not depend on the suffix $w$,
    that is, for all $u,v \in \Sigma^*$ such that $M(v)$
    is a p-triangular matrix
    there is an entry $d$ of the diagonal of $M(v)$ 
    satisfying
    \[
    \A(uv^{|\A| + n}w) = d^{n} \cdot \A(uv^{|\A|}w)
    \textup{ for every } w \in \Sigma^*
    \textup{ and }
    n \in \mathbb{N}.
    \]
\end{definition}
The interest of this definition is reflected by Proposition~\ref{prop:UnambIsPump}, proved in Section~\ref{section:BAD}.

\begin{restatable}{proposition}{UnambIsPump}\label{prop:UnambIsPump}
Every unambiguisable weighted automaton is pumpable,
and every determinisable weighted automaton is blindly pumpable.
\end{restatable}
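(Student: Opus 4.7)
The plan is to apply \cref{lemma:pumpMatrixS} to express $\A(uv^nw)$ for $n \geq |\A|$ as a sum of geometric-polynomial terms $d_i^n p_i(n)$ in the diagonal entries $d_1,\dots,d_k$ of $M(v)$, and then exploit the existence of an equivalent unambiguous (respectively deterministic) automaton to force this sum to collapse to a single term with a constant polynomial. This collapse immediately yields pumpability (respectively blind pumpability) with $d$ equal to the surviving diagonal entry.

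For the pumpability half, I would fix an unambiguous automaton $\U$ equivalent to $\A$ and words $u,v,w$ with $M(v)$ p-triangular, so that \cref{lemma:pumpMatrixS} gives polynomials $p_1,\dots,p_k$ with
\[
\A(uv^nw) = \sum_{i=1}^k d_i^n\, p_i(n) \quad \text{for every } n \geq |\A|.
\]
The first step is a structural analysis of $\U(uv^nw)$. Since by \cref{remark:monoid} the structures $\overline{M_\U}(v^n)$ lie in a finite monoid, I can choose $N_0 \geq |\A|$ and a period $p \geq 1$ such that $\overline{M_\U}(v^{N_0+r+kp})$ is independent of $k$ for each residue $r \in \{0,\dots,p-1\}$. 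Fixing such an $r$, I would exploit unambiguity of $\U$ to argue that the unique accepting run on $uv^{N_0+r+kp}w$, whenever it exists, decomposes into a fixed prefix path reading $u$, a uniquely determined cycle on $v^p$ traversed $k$ times at a uniquely determined state, and a fixed suffix path reading $w$. Consequently $\U(uv^{N_0+r+kp}w) = c_r \cdot \lambda_r^k$ for some $c_r,\lambda_r \in \Q$.

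The second step combines this geometric shape with the identity $\U \equiv \A$: substituting $n = N_0+r+kp$ into the formula above turns it into a polynomial-exponential identity in $k$ whose right-hand side is the pure geometric $c_r\lambda_r^k$. Uniqueness of the decomposition of a linear recurrence sequence into exponential-polynomial terms (using that the $d_i$ are distinct non-negative integers, thanks to \cref{sec:assumptions}) then forces every $p_i$ to be constant, and at most one $d_i$ to contribute per residue class. A final short argument invokes non-negativity of the $d_i$ once more to conclude that the same index $j$ contributes across all residues: mixing distinct geometric rates across residue classes would reintroduce negative-base exponentials via the standard $(-1)^n$-type Fourier identities, contradicting non-negativity. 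This leaves $\A(uv^nw) = c \cdot d_j^n$ uniformly for $n \geq |\A|$, giving the pumping rate $d = d_j$.

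For blind pumpability from determinisability the plan is identical but strengthened using determinism: if $\D$ is a deterministic automaton equivalent to $\A$, the state reached after reading $uv^n$ is a function of $u$ and $n$ alone, so the cycle state on $v^p$ and its weight $\lambda$ depend only on $u,v$, which propagates to make the extracted rate $d$ independent of the suffix $w$. The hardest step, I expect, is the first one of the pumpability argument: carefully using unambiguity to show that the unique accepting run of $\U$ on $uv^{N_0+r+kp}w$ truly repeats a single cycle whose weight scales as $\lambda^k$, rather than concealing a more intricate varying structure. A secondary subtlety is the cross-residue collapse, which relies crucially on the non-negativity assumption from \cref{sec:assumptions} to rule out cancellation artefacts between distinct geometric components.
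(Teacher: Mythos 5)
Your high-level strategy matches the paper's: plug the closed form from \cref{lemma:pumpMatrixS} for $\A(uv^n w)$, $n\ge |\A|$, against the periodic constraint imposed by an equivalent unambiguous (resp.\ deterministic) automaton, and force the exponential-polynomial sum to collapse to a single pure geometric term. The paper isolates the periodic constraint as a standalone lemma (\cref{lemma:pumpUnambiguous}), then compares it directly with Equation~(\ref{equation:sum}), concluding via an asymptotic-dominance argument that each $p_i$ is constant and at most one term survives.

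The gap is in your first step, and you flag it yourself. You want to derive the geometric shape $\U(uv^{N_0+r+kp}w)=c_r\lambda_r^k$ by choosing $N_0,p$ so that $\overline{M_\U}(v^{N_0+r+kp})$ stabilises, and then asserting that the unique accepting run ``decomposes into a fixed prefix path, a uniquely determined cycle on $v^p$ traversed $k$ times at a uniquely determined state, and a fixed suffix path.'' Stabilisation of the Boolean structure only controls which state pairs are connected by some $v$-power path; it does not guarantee that the \emph{unique accepting} run on $uv^{N_0+r+kp}w$ passes through a fixed ``anchor'' state after $uv^{N_0+r}$ and loops on a fixed $v^p$-cycle there for all $k$. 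To get this you would still need an argument pumping a single run and then invoking unambiguity to identify the pumped run with the unique run --- which is exactly what \cref{lemma:pumpUnambiguous} does, more simply, by extracting one cycle from one run on $uv^m w$ via pigeonhole and pumping it. Also note that in applying \cref{remark:monoid} to $\U$ you are implicitly assuming $\U$ itself has non-negative transitions, which is not automatic; the direct pumping argument sidesteps this.

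Your ``cross-residue collapse'' step is then unnecessary. Once you have forced each $p_i$ to be constant (which you can do on a single arithmetic progression, since a polynomial that is constant on an arithmetic progression is constant), the identity $\A(uv^n w)=\sum_i d_i^n p_i$ holds for all $n\ge|\A|$ with fixed constants $p_i$, and at most one non-zero term is compatible with a single geometric rate on any one progression; the $p_i$ are global objects, not residue-dependent, so the collapse is already uniform. The paper never encounters this issue because it simply fixes the one arithmetic progression supplied by \cref{lemma:pumpUnambiguous}. For the blind part, your determinism argument (the state after $uv^n$, hence the cycle weight, depends only on $u,v$) is fine in spirit, but it again rests on the unresolved run-decomposition claim; the paper instead first establishes pumpability and then reaches a contradiction by observing that two distinct rates $d_1\ne d_2$ for suffixes $w_1,w_2$ cannot both be matched by the single rate that a deterministic automaton assigns to the pair $(u,v)$.
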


\subparagraph{Taking advantage of good behaviours}
Unfortunately, the converse statement
of Proposition~\ref{prop:UnambIsPump}
does not hold in general, because of the following limitations:
\begin{enumerate}
    \item Pumpability refers to p-triangular matrices,
    which do not appear in some automata;
    \item Pumpability only guarantees a periodic behaviour
    over the words where some factor $v$ is pumped several times:
    How do we ensure that this nice behaviour extends to
    words where such a repetition never happens,
    for instance square-free words?
\end{enumerate}
To overcome these limitations, we %
restrict ourselves to \emph{polynomially-ambiguous} automata.
In this setting matrices with an idempotent structure are p-triangular matrices up to a rearrangement of the states (Lemma~\ref{lemma:polyAmbTriangular}),
and the presence of such matrices is guaranteed by Ramsey's theorem (Lemma~\ref{lemma:Ramsey}).
On top of this, we remark that Equation~(\ref{equation:invertible}) in Lemma~\ref{lemma:pumpMatrixS} would counteract the second limitation, but the periodic behaviour starts
from the first appearance only for invertible matrices. The restriction to polynomially-ambiguous WA further allows us to show that sufficiently long words contain invertible-like idempotent entries in which Equation~(\ref{equation:invertible}) can be applied. 

Thus, restricting ourselves to polynomially-ambiguous automata
allows us to get rid of the limitations preventing
the converse of Proposition~\ref{prop:UnambIsPump}. We obtain the following result,
proved in Section~\ref{section:GOOD}:

\begin{restatable}{proposition}{pumpIsUnamb}\label{prop:pumpIsUnamb}
Let $\A$ be a polynomially-ambiguous automaton.
If $\A$ is pumpable then it is unambiguisable,
and if $\A$ is blindly pumpable then it is determinisable.
\end{restatable}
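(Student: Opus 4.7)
To prove \cref{prop:pumpIsUnamb}, the plan is a buffer-and-depump construction: I would build an equivalent unambiguous (resp. deterministic) automaton which reads the input left-to-right, buffers a bounded suffix of the recent input in its finite control, and each time the buffer exceeds a threshold removes one copy of a suitable idempotent infix $v$ while multiplying the current run weight by the scalar $d$ granted by (blind) pumpability.

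First, applying \cref{lemma:Ramsey} and \cref{lemma:polyAmbTriangular} I would fix a constant $R$ such that every word of length at least $R$ admits a factorisation with an infix $v$ whose $M(v)$ has idempotent structure and is therefore p-triangular. The state space of the new automaton encodes a buffer $b \in \Sigma^{\le R}$ together with the auxiliary information needed to recover the sign of the accumulated weight (as in the proof of \cref{proposition:twin}, since negative entries in $I$, $F$, $M$ must be tracked in states). On reading a letter $a$, the automaton appends $a$ to $b$; if the buffer now has length more than $R$, it picks a canonical idempotent infix $v$, shortens the buffer by removing one occurrence of $v$, and multiplies the transition weight by the corresponding $d$. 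At the end of the input, the residual buffer (of length $\le R$) is evaluated by running $\A$ directly.

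The critical correctness statement is that this single-step depumping is sound, i.e.\ $\A(x v^{k+1} y) = d \cdot \A(x v^k y)$ for every $k \ge 0$ and every continuation $y$. Pumpability only grants this relation for $k \ge |\A|$, so the delicate step is to upgrade it to $k = 0$. Here I would invoke \cref{equation:invertible} of \cref{lemma:pumpMatrixS}, which gives the clean identity from the very first power whenever the relevant p-triangular matrix is invertible and its closed form collapses to a single exponential with constant coefficient. I would therefore refine the choice of $R$ so that every long enough word contains such an invertible-like idempotent infix: polynomial ambiguity is essential here, as it forbids nonconstant polynomial coefficients in the closed form of $\A(uv^n w)$ and rules out the coexistence of two distinct dominant eigenvalues in the relevant blocks.

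Finally, in the blindly pumpable case $d$ depends only on the prefix and $v$, so the depumping step is deterministic, and so is the resulting automaton. In the merely pumpable case $d$ may depend on the suffix, and I would build an unambiguous automaton by branching, at each depumping step, over the diagonal entries of $M(v)$; the final evaluation on the residual buffer selects the unique branch consistent with the actual suffix, preserving unambiguity thanks to the fact that exactly one choice of $d$ yields the correct value. The main obstacle I anticipate is exactly the passage from the asymptotic pumpability relation ($k \ge |\A|$) to its base case ($k = 0$): establishing that sufficiently long words always provide an invertible-like idempotent infix satisfying the hypotheses of \cref{equation:invertible} is where polynomial ambiguity must be exploited carefully, and where the bulk of the technical work lies.
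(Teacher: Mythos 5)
Your plan for the blindly pumpable (deterministic) direction is essentially the paper's: buffer words up to a Ramsey-type threshold $\ramsey$, and on overflow remove an idempotent infix while emitting the corresponding diagonal weight $d$; the passage from the asymptotic relation ($k \ge |\A|$) to the base case ($k=0$) is indeed handled via \cref{equation:invertible} after first replacing the idempotent by an invertible matrix (this is \cref{lem:tower}). Two small points: the sign-tracking machinery you import from \cref{proposition:twin} is unnecessary here, since the non-negative-transitions assumption makes every depumped $d$ non-negative and the final vector $\A(u)$ on the residual buffer carries whatever sign is needed; and the argument that long words contain an ``invertible-like'' idempotent is not just a refinement of $\ramsey$ but requires a separate inductive rank argument, which is the technical content of \cref{lem:tower}.

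For the merely pumpable (unambiguous) direction, your proposal has a genuine gap. You propose to branch at each depump step over the candidate diagonal entries $d$ and then have ``the final evaluation on the residual buffer select the unique branch consistent with the actual suffix.'' This cannot work with a finite state space: after $k$ depump steps there are up to $m^k$ guess sequences, and runs that made different guesses can land in the same buffer state while carrying different accumulated weights. A weighted automaton sums over all such runs, so the wrong branches contribute spurious terms unless the automaton can recognize and kill them --- but detecting ``which $d$ was correct'' requires comparing an accumulated rational weight against $\A$ of the residual, which a finite automaton cannot do. The paper explicitly flags this as an obstacle in its conclusion: a direct construction of the unambiguous automaton would have to track all nonzero runs, and while an unambiguous automaton has boundedly many, the size needed is not bounded by the paper's techniques. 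The paper instead proves unambiguisability non-constructively: it uses the depumping lemma to bound, by induction on word length, the prime divisors of $\A(u)$ by a fixed $N$, and then invokes Bell and Smertnig's characterisation (\cref{theorem:Bell}) that finiteness of the set of prime divisors of the image is equivalent to unambiguisability. You should replace your branching construction with this prime-divisor argument for the first half of the proposition.
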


\subparagraph{Deciding Pumpability}
With Propositions~\ref{prop:UnambIsPump} and~\ref{prop:pumpIsUnamb},
we have identified a class of weighted automata
for which pumpability, respectively blind pumpability,
characterises unambiguisability, respectively determinisability.
We finally show in Section~\ref{section:decision}
that we can decide the two former notions
in polynomial space:

\begin{restatable}{proposition}{propPumpIsDecidable}\label{prop:pumpIsDecidable}
    We can decide in polynomial space
    whether a given weighted automaton is pumpable,
    respectively blindly pumpable.
\end{restatable}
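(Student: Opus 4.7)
We reduce both pumpability and blind pumpability to the zeroness of a weighted automaton of size exponential in $|\A|$. Since zeroness of WAs is in NC$^2$~\cite{Tzeng96,kiefer13} and an NC$^2$ computation on an input of size $2^{\mathrm{poly}(|\A|)}$ can be carried out in polynomial space in $|\A|$, and since the transitions of our exponential-size automaton will be computable on the fly in polynomial space, this yields the PSPACE upper bound.

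\textbf{Removing the existential on $d$.} The key conceptual step is to eliminate the existential ``$\exists d \in \mathrm{diag}(M(v))$'' nested inside the universal quantifier of \cref{def:pumpable}. Fix a p-triangular Boolean structure $s$ and consider triples $(u,v,w)$ with $\overline{M(v)} = s$. By \cref{equation:sum} of \cref{lemma:pumpMatrixS}, the sequence $a_n := \A(uv^{|\A|+n}w)$ is a linear recurrence in $n$ whose characteristic roots lie among the diagonal entries of $M(v)$; therefore, whenever $a_n = d^n a_0$ is a non-trivial geometric sequence, $d$ is \emph{automatically} a diagonal entry of $M(v)$. Pumpability restricted to $s$ is thus equivalent to the purely universal statement that $(a_n)_{n \ge 0}$ is geometric (or identically zero) for every such triple $(u,v,w)$. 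This in turn reduces to the vanishing of the $2\times 2$ Hankel minors $a_n a_{n+2} - a_{n+1}^2$; since this minor sequence is itself a linear recurrence in $n$ of order polynomial in $|\A|$, it suffices to check the identity for $n = 0, 1, \ldots, N$ with $N$ polynomial in $|\A|$.

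\textbf{The exponential automaton, and the blind case.} For each p-triangular structure $s$ (there are at most $2^{|Q|^2}$ of them) and each $n \le N$, we construct a weighted automaton $\mathcal{D}_{s,n}$ over an alphabet extended with separators that, on inputs encoding a triple $(u,v,w)$, computes $\chi(\overline{M(v)} = s)\cdot(a_n a_{n+2} - a_{n+1}^2)$ and vanishes on malformed inputs. It is assembled from a Boolean-structure tracker for $v$ (exponential, of size $2^{|Q|^2}$) together with the closure operations of \cref{lemma:combination}, which provide the differences and products of three parallel copies of $\A$ evaluating $a_n, a_{n+1}, a_{n+2}$ on the appropriate pumped words. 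The resulting $\mathcal{D}_{s,n}$ has $2^{O(|Q|^2)}$ states with polynomial-space computable transitions, and $\A$ is pumpable iff every $\mathcal{D}_{s,n}$ is identically zero; iterating over $(s,n)$ keeps the overall algorithm in PSPACE. The blind case is handled analogously: for fixed $(u,v)$, the existence of a single $d$ uniform over all $w$ is equivalent (using coreachability of every state) to $\vec{x}^T := I^T M(u) M(v)^{|\A|}$ being zero or a left eigenvector of $M(v)$, \ie all $2\times 2$ minors of the matrix $(\vec{x}^T;\, \vec{x}^T M(v))$ vanish; each such minor is again encoded as the zeroness of a WA of the same asymptotic exponential size. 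The principal technical obstacle is the explicit design of $\mathcal{D}_{s,n}$: it must simultaneously track the Boolean structure of $v$ and evaluate several pumped copies of $\A$, while allowing its transitions to be generated in polynomial space; once this construction is in place, the remainder reduces to routine closure and zeroness arguments.
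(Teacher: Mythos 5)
Your treatment of \emph{pumpability} is sound but follows a genuinely different route from the paper. Both reduce the problem to zeroness of exponential-size weighted automata and then exploit the NC$^2$ (hence polylogarithmic-space) zeroness algorithm together with on-the-fly computation of transition weights. The difference is how the quantifier over $n$ is discharged. The paper first proves that \emph{weak pumpability} (the single step $\A(uv^{m+1}w)=d\cdot\A(uv^m w)$ for some diagonal entry $d$) already implies pumpability, an implication that relies on \cref{theorem:Bell} and \cref{prop:UnambIsPump}; then a single zeroness test on $\T\cdot\prod_i(\B_{m+1}-\B_m\C_i)$ suffices, with the product over $i$ explicitly iterating over the candidate diagonal entries $(M(v))_{ii}$. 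You instead eliminate the existential over $d$ by observing that, since $(a_n)$ with $a_n:=\A(uv^{|\A|+n}w)$ is a linear recurrence with characteristic roots among the diagonal entries of $M(v)$, any geometric ratio is automatically a diagonal entry; you then encode geometricity via the Hankel minors $a_n a_{n+2}-a_{n+1}^2$. Your bound $N=O(|\A|^2)$ on the number of indices to check is correct (the minor sequence is itself a linear recurrence of order at most $O(|\A|^2)$), and each resulting automaton stays singly exponential. This removes the reliance on Bell--Smertnig for the decision procedure, at the cost of building polynomially many, somewhat larger automata; both routes are in PSPACE. Your per-structure iteration over $s$ is unnecessary --- a single deterministic tracker accepting exactly when $\overline{M(v)}$ is p-triangular suffices, as in the paper's $\T$ --- but this is only cosmetic. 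The construction of the ``structure tracker plus pumped copies'' automaton with polynomial-space computable transitions, which you flag as the principal technical obstacle, is indeed where the real work lies; the paper carries this out in \cref{lemma:iter}.

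Your treatment of \emph{blind} pumpability, however, has a genuine gap. You claim that, for a fixed pair $(u,v)$, uniformity of $d$ over all suffixes $w$ is equivalent --- invoking coreachability --- to $\vec{x}^T := I^T M(u)M(v)^{|\A|}$ being zero or a left eigenvector of $M(v)$. But the relevant equation $\vec{x}^T M(v)M(w)F = d\,\vec{x}^T M(w)F$ needs to hold only on the subspace $W:=\mathrm{span}\{M(w)F : w\in\Sigma^*\}$, so the actual condition is that $\vec{x}^T(M(v)-d\,\id)$ annihilates $W$, not that it is the zero vector. Coreachability of every state does not make $W$ equal to $\Q^Q$: for instance, if two states have identical outgoing behaviour (e.g.\ $M(a)_{1,\cdot}=M(a)_{2,\cdot}$ for every $a$) then every $M(w)F$ satisfies $(M(w)F)_1=(M(w)F)_2$, so $W$ is a proper subspace even in a trim finitely-ambiguous automaton. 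Consequently an automaton can be blindly pumpable without $\vec{x}^T$ being a left eigenvector, and checking the $2\times2$ minors of $(\vec{x}^T;\ \vec{x}^T M(v))$ would wrongly reject it. The paper instead encodes uniformity of $d$ across suffixes as a cross-ratio identity over \emph{pairs} of suffixes: it builds automata over inputs $u\$v\$w\$w'$ and tests zeroness of $\A(uv^{m+1}w)\A(uv^m w')-\A(uv^m w)\A(uv^{m+1}w')$, which captures ``the ratio is the same for all suffixes'' with no spanning assumption. Replace your eigenvector test with a cross-ratio test of this form (or an equivalent pair-of-suffixes Hankel condition) to close the gap.
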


\noindent
Together, \cref{prop:pumpIsUnamb,prop:UnambIsPump,prop:pumpIsDecidable} entail our main theorem.

\maintheorem*

\section{Pumpable automata}\label{section:BAD}

The goal of this section is to prove~\cref{prop:UnambIsPump}:
\UnambIsPump*
We begin by establishing the possible behaviours of unambiguous automata, respectively deterministic automata,
over families of words of the form $(uv^nw)_{n \in \mathbb{N}}$ (Lemma~\ref{lemma:pumpUnambiguous}).
Then, to prove Proposition~\ref{prop:UnambIsPump},
we show that, given a weighted automaton $\A$,
for each triple $u,v,w \in \Sigma^*$,
if we take the possible values of $(\A(uv^nw))_{n \in \mathbb{N}}$
(which are described by Equation~(\ref{equation:sum}) of Lemma~\ref{lemma:pumpMatrixS})
and remove all the behaviours that
do not fit an unambiguous (resp. deterministic) weighted automaton
(which are described by Lemma~\ref{lemma:pumpUnambiguous}),
then we are left with exactly the behaviours fitting a pumpable (respectively blindly pumpable) automaton.

\begin{lemma}\label{lemma:pumpUnambiguous}
Let $\A$ be a weighted automaton.
\begin{enumerate}[nolistsep]
\item
If $\A$ is unambiguous,
then for all $u,v,w \in \Sigma^*$
there exist $m,d \in \mathbb{N}$ and $\lambda > 0$ such that
\[
\A(uv^{m + \lambda n}w) = d^n \cdot \A(uv^{m}w)
\textup{ for all }
n \in \mathbb{N}.
\]
\item
If $\A$ is deterministic,
then for all $u,v \in \Sigma^*$
there exist $m,d \in \mathbb{N}$ and $\lambda > 0$ such that
\[
\A(uv^{m + \lambda n}w) = d^n \cdot \A(uv^{m}w)
\textup{ for all }
w \in \Sigma^*
\textup{ and }
n \in \mathbb{N}.
\]
\end{enumerate}
\end{lemma}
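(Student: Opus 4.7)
The plan is to exploit uniqueness of the accepting run in both cases, using pigeonhole arguments on states visited at $v$-boundaries to extract a geometric pattern.

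Part (2) is the cleaner of the two. In a deterministic $\A$ the state $\tau_v^n(p_0)$ reached after reading $uv^n$ is determined by $n$ alone, where $p_0$ is the unique state reached after $u$ (if any) and $\tau_v$ is the partial transition function induced by $v$. Since $Q$ is finite the orbit is eventually periodic: there exist $m \in \N$, $\lambda > 0$ and a state $q$ with $\tau_v^{m + \lambda n}(p_0) = q$ for every $n \geq 0$. Letting $d$ be the weight of the unique path $q \xrightarrow{v^\lambda} q$, the unique run on $uv^{m + \lambda n}w$ factors multiplicatively as the run on $u$, the transient run on $v^m$ from $p_0$ to $q$, $n$ traversals of the cycle, and the run on $w$ from $q$; so its weight equals $d^n \cdot \A(uv^m w)$. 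Crucially $m, \lambda, d$ depend only on $u, v$ and $\A$, not on $w$, matching the statement.

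Part (1) is subtler because the post-$u$ state in the unique accepting run depends on $w$. Fix $u, v, w$ and for each $n$ record the pair $(\sigma^{(n)}, \tau^{(n)}) \in (Q \cup \{\bot\})^2$ consisting of the state visited by the unique accepting run on $uv^n w$ immediately after reading $u$ and immediately before reading $w$ (or $\bot$ if no accepting run exists). This pair is eventually periodic in $n$, so for some $m_0, \lambda_0$ it is constantly equal to some $(p, q)$ on the arithmetic progression $m_0 + \lambda_0 \N$. The case $(\bot, \bot)$ gives $\A(uv^{m_0 + \lambda_0 k}w) = 0$ and the lemma holds trivially; otherwise it remains to show that $M(v)^{m_0 + \lambda_0 k}_{p,q}$ is geometric in $k$.

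To extract geometricity I would enlarge $\lambda_0$ to a multiple for which $M(v^{\lambda_0})$ has idempotent structure; such a multiple exists by finiteness of the Boolean monoid. For $k \geq 1$ the structure $\overline{M}(v)^{m_0 + \lambda_0 k}$ is then constant in $k$, and unambiguity forces the $\N$-valued count $\overline{M}(v)^{m_0 + \lambda_0 k}_{p,q}$ to equal exactly $1$ (the full sum $\runs_\A(uv^{m_0 + \lambda_0 k}w) \leq 1$ decomposes into non-negative summands, and the one through $(p, q)$ must already be $1$). So for each $k$ there is a unique path from $p$ to $q$ on $v^{m_0 + \lambda_0 k}$. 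Applying pigeonhole again to the state visited at each position $(m_0 + \lambda_0 j)|v|$ inside this path, and further refining $\lambda_0$ by a factor of the resulting period, lets me assume that the same state $s$ is visited at every such boundary. Together with the constancy of $(p, q)$ this forces $s = q$, so the unique path decomposes as a fixed prefix on $v^{m_0}$ followed by $k$ copies of a fixed cycle on $v^{\lambda_0}$ at $q$. Multiplying the fixed prefix and suffix weights with $d^k$ --- where $d \in \N$ is the (integer) cycle weight --- gives $\A(uv^{m_0 + \lambda_0 k}w) = d^k \cdot \A(uv^{m_0}w)$, as required.

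The main obstacle is controlling the iterated pigeonholing: each refinement of $\lambda_0$ must preserve the periodic properties already established. Unambiguity is invoked decisively at each stage to force the sum over intermediate states to collapse to a single nonzero term, a step with no analogue when $\runs_\A$ is merely bounded by a constant.
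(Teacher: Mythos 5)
Your Part (2) is correct and essentially the paper's argument. Part (1), however, is substantially overcomplicated, and the argument as written contains gaps that the paper's simpler proof avoids entirely.

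The paper's proof of Part (1) pumps \emph{a single run}: pick any $m \geq |\A|$ for which an accepting run $\rho$ over $uv^m w$ exists (if none exists for any such $m$, take $m=|\A|$, $d=0$, $\lambda=1$). Since $m \geq |\A|$, by pigeonhole $\rho$ visits the same state at two distinct $v$-boundaries, giving a cycle $c$ that processes $v^\lambda$ and has weight $d$. Inserting $n$ copies of $c$ produces a run over $uv^{m+\lambda n}w$ of weight $d^n \cdot \A(uv^m w)$, and unambiguity says this is \emph{the} run. That is the whole proof; there is no need to understand the structure of the unique run over every $uv^n w$.

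Your route instead tries to pin down the pair $(\sigma^{(n)},\tau^{(n)})$ uniformly in $n$, and this introduces two unjustified steps. First, the assertion that $(\sigma^{(n)},\tau^{(n)})$ is eventually periodic in $n$ is true but not a routine pigeonhole fact: it needs the observation that the pair is determined by $\overline{M}(v)^n$ together with $\overline{M}(u),\overline{M}(w),\overline{I},\overline{F}$, and that the Boolean powers $\overline{M}(v)^n$ are eventually periodic --- an argument you do not give. Second, and more seriously, the passage from ``for each $k$ there is a unique path $p\to q$ over $v^{m_0+\lambda_0 k}$'' to ``this path is a fixed prefix on $v^{m_0}$ followed by $k$ copies of a fixed cycle on $v^{\lambda_0}$ at $q$'' is not established. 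The boundary states you examine depend on both the position $j$ and the exponent $k$; a pigeonhole within one path yields a repeated state, not the claim that the same state appears at every boundary, and refining $\lambda_0$ after the fact does not obviously preserve the uniqueness-of-path property you have just invoked. Some such structural claim is likely provable using idempotency and unambiguity, but as written the argument skips the hard part. The single-run pumping argument sidesteps all of this.
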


\begin{proof}
These results follow from a standard pumping argument:
in every automaton, long enough runs eventually
visit a cycle.
Since unambiguous weighted automata have
at most one run over each input word,
pumping such a cycle amounts to multiplying the weight of the cycle by a constant.
On top of this, the second item follows from the fact that
in a deterministic automaton the pumped constant
cannot depend on the suffix that is yet to be read.
Let us now prove these results formally.

\subparagraph{1.}
    Let $\A$ be an unambiguous automaton,
    and let $u,v,w \in \Sigma^*$.
    First, remark that if for every $m \geq |\A|$
    the automaton $\A$ has no run over $uv^{m}w$,
    then $\A(uv^{n}w) = 0$,
    and the statement is easily satisfied by picking
    $m= |\A|$, $d = 0$ and $\lambda = 1$.
    Now let us suppose that there exists $m \geq |\A|$
    such that $\A$ has a run $\rho$ over $uv^{m}w$.
    Since $|\A|$ denotes the number of states of $\A$,
    $\rho$ can be factorised into three subruns
    such that the middle part is a cycle
    which processes a non-empty power $v^{\lambda}$ of the word $v$.
    Therefore, if we denote by $d$ the weight of this middle part,
    we get that for all $n \in \mathbb{N}$
    the automaton $\A$ has a run over $uv^{m + \lambda n}w$
    which has weight $d^n \cdot \A(uv^{m}w)$.
    However, since $\A$ is unambiguous by supposition,
    this is the only run over this input word,
    hence $\A(uv^{m + \lambda n}w) = d^n \cdot \A(uv^{m}w)$,
    which concludes the proof.

 \subparagraph{2.}
    Let $\A$ be a deterministic automaton,
    and let $u,v \in \Sigma^*$.
    The proof is nearly identical to the one for the unambiguous setting:
    If for every $m \geq |\A|$
    the automaton $\A$ has no path over $uv^{n}w$
    starting in an initial state,
    then $\A(uv^{n}w) = 0$, and we are done.
    If there exists $m \geq |\A|$ such that
    $\A$ has a path over
    $uv^{m}w$ starting from the initial state,
    then this path will visit a cycle while reading a non-empty power
    $v^{\lambda}$ of $v$, hence
    for every $n \in \mathbb{N}$
    there exists a path in $\A$
    over $uv^{m + \lambda n}$ that starts in an initial state
    and has weight $d^n \cdot \A(uv^{m}w)$.
    Since $\A$ is deterministic, there cannot be any other such path,
    hence for every $w \in \Sigma^*$ we get that
    $\A(uv^{m + \lambda n}w) = d^n \cdot \A(uv^{m}w)$,
    as required.
\end{proof}

\begin{proof}[Proof of \cref{prop:UnambIsPump}]
Fix a weighted automaton $\A = (Q, \Sigma, M, I, F)$.

\subparagraph{Unambiguisability implies pumpability}
Let us suppose that $\A$ is unambiguisable, and let $\U$ be an unambiguous automaton equivalent to $\A$.
Let $u,v,w \in \Sigma^*$ such that $M(v)$ is a p-triangular matrix.
We compare the behaviour of $\A$ and $\U$ over the family of words $(uv^nw)_{n \in \mathbb{N}}$.
The behaviour of $\A$ is described by 
Equation~(\ref{equation:sum}) of Lemma~\ref{lemma:pumpMatrixS}:
\begin{equation}\label{equation:ABehaviour}
    \A(uv^{n}w) = \sum_{i=1}^k d_i^{n} \cdot p_i(n)
    \textup{ for every } n \geq |\A|,
\end{equation}
where $\{d_1,d_2,\ldots,d_n\}$ is the set of diagonal entries of $M(v)$
and $p_1,p_2,\ldots,p_n$ are polynomials.
The behaviour of $\U$ is described by  Lemma~\ref{lemma:pumpUnambiguous}:
There exist $d,m \in \mathbb{N}$ and $\lambda > 0$ such that
\[
\U(uv^{m + \lambda n}w) = d^n \cdot \U(uv^{m}w)
\textup{ for all }
n \in \mathbb{N}.
\]
Since $\A$ is equivalent to $\U$, these two behaviours need to match.
Intuitively, this implies that Equation~(\ref{equation:ABehaviour}) collapses
to a single term of the form $\A(uv^{n}w) = \delta^{n} \cdot C$ (so that it mirrors the behaviour of $\U$),
which proves the desired result as this fits the definition of a pumpability.
We now present the formal details proving this intuition.
First, remark that
\[
\left(\sum_{i=1}^k d_i^{m + \lambda n} \cdot p_i(m + \lambda n)\right)
- d^n \cdot \U(uv^{m}w) = 
\A(uv^{m + \lambda n}w)-\U(uv^{m + \lambda n}w) = 0
\textup{ for all } n \geq |\A|.
\]
Let us rewrite the left-hand side as
$\sum_{i=1}^k (d_i^\lambda)^n \cdot p'_i(n)$,
where 
\[
p'_i(n) = \left\{ 
\begin{array}{lll}
d_i^{m} p_i({m} + \lambda n) & \textup{if } d_i^\lambda \neq d;\\
d_i^{m} p_i({m} + \lambda n) - \U(uv^{{m}}w) & \textup{if } d_i^\lambda = d.
\end{array} \right.
\]
Since $\sum_{i=1}^k (d_i^\lambda)^n \cdot p'_i(n) =0$ for all $n \in \mathbb{N}$
and the $d_i$ are distinct positive integers, every $p_i'$ is constantly $0$:
Otherwise, $\sum_{i=1}^k (d_i^\lambda)^n \cdot p'_i(n)$ would behave asymptotically like
$(d_j^\lambda)^n \cdot p'_j(n)$ (contradicting the fact that it is $0$),
where $d_j$ is the largest $d_i$ such that the corresponding $p'_i$ is not constantly~$0$.
As a consequence, for every $1 \leq i \leq k$:
\begin{itemize}[nolistsep]
\item
If $d_i^\lambda \neq d$, then the fact that
$p_i'(n) = d_i^{m} p_i(m + \lambda n)$ is constantly $0$
implies that either $d_i = 0$ or $p_i(n)$ is constantly $0$;
\item
If $d_i^\lambda = d$, then the fact that 
$p_i'(n) = d_i^{m} p_i(m + \lambda n) - \U(uv^{m}w)$
is constantly $0$ implies that either $d_i = 0$ or $p_i(n)$ is constantly $\frac{\U(uv^{m}w)}{d_i^m}$.
\end{itemize}
We can update Equation~(\ref{equation:ABehaviour}) accordingly:
If there exist $d_j$ such that $d = d_j^\lambda$,
then $\A(uv^{n}w) = d_j^n \cdot \frac{\U(uv^{m}w)}{d_i^m}$ for every $n \geq |\A|$.
Otherwise, $A(uv^{n}w) = 0$ for every $n \geq |\A|$.
Remark that both cases fit the requirement of the definition of pumpability:
$\A(uv^{|\A|+n}w) = d_j^n \cdot \A(uv^{|\A|}w)$.
Since this holds for every triple $u,v,w \in \Sigma^*$ required in \cref{def:pumpable}
we deduce that $\A$ is pumpable.

\subparagraph{Determinisability implies blind pumpability}
If $\A$ is equivalent to a deterministic automaton $\D$,
in particular $\A$ is unambiguisable,
thus, as we just proved, it is pumpable.
Towards building a contradiction, suppose that $\A$ is not blindly pumpable:
there exists $u,v,w_1,w_2 \in \Sigma^*$ such that
$\A(uv^{|\A|+n}w_1) = d_1^n \cdot \A(uv^{|\A|}w_1)$
and
$\A(uv^{|\A|+n}w_2) = d_2^n \cdot \A(uv^{|\A|}w_2)$ for all $n \in \mathbb{N}$,
yet $d_1 \neq d_2$ and $\A(uv^{|\A|}w_1) \neq 0 \neq \A(uv^{|\A|}w_2)$.
We compare these expression with the behaviours of $\D$, 
described by Lemma~\ref{lemma:pumpUnambiguous}:
There exist $d,m\in \mathbb{N}$ and
$\lambda>0$ such that $\D(uv^{m + \lambda n}w_1) = d^n \cdot \D(uv^{m}w_1)$
and $\D(uv^{m + \lambda n}w_1) = d^n \cdot \D(uv^{m}w_1)$ for all $n \in \mathbb{N}$.
Since $d_1$ and $d_2$ are distinct integers,
$d$ cannot match both $d_1^\lambda$ and $d_2^\lambda$, 
hence $\A$ and $\D$ disagree on the value of either
$uv^{m + \lambda n}w_1$ or $uv^{m + \lambda n}w_2$ for $n$ sufficiently large.
This contradicts the fact that the two automata are equivalent.
\end{proof}

\section{Depumpable automata}\label{section:GOOD}

This section is devoted to the proof of~\cref{prop:pumpIsUnamb}, which we recall:
\pumpIsUnamb*
Our proof relies on the following
lemma, which shows that, once we restrict ourselves to 
polynomially-ambiguous automata,
pumpable automata are also \emph{depumpable}:
\begin{lemma}[Depumping lemma]\label{lemma:depump}
    Let $\A$ be a polynomially-ambiguous automaton.
    There exists a constant $\ramsey$
    such every word $u \in \Sigma^*$ satisfying $|u| = \ramsey$
    can be decomposed into three parts
    $u = u_1u_2u_3$ such that $u_2 \neq \epsilon$
    and
    \begin{enumerate}[nolistsep]
        \item\label{item:Pumpable}
        If $\A$ is pumpable,
        for each $v \in \Sigma^*$ there exist an entry $d$
        of the matrix $M(u_2)$ such that $\A(uv) = d \cdot \A(u_1u_3v)$;
        \item\label{item:blindlyPumpable}
        If $\A$ is blindly pumpable, there exists and entry $d$
        of the matrix $M(u_2)$ such that for all $v \in \Sigma^*$
        we have $\A(uv) = d \cdot \A(u_1u_3v)$.
    \end{enumerate}  
\end{lemma}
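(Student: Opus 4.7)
The plan is to combine a Ramsey-style factorisation of $u$ with the algebraic machinery of \cref{lemma:pumpMatrixS} and the pumpability assumption, in three steps: a Ramsey decomposition of $u$, an algebraic collapse for large $n$, and a final passage from that regime to $n=1$. The first step uses the finiteness of the monoid of Boolean matrix structures (of size at most $2^{|\A|^2}$): Ramsey's theorem on pairs yields a constant $\ramsey$ such that any word of length $\ramsey$ admits positions $0 \leq i_0 < i_1 < \cdots < i_{|\A|+1} \leq |u|$ with $\overline{M}(u[i_p{+}1..i_q]) = E$ for a common idempotent Boolean matrix $E$ and every $0 \leq p < q$. I would set $u_1 := u[1..i_0]$, $u_2 := u[i_0{+}1..i_1]$, and $u_3 := u[i_1{+}1..|u|]$. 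Then $\overline{M}(u_2) = E$ is idempotent, and by \cref{lemma:polyAmbTriangular} the matrix $M(u_2)$ is p-triangular with diagonal entries $d_1, \ldots, d_k$. The crucial additional feature for the last step is that $u_3$ begins with $|\A|$ further consecutive segments $v_2, \ldots, v_{|\A|+1}$, each having the same Boolean structure $E$.

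For a fixed $v \in \Sigma^*$, I would then analyse the sequence $a_n := \A(u_1 u_2^n u_3 v) = \vec{x}^T M(u_2)^n \vec{y}_v$, with $\vec{x}^T := I^T M(u_1)$ and $\vec{y}_v := M(u_3) M(v) F$. The first part of \cref{lemma:pumpMatrixS} yields polynomials $p_i$ such that $a_n = \sum_i d_i^n p_i(n)$ for $n \geq |\A|$, and pumpability supplies $d = d(v) \in \{d_1, \ldots, d_k\}$ (independent of $v$ in the blindly pumpable case) with $a_{|\A|+m} = d^m \cdot a_{|\A|}$ for all $m \geq 0$. Matching the two expressions via the linear independence of the sequences $n \mapsto d_i^n$ for distinct integers $d_i$ forces every $p_i$ to be constant, with only the one for $d_j = d$ non-zero. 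The degenerate case $d=0$ is handled separately: pumpability then forces $\A$ to vanish on the relevant inputs and the target identity becomes trivial.

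Finally I would conclude $\A(uv) = d \cdot \A(u_1 u_3 v)$, i.e.\ $a_1 = d \cdot a_0$, which is the main obstacle. The collapse of the previous step only constrains $a_n$ for $n \geq |\A|$; extending it down to $n \in \{0,1\}$ requires the second (invertible) part of \cref{lemma:pumpMatrixS}, which a priori demands $M(u_2)$ to act invertibly. To achieve this I would exploit the $|\A|$ trailing same-structure segments in $u_3$: writing $\vec{y}_v = M(v_2) M(v_3) \cdots M(v_{|\A|+1}) \vec{z}_v$ with $\vec{z}_v$ the tail factor, and using that the matrices $M(v_p)$ share the common upper-triangular basis provided by \cref{lemma:polyAmbTriangular} (together with the block decomposition of \cref{lemma:blockDecomp}), a product-of-strictly-upper-triangular-matrices argument shows that $|\A|$ such multiplications send every vector into the invertible part of the block decomposition of $M(u_2)$, so the nilpotent part contributes nothing to $\vec{y}_v$. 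On the invertible part the collapse from Step~2 matches the hypotheses of the second part of \cref{lemma:pumpMatrixS} exactly, yielding $a_n = d^n \cdot \vec{x}^T \vec{y}_v$ for all $n \geq 0$, and in particular $a_1 = d \cdot a_0$. Blind pumpability yields item~\ref{item:blindlyPumpable} directly, with the same $d$ for every $v$. The hardest technical step is this nilpotent-cancellation argument: matrices sharing an idempotent Boolean structure need not share generalised eigenspaces, so aligning the nilpotent subspaces of the different $M(v_p)$ requires careful use of the polynomial-ambiguity assumption both through \cref{lemma:polyAmbTriangular} and through the uniform p-triangular form it provides across the Ramsey segments.
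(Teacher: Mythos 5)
Your proposal diverges from the paper's proof in its central mechanism, and the place where it diverges is exactly where it has a gap.

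The paper does not obtain the decomposition by a straightforward Ramsey argument on the structure monoid. It first invokes Lemma~\ref{lem:tower}, whose proof reduces along the \emph{rank} of the matrices and whose key device is Claim~\ref{claim:inv}: if $\operatorname{rank}(ABA)=\operatorname{rank}(A)$ then there is an \emph{invertible} matrix $C$ with $A(BA)^nA = AC^nA$ (in the form used, $A(BA)^n = AC^n$) for all $n\ge 0$. This rank-preservation is what manufactures the invertible p-triangular matrix $P$ that can be substituted for $M(u_2)^n$ in the expression $I\,M(u_1)\,M(u_2)^n\,M(u_3v)\,F$, after which Equation~(\ref{equation:invertible}) of Lemma~\ref{lemma:pumpMatrixS} applies for all $n\ge 0$. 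Lemma~\ref{lemma:Ramsey} is then used \emph{on top of} this to find an idempotent-structure block among the segments already provided by Lemma~\ref{lem:tower}; the two tools are nested, not alternatives. The resulting $\ramsey$ is the tower function $\tower{r}(2\ell|\Sigma|)$, not a plain Ramsey bound, precisely because the rank-reduction induction is needed.

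Your substitute for this—the ``nilpotent-cancellation'' argument claiming that $M(v_2)\cdots M(v_{|\A|+1})$ sends every vector into the invertible part of the block decomposition of $M(u_2)$—does not go through, and no amount of ``careful use of polynomial ambiguity'' will fix it with only the ingredients you invoke. The matrices $M(v_p)$ and $M(u_2)$ share the same idempotent Boolean structure and are upper-triangular under the \emph{same permutation} (that is what Lemma~\ref{lemma:polyAmbTriangular} gives), but the change of basis in Lemma~\ref{lemma:blockDecomp} that separates the generalized eigenspaces is \emph{not} a permutation and depends on the numerical entries, so the block decompositions of the $M(v_p)$ and of $M(u_2)$ need not align. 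Concretely, the two upper-triangular matrices
\[
\begin{pmatrix}1&1&1\\0&0&1\\0&0&1\end{pmatrix}
\qquad\text{and}\qquad
\begin{pmatrix}1&1&1\\0&0&2\\0&0&1\end{pmatrix}
\]
have the same idempotent Boolean structure and both satisfy the antisymmetry criterion of Lemma~\ref{lemma:polyAmbTriangular}, yet their images are the distinct planes $\operatorname{span}\{(1,0,0)^T,(1,1,1)^T\}$ and $\operatorname{span}\{(1,0,0)^T,(1,2,1)^T\}$. So multiplying $\vec{z}_v$ by $|\A|$ ``same-structure'' segments can land outside the nonzero-eigenvalue part of $M(u_2)$, and the expansion $a_n=\sum_i d_i^n p_i(n)$ is then only guaranteed for $n\ge|\A|$, which is not enough to conclude $a_1 = d\cdot a_0$. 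You have correctly identified this step as the obstacle; the missing idea is the rank-preservation/invertible-replacement of Claim~\ref{claim:inv}, which is what the paper uses in place of your cancellation argument.
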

Remark that the order of the quantifiers is different in 
the two items:
In Item~\ref{item:blindlyPumpable} the entry $d$
does not depend on the suffix $v$.
The proof of Lemma~\ref{lemma:depump}
is presented in Subsection~\ref{subsection:depump}.

\begin{proof}[Proof of Proposition~\ref{prop:pumpIsUnamb}]
Let $\A$
be a polynomially-ambiguous automaton.
We prove both parts of Proposition~\ref{prop:pumpIsUnamb}
independently.
Remark that we only use the assumption
that $\A$ is polynomially-ambiguous
and pumpable (respectively blindly pumpable)
to enable the use of the depumping lemma:
In other words, extending the depumping lemma
to a wider class of automata would result in extending
Proposition~\ref{prop:pumpIsUnamb} in the same manner.

\subparagraph{Pumpability implies unambiguisability}
Let us suppose that $\A$ is pumpable.
We prove that it is unambiguisable relying on
the following characterisation
by Bell and Smertnig. 

\begin{theorem}[\cite{bell2021noncommutative}]\label{theorem:Bell}
A weighted automaton over integers $\A$ is unambiguisable if and only if
the set of prime divisors of the set 
$\{\A(w) \mid w\in\Sigma^*\}$ is finite. 
\end{theorem}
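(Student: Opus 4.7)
The forward direction is elementary: if $\A$ is equivalent to an unambiguous automaton $\U = (Q', \Sigma, M', I', F')$, then for each word $w$ there is at most one accepting run, so $\A(w) = \U(w)$ is either zero or a single product of the form $I'(q_0) \cdot M'(a_1)_{q_0, q_1} \cdots M'(a_n)_{q_{n-1}, q_n} \cdot F'(q_n)$. Every such product is an integer whose prime divisors lie in the finite set $P$ of primes dividing some entry of $I'$, $F'$, or $M'(a)$ for some $a \in \Sigma$. Hence $\{\A(w) : w \in \Sigma^*\}$ has only finitely many prime divisors.

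The converse is the content of Reutenauer's conjecture, and my plan follows Bell and Smertnig~\cite{bell2021noncommutative} by replacing the discrete combinatorics of $\A$ with the algebraic geometry of the matrix semigroup $S = \{M(w) : w \in \Sigma^*\}$. First I would form the Zariski closure $\overline{S}$ inside the matrix algebra $\mathrm{Mat}_{|Q|}(\mathbb{Q})$; this is the \emph{linear hull} of $\A$, a linear algebraic monoid. The hypothesis that $\{\A(w)\}$ has finitely many prime divisors says exactly that every nonzero value of $\A$ lies in the finitely generated group of $P$-units in $\mathbb{Q}^\times$, placing a rigid Diophantine constraint on how the weights of competing runs may combine.

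The central step is to extract an unambiguous realisation from this algebraic data. Using the Putcha--Renner structure theory of linear algebraic monoids, I would decompose $\overline{S}$ into its $\mathcal{J}$-classes and examine the unit group of each. Combining the $P$-unit constraint with the Evertse--Schlickewei--Schmidt subspace theorem on $S$-unit equations yields the key rigidity: in the run-sum expression $\A(w) = \sum_{\rho} \mathrm{weight}(\rho)$, only one summand per component of the linear hull can contribute nontrivially on a Zariski-dense set of inputs, because any genuine cancellation among $P$-units would force infinitely many new primes to appear in some $\A(w)$. This rigidity is precisely what permits an unambiguous realisation: the new automaton tracks not only the configuration of $\A$ but also a canonical contributing run determined by the algebraic position of $M(w)$ inside $\overline{S}$.

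The main obstacle will be showing that this canonical selection is both finite-state and correct on \emph{every} input, not merely on a Zariski-dense set. I would proceed by induction on the Krull dimension of the linear hull: the exceptional inputs for the top-dimensional component form a proper closed subvariety whose behaviour is governed by lower-dimensional sub-semigroups, which still satisfy the finite-prime-divisor hypothesis and so admit unambiguous realisations by the inductive hypothesis. Stitching these pieces into a single genuinely unambiguous automaton, rather than a disjoint union over algebraic strata, is where the real difficulty lies, and controlling the linear hull effectively---the same computation that makes this direction of the theorem algorithmic in~\cite{abs-2209-02260}---is the sharpest technical hurdle in the plan.
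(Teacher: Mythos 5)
The paper does not prove this theorem; it cites it directly from Bell and Smertnig~\cite{bell2021noncommutative}, so there is no in-paper proof to compare against. Your forward direction is correct and is indeed the easy half: every value of an unambiguous automaton is a single product of transition weights, so the prime divisors of $\{\A(w)\}$ are confined to the primes dividing entries of $I'$, $F'$, or $M'(a)$, a finite set.

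For the converse your outline is a reasonable high-level summary of the Bell--Smertnig strategy (pass to the linear hull, i.e.\ the Zariski closure of $\{M(w)\}$, use the structure theory of linear algebraic monoids, and invoke unit-equation finiteness from the subspace theorem to show that the $P$-unit constraint forces a single dominant run). However, it is a plan rather than a proof, and you are candid about where it is thin. Two points deserve emphasis. First, the step ``only one summand per component of the linear hull can contribute nontrivially on a Zariski-dense set'' is not a routine consequence of $S$-unit finiteness: one must control cancellation among \emph{exponentially many} runs over \emph{all} words simultaneously, and the passage from finiteness of solutions of individual unit equations to a uniform structural statement about the semigroup is where the bulk of Bell and Smertnig's work lies. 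Second, the stratification-and-stitching step you flag at the end --- turning a family of unambiguous realisations over algebraic strata into one genuinely unambiguous automaton --- is not merely ``the real difficulty'' but a separate theorem in its own right; the Bell--Smertnig argument proves that the linear hull is itself an unambiguous automaton (in a suitable sense) rather than gluing realisations from an induction on Krull dimension. As submitted, the proposal correctly identifies the ingredients but does not constitute a proof of the hard direction, which is why the paper treats this statement as an external citation rather than reproving it.
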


We show that the set of prime divisors of the weights of $\A$
is finite by building an upper bound according to the following intuition:
For each $u \in \Sigma^*$, repeated application of
the depumping lemma allows us to express $\A(u)$
as a product of the form $\A(u') \cdot \prod_{i=1}^n d_i$,
where both $\A(u')$ and the $d_i$ are bounded.
This gives us a bound for the prime divisors of $\A(u)$. 

Let us formalise this idea.
Let $N \in \mathbb{N}$ be an integer such that
for every $u \in \Sigma^*$ shorter than $\ramsey$
all the entries of $M(u)$ are smaller than $N$
and $\A(u) < N$.
We prove by induction on the length that for every word $u \in \Sigma^*$,
the prime factors of $\A(u)$ are all smaller than $N$.

The base case of the induction is immediate:
If $|u| < \ramsey$, then by definition of $N$ we get that $A(u)$ is smaller than $N$,
thus its prime factors are also smaller than $N$.

Now let us suppose that $|u| \geq \ramsey$ and that for every word $u'$ shorter than $u$
the prime factors of $\A(u')$ are smaller than $N$.
By applying Lemma \ref{lemma:depump} to the prefix of size $\ramsey$
of $u$ we obtain a decomposition $u =  u_1u_2u_3$
satisfying $0 < |u_2| \leq \ramsey$ and 
 $\A(u) = d \cdot \A(u_1u_3)$ for some entry $d$ of $M(u_2)$.
We conclude by remarking that $d < N$
since $|u_2| < \ramsey$,
and the prime factors of $\A(u_1u_3)$
are smaller than $N$ by the induction hypothesis.

\subparagraph{Blind pumpability implies determinisability}
Let us suppose that $\A$ is blindly pumpable.
We use the depumping Lemma
to build a deterministic automaton
$\D$ equivalent to $\A$.

We begin with an informal description of the behaviour of $\D$.
The states of $\D$ are the words $u \in \Sigma^*$
satisfying $|u| \leq \ramsey$.
The automaton $\D$
starts by keeping track in its state of the input word
read so far, and preserves a weight equal to $1$.
Whenever $\D$ reaches a state corresponding
to a word $u$ of length $\ramsey$,
it ``depumps'' it:
According to Lemma~\ref{lemma:depump} there exists
a decomposition $u = u_1u_2u_3$ and an entry
$d$ of $M(u)$
such that for every suffix $w \in \Sigma^*$
$\A(uw) = d \cdot \A(u_1u_3w)$.
Therefore, $\D$ can immediately produce the
weight $d$ corresponding to the infix $u_2$,
and transition towards the state
$u_1u_3$.
Finally, when $\D$ finishes reading its input,
it produces the weight $\A(v)$
corresponding to its current state $v$.
In order to prove that such an automaton
is indeed equivalent to $\A$, we now formalise this
construction.

We define the deterministic automaton
$\D = (Q', \Sigma, M', I', F')$ as follows.
First, 
we introduce the depumping function
$\textsf{cut}: \Sigma^{\ramsey} \rightarrow \Sigma^{*} \times \mathbb{Z}$
that \emph{depumps deterministically}
the words of size $\ramsey$:
it picks, for each $u \in \Sigma^*$ such that $|u| = \ramsey$,
a pair $(u',d)$
satisfying
\begin{itemize}[nolistsep]
    \item $|u'| < \ramsey$;
    \item $d$ is an entry of $M(u')$;
    \item $\A(uv) = d \cdot \A(u'v)$ for every $v \in \Sigma^*$.
\end{itemize}
Note that this function exists:
Lemma~\ref{lemma:depump} guarantees the existence of
at least one such pair $(u',d)$ for every $u \in \Sigma^{\ramsey}$. To ensure that $\D$ is deterministic  $\textsf{cut}$ can pick any fixed pair for every $u\in \Sigma^{\ramsey}$, for instance, the minimal pair according to the lexicographical order.
We now define the components of $\D$:

\begin{itemize}[nolistsep]
    \item $Q' = \{q_u \in \Sigma^* \mid |u| < \ramsey\}$;
    \item For every $u \in Q'$ and $a \in \Sigma$,
    \begin{itemize}[nolistsep]
        \item
        If $|ua| < \ramsey$ then $q_u \xrightarrow{a:1} q_{ua}$;
        \item
        If $|ua| =\ramsey$ then
        $q_{u} \xrightarrow{a:d} q_{u'}$
        where $\textsf{cut}(ua) = (u',d)$.
    \end{itemize}
    \item $I'(q_{\epsilon}) = 1$ and $I'(q_{u}) = 0$
    for every $u \neq \epsilon$;
    \item $F'(q_{u}) = \A(u)$.
\end{itemize}
The automaton $\D$ is deterministic:
There is exactly one initial state,
and for every state $q_{u}$ and letter $a$
there is exactly
one outgoing transition from $u$ labelled by $a$.
All that remains to show is that $\D(u) = \A(u)$
for every $u \in \Sigma^*$.
To this end, we prove the following claim:
\begin{claim}
The weight $x$ of the (single) path
$q_{\epsilon},q_{u_1},q_{u_2},\dots,q_{u_n}$
of $\D$ over $u$
starting from the initial state $q_\epsilon$
satisfies $\A(uw) = x \cdot \A(u_nw)$
for every $w \in \Sigma^*$.
\end{claim}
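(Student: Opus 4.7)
The plan is to prove the claim by induction on $|u|$, and then instantiate it with $w = \epsilon$ to conclude that $\D(u) = \A(u)$ for every $u \in \Sigma^*$. The base case $|u| = 0$ is immediate: the path reduces to $q_\epsilon$, so $x = 1$, $u_n = \epsilon$, and the statement collapses to $\A(w) = 1 \cdot \A(w)$.

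For the inductive step I would write $u = u' a$ with $a \in \Sigma$, let $q_\epsilon, q_{u_1}, \ldots, q_{u_{n-1}}$ be the path of $\D$ over $u'$ with accumulated weight $x'$, and let $q_{u_{n-1}} \xrightarrow{a:y} q_{u_n}$ be the final transition, so that $x = x' y$. The two branches of the definition of $\D$ then correspond exactly to two cases of the argument. If $|u_{n-1} a| < \ramsey$, then the construction sets $y = 1$ and $u_n = u_{n-1} a$, and applying the induction hypothesis to the word $aw$ gives
\[
\A(uw) = \A(u'(aw)) = x' \cdot \A(u_{n-1}(aw)) = x \cdot \A(u_n w).
\]
If $|u_{n-1} a| = \ramsey$, then $(u_n, y) = \textsf{cut}(u_{n-1} a)$, so by definition of $\textsf{cut}$ we have $\A(u_{n-1} a \cdot w) = y \cdot \A(u_n w)$ for every $w \in \Sigma^*$. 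Combining this with the induction hypothesis on $u'$ applied to the suffix $aw$ yields $\A(uw) = x' \cdot \A(u_{n-1} a w) = x' y \cdot \A(u_n w) = x \cdot \A(u_n w)$, as required.

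No step is really an obstacle: the claim is essentially bookkeeping tailored to the construction of $\D$, and the only substantive ingredient is the property of $\textsf{cut}$ used in the cutting case. Crucially, this property is precisely what the blindly pumpable clause of the depumping lemma (Lemma~\ref{lemma:depump}, Item~\ref{item:blindlyPumpable}) guarantees: the entry $d$ picked by $\textsf{cut}$ does not depend on the suffix, so a single deterministic choice remains correct under every possible continuation of the input. Once the claim is established, instantiating it at $w = \epsilon$ and using $F'(q_{u_n}) = \A(u_n)$ gives $\D(u) = x \cdot \A(u_n) = \A(u)$, which finishes the proof that the deterministic automaton $\D$ is equivalent to $\A$, and hence that blind pumpability implies determinisability.
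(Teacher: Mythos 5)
Your proof is correct and follows essentially the same route as the paper's: induction on $|u|$, splitting the inductive step according to whether the last transition of $\D$ is a bookkeeping step or a cut step, and invoking the defining property of $\textsf{cut}$ in the latter case. The only cosmetic difference is that you take $|u|=0$ as the base case and handle all words up to length $\ramsey$ uniformly in the inductive step, whereas the paper dispatches the whole range $|u|\le\ramsey$ directly as the base case.
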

Remark that, by definition of the final vector $F$,
this immediately implies that $\D(u) = \A(u)$.
We prove the claim by induction on the size of $u$.
If $|u| \leq \ramsey$
the result is immediate:
the path of $\D$ over $u$ that starts from the initial state
has weight $1$ and ends in the state $q_u$ (thus $u_n = u$).
Now suppose that $|u| > \ramsey$.
Let us denote $u = va$ with
$v \in \Sigma^*$ and $a \in \Sigma$.
Remark that the path
of $\D$ over $v$ starting from the initial state
is obtained by removing the last transition
$q_{u_{n-1}} \xrightarrow{a:d} q_{u_n}$
of the path over $u$.
Therefore, for every $w \in \Sigma^*$,
applying the induction hypothesis to $v$
yields that
$\A(uw) = \A(vaw) = \frac{x}{d} \cdot \A(u_{n-1}aw)$.
We distinguish two possibilities,
depending on the type of transition used while reading the last letter
$a$ of $u$:
\begin{itemize}
    \item
    If $|u_{n-1}a| < \ramsey$,
    then $u_n = u_{n-1}a$ and $d = 1$,
    hence
    $\A(u) =
    \frac{x}{d} \cdot \A(u_{n-1}aw) =
    x \cdot \A(u_{n}w)$;
    \item
    If $|u_{n-1}a| = \ramsey$,
    then $\textsf{cut}(u_{n-1}a) = (u_n,d)$,
    hence
    $\A(u) =
    \frac{x}{d} \cdot \A(u_{n-1}aw) =
    x \cdot \A(u_{n}w)$.
\end{itemize}
The last equality follows from the definition of $\textsf{cut}$.
\end{proof}

\begin{remark}\label{remark:deterministic}
While our decision procedure is in PSPACE, the size of the automaton $\D$, should it be constructed, depends on $\ramsey$. We will see in the next section this will be quite big (non-elementary). Our decision procedure does not need to construct $\D$.
\end{remark}

\subsection{Replacing pumpable idempotents with invertible matrices}

\newcommand{\tower}[1]{\textsf{tower}_{#1}}

To prove \cref{lemma:depump} we will need a technical lemma that allows us to find pumpable fragments and replace them with invertible matrices. Some ideas here are similar as in~\cite{BumpusHKST20}, in particular the idea behind \cref{claim:inv} is similar to \cite[Lemma 8]{BumpusHKST20}.

Let us consider the family of function
$(\tower{r})_{r \in \mathbb{N}}$
defined inductively as follows:
\[
\begin{array}{ll}
\tower{0}(x) = x &
\textup{ for all } x \in \mathbb{N};\\
\tower{r+1}(x) = x \cdot \tower{r}(x^x) &
\textup{ for all } x \in \mathbb{N}.
\end{array}
\]

\begin{lemma}\label{lem:tower}
    Let $r$ be the maximal rank of the matrices
    $\{M(a) \mid a \in \Sigma\}$.
    Then for all $\ell \geq 1$,
    every word $u^* \in \Sigma^*$
    satisfying $|u| \geq \tower{r}(2\ell \cdot |\Sigma|)$
    can be decomposed as
    $u = u_{\vdash} u_1 u_2 \ldots u_\ell u_{\dashv}$
    such that
    \begin{itemize}
        \item 
        for every $1 \leq i \leq \ell$
        the word $u_i$ is nonempty;
        \item 
        for every $1 \leq i \leq j \leq \ell$
        there exists an invertible matrix $M$ satisfying,
        for all $n \geq 0$:
        \begin{equation}\label{eq:pump}
        M(u_{\vdash} u_1 u_2 \ldots u_{i-1}
        (u_i \ldots u_{j})^n u_{j+1} \ldots u_\ell u_{\dashv})
        = 
        M(u_{\vdash} u_1 u_2 \ldots u_{i-1})
        \cdot 
        M^n
        \cdot 
        M(u_{j+1} \ldots u_\ell u_{\dashv}).
        \end{equation}
    \end{itemize}
    
\end{lemma}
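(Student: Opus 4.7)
The plan is to prove \cref{lem:tower} by induction on $r$, matching the recursive structure $\tower{r+1}(x) = x \cdot \tower{r}(x^x)$, which intertwines a pigeonhole step across $N = 2\ell|\Sigma|$ blocks with a recursive call of smaller rank on each block.

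For the base case $r = 0$, every letter matrix is the zero matrix, so any nonempty product vanishes. I would choose $u_\vdash = \epsilon$, each $u_i$ a single letter, and $u_\dashv$ the remaining (nonempty) suffix, which exists because $\tower{0}(N) = N > \ell$. Taking $M$ to be the identity, both sides of the pumping identity vanish because $u_\dashv$ appears on both sides and its matrix is zero.

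For the inductive step, I would first partition $u$ into $N$ consecutive blocks of length $\tower{r-1}(N^N)$ each. The column spaces of the prefix matrices $M(a_1 \cdots a_k)$ form a non-increasing chain of dimension at most $r$. Within a block, either the column-space dimension stays at $r$ throughout, meaning all letter matrices act as bijections on a common stable subspace $C$, or it drops strictly below $r$ at some point inside the block. In the latter case the effective dynamics beyond the drop takes place in a subspace of dimension at most $r-1$, and the inductive hypothesis applies to that block with parameter $\ell' = N^N/(2|\Sigma|)$, so that $\tower{r-1}(2\ell'|\Sigma|) = \tower{r-1}(N^N)$ matches the block length. This yields refined cut points inside a block, which combine across blocks via a pigeonhole argument: I would classify each candidate position by its prefix data---naturally, the pair consisting of the column space and the kernel of the prefix matrix, which together force inter-position matrices to preserve the kernel and to act as automorphisms of the column space---and extract $\ell+1$ same-typed positions $p_0 < p_1 < \cdots < p_\ell$ to form the decomposition $u = u_\vdash u_1 \cdots u_\ell u_\dashv$.

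For same-typed positions, the matrix $M(u_i \cdots u_j)$ restricts to an automorphism of the common column space $C$; I would extend this restriction to an invertible matrix $M$ on $\mathbb{Q}^{|Q|}$ by taking $M$ to act as the identity on a chosen complement of $C$. The pumping identity \eqref{eq:pump} then follows because $M(u_\vdash u_1 \cdots u_{i-1})$ has image contained in $C$, so $M^n$ and $M(u_i \cdots u_j)^n$ agree modulo the kernel of the prefix matrix when applied to the image of the suffix matrix, which is all that matters when sandwiched between the prefix and suffix. The hardest step will be arranging simultaneously invertibility on $C$, preservation of the kernel (so that the quotient dynamics is well-defined and the extension to an invertible matrix on all of $\mathbb{Q}^{|Q|}$ respects the identity for every power $n$), and a bounded cardinality of types; the $N^N$ factor in the tower recursion reflects the combinatorial cost of tracking the auxiliary data needed for a clean type on which pigeonhole succeeds on $N$ blocks.
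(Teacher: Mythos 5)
Your induction on the rank $r$ matches the paper's, but the inductive step takes a genuinely different route, and there are two gaps that the proposal does not close.

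Your block structure is the reverse of what the reduction requires: you cut $u$ into $N = 2\ell|\Sigma|$ blocks of length $\tower{r-1}(N^N)$, while the paper cuts into $\tower{r-1}(N^N)$ blocks of length $N$. The paper needs the latter because, when no block has full rank $r$, it treats each block as a single letter of a new alphabet $\Gamma$ with $|\Gamma| \leq |\Sigma|^N$, whose maximal letter-rank is at most $r-1$, and then invokes the induction hypothesis \emph{on the sequence of blocks}; for this one needs $\tower{r-1}(2\ell|\Gamma|)$ many blocks, not $\tower{r-1}(\cdot)$-long blocks. Relatedly, applying the induction hypothesis \emph{inside} a block whose column-space dimension drops is not justified as written: the drop concerns the rank of the prefix matrix $M(a_1\cdots a_k)$, while the individual letter matrices inside that block can still have rank $r$, so the hypothesis of the lemma (maximal letter-matrix rank at most $r-1$) fails for that sub-word, and ``effective dynamics in a subspace of dimension $r-1$'' does not by itself hand you a lower-rank instance.

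The second, more fundamental, gap is the one you flag yourself as ``the hardest step'': the pigeonhole on the pair $(\text{column space},\text{kernel})$ of the prefix matrix is not boundedly typed. The column spaces $\mathrm{im}(M(a_1\cdots a_k))$ do form a decreasing chain, hence take at most $r+1$ distinct values, but the kernels $\ker(M(a_1\cdots a_k))$ are \emph{not} nested: each step replaces $K$ by the preimage $M(a_{k+1})^{-1}(K)$, and these preimages need not stabilise, so the number of distinct kernels (hence of types) can grow with $|u|$. Equal kernels at two positions \emph{would} suffice to build the invertible matrix via the quotient-dynamics argument you sketch, so the missing ingredient is precisely the cardinality bound. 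The paper sidesteps this entirely: when some block $v_i$ has full rank $r$, it pigeonholes on a single \emph{letter} $a$ occurring $\ell+1$ times within $v_i$ and applies \cref{claim:inv} with $A = M(a)$; the hypothesis $\mathrm{rank}(ABA)=\mathrm{rank}(A)$ then holds automatically because every infix of $v_i$ has rank $r$, and no tracking of kernels along the prefix is needed.
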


The proof  will rely on the following claim.
    \begin{claim}\label{claim:inv}
        If the rank of $ABA$ is equal to the rank of $A$,
        there exists an invertible matrix $C$
        satisfying $A(BA)^n = AC^n$ for all $n \geq 0$. 
    \end{claim}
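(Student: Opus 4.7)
The plan is to reduce the claim to a finite-dimensional linear-algebra fact about the image of $A$. First I will let $V = \operatorname{Im}(A) \subseteq \mathbb{Q}^m$, so $\dim V = \operatorname{rank}(A)$, and observe that the matrix $AB$ preserves $V$, since $AB(x) = A(Bx) \in \operatorname{Im}(A) = V$ for every $x$. Denote the restriction by $\widetilde{C} := (AB)|_V \colon V \to V$. The rank hypothesis translates directly into a statement about $\widetilde{C}$: its image equals $AB(V) = AB \cdot A(\mathbb{Q}^m) = \operatorname{Im}(ABA)$, which by assumption has dimension $\operatorname{rank}(ABA) = \operatorname{rank}(A) = \dim V$. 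Hence $\widetilde{C}$ is a surjective, and therefore invertible, linear self-map of $V$.

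Next I would lift $\widetilde{C}$ to an invertible matrix $C \in \mathbb{Q}^{m \times m}$ satisfying $AC = ABA$. Pick any complement $W$ of $\ker A$, so that $\mathbb{Q}^m = \ker A \oplus W$ and the restriction $A|_W \colon W \to V$ is an isomorphism. Define $C$ to act as the identity on $\ker A$ and as the conjugate $(A|_W)^{-1} \circ \widetilde{C} \circ A|_W$ on $W$. Both blocks are invertible, so $C$ itself is invertible. The identity $AC = ABA$ then holds by cases: both sides vanish on $\ker A$, while for $w \in W$ we compute $AC(w) = \widetilde{C}(A(w)) = AB \cdot A(w) = ABA(w)$.

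From $AC = ABA$ the lemma follows by a short induction. The auxiliary commutation $A(BA)^n = (AB)^n A$ is immediate by induction on $n$ using associativity. Then $(AB)^n A = AC^n$ also by induction: the base case $n=0$ is trivial, and inductively $(AB)^{n+1} A = (AB)\bigl((AB)^n A\bigr) = (AB)(AC^n) = (ABA)C^n = (AC)C^n = AC^{n+1}$. Composing the two identities yields $A(BA)^n = AC^n$ for every $n \geq 0$, as required.

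The main obstacle is the very first step: spotting that the rank equality is equivalent to $\widetilde{C}$ being an automorphism of $V = \operatorname{Im}(A)$, via the chain $\operatorname{Im}(\widetilde{C}) = AB(V) = \operatorname{Im}(ABA)$. Once this geometric picture is in place, the lift through a chosen complement of $\ker A$ and the ensuing induction are routine bookkeeping, and together they produce exactly the invertible matrix $C$ demanded by the claim.
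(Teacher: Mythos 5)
Your proof is correct, and it actually repairs two gaps in the paper's own argument. The high-level strategy is shared — restrict attention to $V = \operatorname{Im}(A)$, show a related map is an automorphism of $V$, and lift it to an invertible $C$ — but the technical details diverge. The paper works with $BA|_V$ and asserts it is a bijection of $V$, yet $BA$ need not map $V$ into $V$ at all: with $A = \left(\begin{smallmatrix} 1 & 0 \\ 0 & 0 \end{smallmatrix}\right)$ and $B = \left(\begin{smallmatrix} 1 & 0 \\ 1 & 0 \end{smallmatrix}\right)$ the rank hypothesis holds, but $BA\cdot(1,0)^T = (1,1)^T \notin V$. Your use of $AB|_V$ is the correct self-map, since $AB(V) = \operatorname{Im}(ABA) = V$. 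The paper then extends a basis of $V$ arbitrarily and declares $C$ to fix the new basis vectors, arguing that both $A(BA)^n$ and $AC^n$ annihilate them — this implicitly assumes the new vectors lie in $\ker A$, and hence that $\operatorname{Im}(A) \oplus \ker A = \mathbb{Q}^m$, which can fail even under the rank hypothesis (e.g.\ $A = \left(\begin{smallmatrix} 0 & 1 \\ 0 & 0 \end{smallmatrix}\right)$, $B = \left(\begin{smallmatrix} 0 & 0 \\ 1 & 0 \end{smallmatrix}\right)$ has $\operatorname{Im}(A) = \ker A$); for a bad extension the paper's $C$ is not even invertible. Your construction of $C$ through a complement $W$ of $\ker A$, conjugating $\widetilde{C}$ by the isomorphism $A|_W \colon W \to V$, avoids both problems, and the identity $AC = ABA$ together with the two short inductions then yields $A(BA)^n = AC^n$ cleanly. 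In short, you found a genuinely different — and sound — lift, and the key idea was precisely the one you flagged: realising the rank hypothesis as saying $AB$ restricts to an automorphism of $\operatorname{Im}(A)$.
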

    \begin{claimproof}
    Since the rank of $A$ and $ABA$ are equal we conclude that their images are also the same. Let $V \subseteq \Q^m$ be this image. Note that $BA$ can be seen as a linear transformation on $\Q^m$, and $BA$ restricted to $V$, \ie $BA|_V$, is a bijection. Let $\vec{v}_1,\ldots,\vec{v}_s$ be a basis of $V$ and let $\vec{v}_{s+1}, \ldots, \vec{v}_m$ be such that $\vec{v}_1,\ldots,\vec{v}_m$ is a basis of $\Q^m$. 
    
    Recall that a linear transformation $\Q^m \to \Q^m$ is uniquely defined by fixing the images of a basis. Let $\vec{w}_1,\ldots, \vec{w}_s$ be the images of $\vec{v}_1,\ldots, \vec{v}_s$ when applying the linear transformation $BA$, and observe that these also span $V$. The matrix $C$ is defined by the linear transformation that maps $\vec{v}_i$ to $\vec{w}_i$ for all $1 \le i \le s$ and $\vec{v}_i$ to $\vec{v}_i$ for $s+1 \le i \le m$. 

    It is clear that $C$ is invertible as its image has $m$ independent vectors. To prove that $A(BA)^n = AC^n$ we prove that both linear transformations are defined in the same way on $\vec{v}_1,\ldots,\vec{v}_m$. Indeed, vectors $\vec{v}_i$ for $s+1 \le m$ are mapped to $\vec{0}$ by both transformations. The remaining vectors have the same image since $BA|_V = C|_V$ and $BA|_V$ is a bijection.
\end{claimproof}

\begin{proof}[Proof of \cref{lem:tower}]
    We prove the lemma by induction on $r$.
    
    If $r = 0$ the proof is straightforward:
    by definition of $r$ for every letter
    $a \in \Sigma$ the matrix $M(a)$
    has rank $0$,
    thus it is the null matrix.
    Then for every word
    $u = a_1a_2 \ldots a_{n}$
    of size $n \geq \tower{0}(2\ell \cdot |\Sigma|)
    = 2\ell \cdot |\Sigma| > \ell \cdot |\Sigma|$,
    the left-hand side of Equation~\eqref{eq:pump}
    is equal to the null matrix for every choice of $i,j$ and $n$.
    Therefore, 
    we satisfy the statement by setting
    $u_{\vdash} = \epsilon$,
    $u_i = a_i$ for every $1 \leq i \leq \ell \cdot |\Sigma|$
    and $u_{\dashv} = a_{\ell \cdot |\Sigma| + 1} \ldots a_n$.

    Now let us suppose that $r>0$ and that the lemma holds
    for $r-1$.
    Let $x = 2\ell \cdot |\Sigma|$,
    and let $u \in \Sigma^*$ be a word
    satisfying
    $|u| > \tower{r}(x) = x \tower{r-1}(x^x)$.
    We consider the decomposition
    $u = v_1v_2 \ldots v_{\tower{r-1}(x^x)}v$
    such that $|v_i| = x$
    for every $1 \leq i \leq \tower{r-1}(x^x)$.
    We distinguish two cases:
    \begin{itemize}
        \item
        Suppose that there
        exists $1 \leq i \leq \tower{r-1}(x^x)$ such that
        the rank of $M(v_i)$ is $r$.
        Then for every infix $v'$ of $v_i$
        the rank of $M(v')$ is also $r$.
        Moreover,
        since $|v_i| = x > \ell \cdot |\Sigma|$,
        there is one letter $a \in \Sigma$
        that occurs $\ell + 1$ times in $v_i$.
        We get the statement of the lemma as a direct
        consequence of Claim~\ref{claim:inv}.
        \item
        Suppose that for every $1 \leq i \leq \tower{r-1}(x^x)$
        the matrix $M(v_i)$ has a rank smaller than $r$.
        In order to apply the induction hypothesis,
        we now consider the alphabet
        $\Gamma = \{ v_i \mid 1 \leq i \leq \tower{r-1}(x^x)\}$.
        Then for every letter $a \in \Gamma$
        we have that the rank of $M(a)$ is smaller than or equal to $r-1$.
        Moreover, since the length of each $v_i$ is $x$,
        we get that $|\Gamma| \leq 
        |\Sigma|^{x}$.
        Therefore, the word
        $w = v_1v_2 \ldots v_{\tower{r-1}(x^x)} \in \Gamma^*$
        satisfies
        \[
        |w| = \tower{r-1}(x^x)
        = \tower{r-1}((2\ell \cdot |\Sigma|)^{x})
        > \tower{r-1}(2\ell \cdot |\Sigma|^{x})
        \geq \tower{r-1}(2\ell \cdot |\Gamma|).
        \]
        As a consequence,
        we can apply the induction hypothesis
        to obtain a decomposition
        of the word $w =v_1v_2 \ldots v_{\tower{r-1}(x^x)} \in \Gamma^*$,
        which can be transferred back to $u$. \qedhere
    \end{itemize}
\end{proof}

\subsection{Proof of depumping lemma (Lemma~\ref{lemma:depump})}\label{subsection:depump}
We need two technical results. The first lemma, intuitively, shows that we can find idempotents in long enough words. It is a direct consequence of~\cite[Theorems~1 and~2]{Jecker21}.\footnote{
The use of~\cite{Jecker21} is enabled thanks to
our \emph{non-negative transitions} assumption (see Remark~\ref{remark:monoid}).
}
The second lemma shows that idempotents for polynomially-ambiguous WA are p-triangular. It is proved in \cref{section:Idempotent}.
\begin{lemma}\label{lemma:Ramsey}
Given a weighted automaton $\A$: let $\ell = (3 \cdot 2^{4|\A|^2})^L$, where $L = \frac{|\A|^2 + |\A| + 2}{2}$.
Let $u = u_1 \ldots u_{\ell} \in \Sigma^+$, where $u_i \in \Sigma^+$ for all $1 \le i \le \ell$. There exist $1 \le i \le j \le \ell$
such that $M(u_i \ldots u_j)$ has idempotent structure.
\end{lemma}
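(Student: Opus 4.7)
The plan is to use the structure map $\overline{M}$ from Remark~\ref{remark:monoid} to reduce the problem from weighted automata to a purely combinatorial statement about sequences in a finite monoid. Recall that, thanks to the non-negative transitions assumption, the map $u \mapsto \overline{M}(u)$ is a monoid homomorphism from $\Sigma^*$ into the monoid $\mathcal{B}$ of Boolean matrices of dimension $|\A|$, equipped with the multiplication $\otimes$. In particular, for any $1 \le i \le j \le \ell$, we have $\overline{M(u_i u_{i+1} \ldots u_j)} = \overline{M}(u_i) \otimes \overline{M}(u_{i+1}) \otimes \cdots \otimes \overline{M}(u_j)$, so asking that $M(u_i \ldots u_j)$ has idempotent structure amounts to asking that the corresponding product of Boolean matrices is an idempotent element of $\mathcal{B}$.

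The plan is therefore to interpret the sequence $u_1, \ldots, u_\ell$ as a sequence $\overline{M}(u_1), \ldots, \overline{M}(u_\ell)$ of elements in the finite monoid $\mathcal{B}$, and then apply a quantitative Ramsey-type statement in this monoid: any sufficiently long sequence of elements of a finite monoid $S$ contains a contiguous infix whose product is idempotent. This is essentially the content of~\cite[Theorems~1 and~2]{Jecker21}, which gives a bound of the form $(3 \cdot 2|S|)^L$, where $L$ is controlled by the $\mathcal{J}$-depth of $S$ (the length of the longest chain of $\mathcal{J}$-classes).

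The key quantitative step is to plug the parameters of $\mathcal{B}$ into the bound of~\cite{Jecker21}. The monoid $\mathcal{B}$ has at most $|\mathcal{B}| \le 2^{|\A|^2}$ elements, so $2|\mathcal{B}| \le 2^{|\A|^2+1} \le 2^{4|\A|^2}$ (for any nontrivial $\A$), which yields the base $3 \cdot 2^{4|\A|^2}$ of the exponent. For the exponent $L$, one uses the standard bound on the $\mathcal{J}$-depth of the Boolean matrix monoid of dimension $|\A|$; a straightforward counting argument gives $L \le \tfrac{|\A|^2 + |\A| + 2}{2}$, matching the value in the statement.

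The main obstacle here is really only bookkeeping: the conceptual content — that a homomorphism into a finite monoid plus a Ramsey-type argument yields an idempotent infix — is classical. The work is entirely in quoting the right quantitative Ramsey result from~\cite{Jecker21} and verifying that the $\mathcal{J}$-depth and cardinality bounds for the Boolean matrix monoid of order $|\A|$ combine to give exactly the stated $\ell = (3 \cdot 2^{4|\A|^2})^L$. Once this is done, applying the result to $\overline{M}(u_1), \ldots, \overline{M}(u_\ell)$ directly yields indices $i \le j$ such that $\overline{M}(u_i \ldots u_j)$ is idempotent in $\mathcal{B}$, i.e.\ $M(u_i \ldots u_j)$ has idempotent structure, as required.
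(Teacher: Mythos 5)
Your approach matches the paper's, which gives no detailed proof either: it simply declares the lemma a direct consequence of~\cite[Theorems~1 and~2]{Jecker21}, with a footnote observing that the non-negative-transitions assumption (\cref{remark:monoid}) makes $u \mapsto \overline{M}(u)$ a monoid homomorphism into the finite monoid of Boolean matrices, so that Jecker's Ramsey theorem for finite monoids applies directly to the sequence $\overline{M}(u_1), \ldots, \overline{M}(u_\ell)$. That is exactly the reduction you carry out.

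One small caveat worth flagging: you quote the Jecker bound as $(3 \cdot 2|S|)^L$ and only arrive at the base $3 \cdot 2^{4|\A|^2}$ via the loose inequality $2^{|\A|^2+1} \le 2^{4|\A|^2}$. The very specific constants in the lemma --- base $3 \cdot 2^{4|\A|^2}$ together with exponent $L = \frac{|\A|^2 + |\A| + 2}{2}$ --- look like they should fall out of the Jecker theorem \emph{exactly} when specialised to the $|\A|\times|\A|$ Boolean matrix monoid (for instance a bound of the shape $(3|S|^4)^{d}$ with $|S| \le 2^{|\A|^2}$ and $d$ the $\mathcal{J}$-depth), rather than via slack. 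You should double-check the precise statement of~\cite{Jecker21}: if the true bound is larger than your quoted $(3 \cdot 2|S|)^L$, your chain of inequalities would be going the wrong way. This is pure bookkeeping though --- the conceptual reduction through the structure homomorphism is correct and identical to what the paper does.
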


\begin{restatable}{lemma}{polyAmbTriangular}\label{lemma:polyAmbTriangular}
Let $\A$ be a polynomially-ambiguous weighted automaton.
For every $u \in \Sigma^*$,
if $M(u)$ has an idempotent structure then it is p-triangular. 
\end{restatable}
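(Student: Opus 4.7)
My plan is to leverage the Boolean structure of $M(u)$ together with the polynomial ambiguity of $\A$ to decompose the state set so that the matrix becomes upper triangular after permutation. The strategy proceeds in three steps: extract a relation from the idempotent structure, use polynomial ambiguity to constrain its strongly connected components, and then topologically order the states.

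First, I would set $M = M(u)$ and consider its structure $\overline{M}$, which by hypothesis is idempotent in the monoid of Boolean matrices. Define a relation on $Q$ by $p \to q$ iff $\overline{M}_{p,q} = 1$; by the non-negative transitions assumption and \cref{remark:monoid}, this is equivalent to the existence of a path from $p$ to $q$ labelled $u$ in $\A$. The equation $\overline{M} = \overline{M} \otimes \overline{M}$ translates into the statement that $p \to q$ iff there exists some $k$ with $p \to k$ and $k \to q$; in particular, the relation $\to$ is transitive.

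The heart of the argument is to show that every SCC of $\to$ is a singleton, and this is where polynomial ambiguity is used. Suppose for contradiction that distinct states $p, q$ belong to the same SCC. Then $p \to q$ and $q \to p$, and by transitivity also $p \to p$. Hence $\A$ contains a path $\pi$ from $p$ to $p$ on $u$, a path $\alpha$ from $p$ to $q$ on $u$, and a path $\beta$ from $q$ to $p$ on $u$. On the word $u^{2n}$, any of the $2^n$ concatenations obtained by choosing, at each block of length $2|u|$, either $\pi\pi$ or $\alpha\beta$ yields a distinct path from $p$ to $p$ (distinct because the states visited after odd multiples of $|u|$ transitions differ between the two options). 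Since $\A$ is trim, $p$ is reachable from some initial state by a word $\mu$ and coreachable to some final state by a word $\nu$, giving $\runs_\A(\mu u^{2n} \nu) \geq 2^n$, contradicting polynomial ambiguity.

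With each SCC a singleton, the relation $\to$ restricted to pairs of distinct states is a DAG. I would then pick any topological ordering of $Q$ consistent with this DAG: if $p \neq q$ and $p \to q$, then $p$ precedes $q$. Under this ordering, every non-zero entry of $\overline{M}$ lies on or above the diagonal, and since $M$ has non-negative entries an entry of $M$ vanishes exactly when the corresponding entry of $\overline{M}$ does. Thus the appropriate permutation matrix $P$ puts $M$ in upper triangular form, witnessing that $M$ is p-triangular. The only delicate step is the polynomial ambiguity argument; everything else is bookkeeping around the idempotent Boolean identity.
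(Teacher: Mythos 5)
Your proof is correct and follows essentially the same approach as the paper: use the idempotent Boolean structure to obtain a transitive relation on states, invoke polynomial ambiguity to rule out nontrivial cycles (the paper phrases this as antisymmetry of a partial order $\leq_u$), and then sort the states to obtain an upper-triangular form. The only cosmetic difference is that you prove the exponential-ambiguity contradiction directly via a $2^n$-path counting argument on $u^{2n}$, whereas the paper cites Weber and Seidl's structural characterisation (two distinct cycles on a state over the same word imply more than polynomial ambiguity); the two are interchangeable.
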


\begin{toappendix}
\section{Proofs for Section~\ref{section:GOOD} (Depumpable automata)}\label{section:Idempotent}
\polyAmbTriangular*
\begin{proof}
Let $\A$ be a polynomially-ambiguous automaton,
let $m$ denote the size of $\A$
and let $u \in \Sigma^*$ such  that $M(u)$ has an idempotent structure.
We construct a permutation $\sigma: \{1,2,\ldots,m\} \rightarrow \{1,2,\ldots,m\}$
such that $P_{\sigma}^{-1} \cdot M(u) \cdot P_{\sigma}$ is an upper triangular matrix,
where $P_{\sigma}$ is defined as
\begin{equation}\label{equation:permutationMat}
(P_{\sigma})_{ij} = 
\left\{
\begin{array}{ll}
     1 & \textup{if } j = \sigma(i);\\
     0 & \textup{otherwise}.
\end{array}
\right.
\end{equation}
The idea behind the construction of $\sigma$ is the following:
we show that, since $M(u)$ has an idempotent structure
and $\A$ is polynomially-ambiguous,
the non-zero entries of $M(u)$ define a partial order on the set of indices $\{1,2,\ldots,m\}$.
We then show that any permutation $\sigma$ which sorts these indices from largest to smallest
fits the desired requirements.

Formally, let us consider the binary relation
$\leq_{u}$ over $\{1,2,\ldots,m\}^2$
defined as
\begin{equation}\label{equation:order}
i \leq_{u} j
\textup{ if }
i = j
\textup{ or }
(M(u))_{ij} > 0.  
\end{equation}
We show that this relation is a partial order:
\begin{itemize}[nolistsep]
    \item 
    Reflexivity:
    this follows immediately from the definition;
    \item
    Transitivity:
    If $i \leq_{u} j$ and $j \leq_{u} k$ with $i \neq j \neq k$,
    then we get $(M(uu))_{ik} > (M(u))_{ij} \cdot (M(u))_{jk} > 0$
    as the entries of $M$ are non-negative.
    Since $M(u)$ has an idempotent structure, this implies that $(M(u))_{ik} > 0$,
    hence we have $i \leq_{u} k$;
    \item
    Antisymmetry: 
    suppose, towards building a contradiction,
    that there exist $i \neq j$ satisfying $i \leq_{u} j$ and $j \leq_{u} i$.
    This implies that $(M(uu))_{ii} > (M(u))_{ij} \cdot (M(u))_{ji} > 0$
    hence, as $M$ has an idempotent structure, $(M(u))_{ii} > 0$.
    Therefore, the automaton $\A$ has two distinct cycles on $i$ labelled by $vv$:
    one that loops twice on $i$ (witnessed by $(M(u))_{ii} > 0$),
    and one that goes to $j$ while reading the first copy of $v$ ($(M(u))_{ij} > 0$),
    and back to $i$ while reading the second one ($(M(u))_{ji} > 0$).
    This contradicts the fact that $\A$ is polynomially-ambiguous:
    Weber and Seidl's~\cite{WeberS91} showed that an automaton is not polynomially-ambiguous if and only if there exists a state $q$ and a word $w$ such that the there are at least two different cycles on $q$ labelled by $w$.
\end{itemize}
Let $\sigma$ be a permutation sorting $\{1,2,\ldots,n\}$
from largest to smallest according to $\leq_u$:
\begin{equation}\label{equation:permutation}
\textup{$i \leq_u j$ implies $\sigma(i) \geq \sigma(j)$ for every $1 \leq i,j \leq m$.}
\end{equation}
Finally, let $P_\sigma$ be the corresponding permutation matrix (as described by Equation~(\ref{equation:permutationMat})),
and let $M'= P_{\sigma}^{-1} \cdot M(u) \cdot P_{\sigma}$.
We conclude by showing that $M'$ is an upper triangular matrix:
For every $1 \leq i<j \leq m$ we have that $\sigma^{-1}(i) \not\leq_u \sigma^{-1}(j)$
(by the contrapositive of Equation~(\ref{equation:permutation})),
therefore $(M(u))_{\sigma^{-1}(i)\sigma^{-1}(j)} = 0$ by Equation~(\ref{equation:order}), and in turn:
\[
(M')_{ij}
 = (P_{\sigma}^{-1} \cdot M(u) \cdot P_{\sigma})_{ij}
= (M(u))_{\sigma^{-1}(i)\sigma^{-1}(j)}
= 0.\qedhere
\]
\end{proof}
\end{toappendix}

\begin{proof}[Proof of \cref{lemma:depump}]
We only prove Item~\ref{item:Pumpable},
Item~\ref{item:blindlyPumpable} can be proved nearly identically,
up to the quantifier alternation.

Suppose $\A$ is pumpable. We define $\ell$ as the constant in \cref{lemma:Ramsey}.
We define $\ramsey = \tower{r}(2\ell \cdot |\Sigma|)$ as the constant from \cref{lem:tower}.
Let $u,v \in \Sigma^*$ such that $|u| = \ramsey$.

We start by picking the decomposition $u = u_{\vdash} u_1 u_2 \ldots u_\ell u_{\dashv}$ from \cref{lem:tower}. By \cref{lemma:Ramsey} we can choose $1 \le i \le j \le \ell$ such that $M(u_i \ldots u_j)$ has an idempotent structure.
By Lemma~\ref{lemma:polyAmbTriangular} we know that $M(u_i \ldots u_j)$ is p-triangular. %
As $\A$ is pumpable, there is an entry $d \in \mathbb{N}$
of the diagonal of $M(u_i \ldots u_j)$ satisfying
\[
\A(u_1 \ldots u_{i-1} (u_i \ldots u_j)^n u_{j+1} \ldots u_{\ell}v) = d^{n - |\A|} \cdot \A(u_1 \ldots u_{i-1} (u_i \ldots u_j)^{|\A|} u_{j+1} \ldots u_{\ell}v)
\]
for all $n \geq |\A|$.
Let us rewrite the left-hand side as a product of matrices:
$I \cdot M(u_1\ldots u_{i-1}) \cdot M(u_i \ldots u_j)^n \cdot M(u_{j+1} \ldots u_{\ell}v) \cdot F$.
\cref{lem:tower} allows us to replace $M(u_i \ldots u_j)$ with an invertible p-triangular matrix $P$.
We get:
\begin{align*}
I \cdot M(u_1\ldots u_{i-1}) \cdot P^n \cdot M(u_{j+1} \ldots u_{\ell}v) \cdot F =
d^{n- |\A|} \cdot \A(u_1 \ldots u_{i-1} (u_i \ldots u_j)^{|\A|} u_{j+1} \ldots u_{\ell}v)
\end{align*}
for all $n \geq |\A|$.
As the right-hand side is composed of a single power $d^n$ multiplied by a constant polynomial
and $P$ is invertible and p-triangular,
we can rewrite it according to \cref{equation:invertible} of \cref{lemma:pumpMatrixS}:
\begin{align*}
I \cdot M(u_1\ldots u_{i-1}) \cdot P^n \cdot M(u_{j+1} \ldots u_{\ell}v) \cdot F = 
d^n \cdot I \cdot M(u_1\ldots u_{i-1}) \cdot M(u_{j+1}u_{\ell}v) \cdot F
\end{align*}
for all $n \in \mathbb{N}$.
We revert $P$ to $M(u_i \ldots u_j)$,
which yields the expression required by Lemma~\ref{lemma:depump} once we set $n = 1$:
\[
\A(u_1 \ldots u_{i-1} (u_i \ldots u_j)^n u_{j+1} \ldots u_{\ell}v) =
d^n \cdot \A(u_1 \ldots u_{i-1} u_{j+1} \ldots u_{\ell}v). \qedhere
\]
\end{proof}

\section{Deciding pumpability}\label{section:decision}

\newcommand{\pump}[1]{\mathcal{P}_{#1}}

This subsection is devoted to proving:
\propPumpIsDecidable*

\subsection{Deciding pumpability}

We start by showing pumpability is decidable, in fact, we will show that a seemingly weaker version of pumpability is decidable, however, we will observe that it is equivalent. 

Let us recall (from  \cref{def:pumpable}) that $\A$ is pumpable if for all $u,v,w \in \Sigma^*$ such that $M(v)$
    is p-triangular
    there is an entry $d$ of the diagonal of $M(v)$ 
    satisfying
\begin{equation}\label{equation:pumpMore}
    \A(uv^{m + n}w) = d^{n} \cdot \A(uv^{m}w)
    \textup{ for every }
    n \in \mathbb{N}. 
\end{equation}
However, weak pumpability requires only that \cref{equation:pumpMore} applies only for $n=1$:
\begin{definition}(Weak Pumpability)
    A weighted automaton $\A$ of size $m$
    is \emph{weakly pumpable}
    if for all $u,v,w \in \Sigma^*$ such that $M(v)$
    is p-triangular
    there is an entry $d$ of the diagonal of $M(v)$ 
    satisfying
    \begin{equation} \label{eq:weakpumpable}
    \A(uv^{m + 1}w) = d \cdot \A(uv^{m}w).    
    \end{equation}
\end{definition}

It is clear that if $\A$ is pumpable, then in particular
the weaker property is satisfied. We prove the converse implication: if  \cref{eq:weakpumpable} is satisfied by all triples $u,v,w$ such that $M(v)$ is p-triangular,
we show that each such triple also satisfies~\cref{equation:pumpMore}.
By applying Equation~\ref{eq:weakpumpable} to a family of the triples of the form $(uv^n,v,w)_{n \in \mathbb{N}}$,
we get inductively that for every $n \in \mathbb{N}$, there exist $m$ integers $j_1,j_2,\ldots,j_m$ summing up to $n$ such that
$\A(uv^{m+n}w) = \prod_{i=1}^m(M(v))_{ii}^{j_i} \cdot \A(uv^{m}w)$.
However, Lemma \ref{lemma:pumpMatrixS} gives us another expression for these weights:
$\A(uv^{m+n}w) = \sum_{i=1}^m(M(v))_{ii}^{n} \cdot p_i(n)$, where $p_i$ are polynomials.
The only way for these two expressions to match is that Equation~\ref{equation:pumpMore} holds,
which shows that $\A$ is pumpable. We prove this formally (proof in~\cref{appendix:weakIffpump}):

\begin{restatable}{lemma}{weakIffpump}
A weighted automaton $\A$ is weakly pumpable if and only if $\A$ is pumpable.
\end{restatable}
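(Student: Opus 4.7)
The forward direction (pumpable implies weakly pumpable) is immediate by instantiating the pumpability condition at $n = 1$. The plan is to prove the converse. Assume $\A$ is weakly pumpable, fix $u, v, w \in \Sigma^*$ with $M(v)$ p-triangular, set $m = |\A|$, and define $f(n) = \A(uv^{m+n}w)$. I will exploit two complementary descriptions of the sequence $(f(n))_{n \geq 0}$ and match them.

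First, I would iteratively apply weak pumpability to the triples $(uv^n, v, w)$ for $n = 0, 1, 2, \ldots$, producing a sequence $(e_n)_{n \geq 1}$ of diagonal entries of $M(v)$ satisfying $f(n+1) = e_{n+1} \cdot f(n)$. Second, since $M(v)$ is p-triangular, \cref{lemma:pumpMatrixS} applied to $\vec{x}^T = I^T M(u)$ and $\vec{y} = M(w) F$ yields a closed form
\[
f(n) = \sum_{i=1}^k d_i^n \, q_i(n) \quad \text{for all } n \geq 0,
\]
where $\{d_1, \ldots, d_k\}$ is the set of distinct diagonal entries of $M(v)$ and the $q_i$ are polynomials. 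Note that the $d_i$ are non-negative integers by the assumptions of \cref{sec:assumptions}.

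I would then proceed by case analysis. If $f(0) = 0$, the recurrence $f(n+1) = e_{n+1} f(n)$ forces $f \equiv 0$, and pumpability holds trivially. Otherwise, let $d_j$ be the largest $d_i$ for which $q_i$ is not identically zero. If $d_j = 0$ then every $d_i$ equals $0$ (non-negativity), which forces $f(n) = 0$ for all $n \geq 1$, and pumpability holds with $d = 0$. Otherwise $d_j > 0$, and $f(n) \sim d_j^n q_j(n)$, hence $f(n+1)/f(n) \to d_j$. Because the ratios $e_{n+1}$ are constrained to lie in the discrete set $\{d_1, \ldots, d_k\}$, this convergence implies $e_{n+1} = d_j$ for all $n$ beyond some threshold $N$, so $f(n) = d_j^{n-N} f(N)$ for $n \geq N$. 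Comparing with the closed form and invoking the standard fact that $\sum_i d_i^n r_i(n) = 0$ on an infinite set forces each polynomial $r_i$ to vanish (when the $d_i$ are distinct non-negative integers), I conclude that $q_j$ is the constant $f(N)/d_j^N$ and the other $q_i$ vanish identically. Evaluating at $n = 0$ gives $f(n) = d_j^n f(0)$ for all $n \geq 0$, which is exactly the pumpability condition with $d = d_j$.

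The main obstacle is the last step: promoting the eventual pumping behaviour (valid for $n \geq N$) to pumping that starts from $n = 0$. The asymptotic ratio argument only controls the tail of $f$, so the bridge to the full sequence must come from the uniqueness of the polynomial-exponential representation combined with the fact that the closed form from \cref{lemma:pumpMatrixS} is valid on the whole range $n \geq 0$. Handling the degenerate subcase where some $d_i = 0$ (where $d_i^n$ jumps between $1$ and $0$ at $n = 0$) requires a separate short argument, but the non-negativity of the diagonal entries guaranteed by our global assumptions keeps the ordering and asymptotics clean.
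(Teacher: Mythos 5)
Your proof is essentially correct and takes a genuinely different route from the paper's. The paper proves the converse by constructing a unary automaton $\U$ with $\U(a^n) = \A(uv^nw)$, observing that weak pumpability already bounds the prime divisors of $\{\A(uv^nw) \mid n\}$, invoking the Bell--Smertnig characterisation (\cref{theorem:Bell}) to conclude $\U$ is unambiguisable, and then applying \cref{prop:UnambIsPump} to the triple $(\epsilon,a,\epsilon)$ in $\U$ to deduce pumpability. Your argument instead works directly with the closed form from \cref{lemma:pumpMatrixS}, iterates weak pumpability to obtain a chain of ratios $e_{n+1}$ drawn from the diagonal entries, forces these ratios to stabilise at the dominant base $d_j$ by an asymptotic argument, and then invokes uniqueness of polynomial-exponential representations to propagate the single-term form from the tail back to $n=0$. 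Your route is more elementary and self-contained (it avoids the heavyweight Bell--Smertnig theorem and the machinery of \cref{prop:UnambIsPump}), at the cost of having to manage degenerate cases by hand; the paper's route delegates all the case analysis to lemmas it has already established and is consequently much shorter.

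Two small points to tighten. First, the sentence ``If $d_j = 0$ then every $d_i$ equals $0$'' is not a valid inference: $d_j$ being the largest base with a nonvanishing polynomial says nothing about bases with $q_i \equiv 0$, which may be positive. Fortunately the subcase is vacuous once the reindexing from \cref{lemma:pumpMatrixS} is done carefully: with $\vec{x}^T = I^TM(u)$ and $\vec{y} = M(w)F$, Equation~(\ref{equation:sum}) gives $f(n) = \sum_i d_i^{m+n}p_i(m+n)$ for $n \ge 0$; since $m \ge 1$, every term with $d_i = 0$ vanishes identically on $n \ge 0$, so one may take $q_i \equiv 0$ whenever $d_i = 0$, and hence $f(0) \ne 0$ already forces $d_j > 0$. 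Second, the ``standard fact'' you invoke needs the bases to be distinct \emph{positive} integers (a $d_i = 0$ term imposes no constraint on $r_i$ for $n \ge 1$), but again this is harmless after the reindexing, since the vanishing polynomials for the zero base are built in.
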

\begin{toappendix}
\label{appendix:weakIffpump}
\weakIffpump*
    
\begin{proof}
It is immediate that a pumpable automaton is weakly pumpable. We suppose that $\A$ is a weakly pumpable automaton and show $\A$ is pumpable.  Consider any triple $u,v,w$, with $M(v)$ p-triangular, then for every $n \in \N$ we have $\A(uv^{m+n+1}w) = d\cdot  \A(uv^{m+n}w)$ for some $d \in \{M(v)_{11},\dots,M(v)_{mm}\}$. To show that $\A$ is pumpable, we show that the same $d$ is chosen for every $n\in\N$.

Consider the unary weighted automaton $\U$ such that $\U(a^{n}) = \A(uv^{n}w)$ for every $n\in\N$. $\U$ is constructed with alphabet $\{a\}$, over the same states as $\A$ with initial vector $IM(u)$, final vector $M(w)F$ and $M_\U(a) = M(v)$. 

Note that, by weak pumpability of $\A$, the set of prime divisors of $\{ \A(uv^{n}w)\ |\ n \in \mathbb{N}\}$ is finite. Thus $\U$ is unambiguisable by the characterisation of \cref{theorem:Bell}.

Therefore by \cref{prop:UnambIsPump}, we have that $\U$ is pumpable. In particular, consider the triple $(\epsilon, a,\epsilon)$: we have that $M(a)$ is p-triangular, so there is some $d$ in the diagonal of $M(a)$ such that 
\[
\U(a^{m+n}) = d^n\U(a^m) \text{ for all } n\in\mathbb{N}
\]
and by definition of $\U$, this entails that 
\[
\A(uv^{m+n}w) = d^n \A(uv^mw) \text{ for all } n\in\mathbb{N}.
\]
Repeating for any $u,v,w$ we have $\A$ is pumpable.\qedhere

\end{proof}
\end{toappendix}

For every weighted automaton $\A$
we show how to construct a weighted automaton $\pump{\A}$
that has size exponential with respect to $\A$,
and maps every word to $0$ if and only if $\A$ is weakly pumpable.
Then, deciding weak pumpability of $\A$ amounts to deciding
zeroness of $\pump{\A}$, which can be done in polynomial space
(with respect to the size of $\A$).
The next lemma presents constructions tailored to the study of pumpable automata to be used as building blocks to construct $\pump{\A}$:
Given $\A$, we show how to construct automata that
recognise the triples $u,v,w$ such that $M(v)$ is p-triangular, and
that compute the value mapped by $\A$ to the words $(uv^nw)_{n \in \mathbb{N}}$ given only $u,v,w$.

\begin{restatable}{lemma}{lemmaConstructTBC}
    \label{lemma:iter}
    Let $\A= (Q,\Sigma, M, I, F)$ be a weighted automaton
    of size $m = |\A|$
    over $\Sigma$, and let $\$ \notin \Sigma$.
    \begin{enumerate}
        \item
        There exists an automaton $\mathcal{T}$ of size $2^{m^2} + 3$
        and norm 1
        satisfying
        \[
        \begin{array}{lll}
             \mathcal{T}(u\$v\$w) = 1 & 
             \textup{for all $u,v,w \in \Sigma^*$
             such that $M(v)$ is p-triangular};\\
             \mathcal{T}(u) = 0 &
             \textup{for all other
             $u \in (\Sigma \cup \{\$\})^*$}.
        \end{array}
        \]
        \item
          For all $n \in \mathbb{N}$ 
        there exist automata $\B_n$ of size $m^{2n}+2m$ and norm $||\A||^n$
        satisfying
        \[\begin{array}{lll}
             \B_n(u\$v\$w) = \A(uv^{n}w); &
             \textup{for all $u,v,w \in \Sigma^*$}.
             \end{array}
        \]
        \item
        For all $1\le i \le m$ there exists an automaton
          $\C_{i}$ of size $m + 2$
          and norm $||\A||$
        satisfying 
        \[\begin{array}{lll}
             \C_{i}(u\$v\$w) = (M(v))_{ii}  &
             \textup{for all $u,v,w \in \Sigma^*$}.
             \end{array}
        \]
    \end{enumerate}
    
\end{restatable}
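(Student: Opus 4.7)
The plan is to build each of the three automata on the same three-phase skeleton: a prefix phase reading $u$, a middle phase processing $v$, and a suffix phase reading $w$, with the two $\$$-letters acting as mode switches. The three items differ only in what the middle phase tracks.

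For $\mathcal{T}$ (Item~1), the middle phase is a determinised tracker for the Boolean structure of $M(v)$: its $2^{m^2}$ states are the Boolean $m\times m$ matrices; the first $\$$ enters at the identity matrix from a single $u$-reading prefix state; every letter $a$ transitions a matrix $B$ to $B\otimes\overline{M(a)}$ with weight~$1$; and the second $\$$ is enabled only on those matrices which are p-triangular (a finite property of the state set that can be hardcoded) and exits to a single $w$-reading suffix state. All transition weights being $1$, a single accepting run of weight~$1$ exists precisely when the input has the form $u\$v\$w$ with $M(v)$ p-triangular, and no accepting run exists otherwise, yielding size $2^{m^2}+O(1)$ and norm~$1$. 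For $\C_i$ (Item~3), the middle phase is a single copy of $\A$'s state space with the original weights $M(a)_{pq}$, and the $\$$-transitions force entry at state~$i$ and exit from state~$i$ (each with weight~$1$); summing over internal paths yields $(M(v))_{ii}$, so $\C_i(u\$v\$w)=(M(v))_{ii}$, giving exactly $m+2$ states and norm $||\A||$.

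The main obstacle is $\B_n$ (Item~2): we must compute $I^T M(u)\,M(v)^n M(w)\,F$ while scanning the single copy of $v$ only once. The plan is to simulate $n$ threads reading $v$ in lockstep. Thread $r\in\{1,\ldots,n\}$ carries a pair $(p_r^{\mathrm{init}},p_r^{\mathrm{curr}})\in Q\times Q$, giving the claimed $m^{2n}$ middle states. Reading a letter $a$ freezes every $p_r^{\mathrm{init}}$, replaces each $p_r^{\mathrm{curr}}$ by some $p_r^{\mathrm{new}}\in Q$, and multiplies the weight $\prod_{r=1}^n M(a)_{p_r^{\mathrm{curr}}\,p_r^{\mathrm{new}}}$. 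The first $\$$ transitions, from each prefix state $q\in Q$, into every middle state with $p_1^{\mathrm{init}}=p_1^{\mathrm{curr}}=q$ and $p_r^{\mathrm{init}}=p_r^{\mathrm{curr}}$ (chosen freely) for $r\geq 2$, with weight~$1$. The second $\$$ may exit the middle only at states satisfying the chaining constraints $p_r^{\mathrm{curr}}=p_{r+1}^{\mathrm{init}}$ for $r<n$, transitioning to the suffix state $p_n^{\mathrm{curr}}$ to start reading $w$. Expanding the weight of a full run and renaming $i_{r-1}:=p_r^{\mathrm{init}}$, $i_r:=p_r^{\mathrm{curr}}$ makes the chaining conditions tautological, and the total collapses to
\[
\sum_{i_0,\ldots,i_n}(I^T M(u))_{i_0}\,\Bigl(\prod_{r=1}^n M(v)_{i_{r-1}i_r}\Bigr)(M(w)F)_{i_n}\;=\;I^T M(u)\,M(v)^n M(w)\,F\;=\;\A(uv^n w).
\]
The total state count is $m+m^{2n}+m=m^{2n}+2m$, and since each middle transition is a product of $n$ entries of $\A$, the norm is $||\A||^n$.
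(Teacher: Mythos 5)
Your construction is correct and follows essentially the same approach as the paper's: $\mathcal{T}$ determinises the Boolean structure of $M(v)$ over a state set of $2^{m^2}$ matrices and checks p-triangularity at the second $\$$; $\C_i$ is a copy of $\A$ sandwiched between two extra $\$$-guarded states pinning entry and exit to index $i$; and $\B_n$ simulates $n$ copies of $\A$ reading $v$ in lockstep with guessed starting states verified by a chaining constraint at the second $\$$. The only cosmetic difference is that the paper encodes the middle component of $\B_n$ with $n-1$ guesses plus $n$ current states ($Q^{n-1}\times Q^{n}$, $m^{2n-1}$ states), whereas you carry a full (init, curr) pair per thread, using $m^{2n}$ states; this hits the claimed bound $m^{2n}+2m$ exactly rather than being slightly under it, and your renaming $i_{r-1}:=p_r^{\mathrm{init}}$, $i_r:=p_r^{\mathrm{curr}}$ cleanly shows the telescoping to $I^T M(u)M(v)^n M(w)F$.
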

\begin{remark}
    The values of $\B_n$ and $\C_i$ are unimportant and unspecified when the input does not take the form $u\$v\$w$ for $u,v,w\in\Sigma$.%
\end{remark}
\begin{toappendix}
\lemmaConstructTBC*
\begin{proof}[Proof of \cref{lemma:iter}]
\hfill
\begin{enumerate}
    \item Let $\T$ be the automaton with states $\{q_0,q_a\} \cup \{0,1\}^{Q\times Q}$, that is, other than $q_0,q_r$ and $q_a$, every state is a zero-one matrix indexed by $Q$.
Recall the notation $\overline{M}$, where $\overline{M}_{i,j} = 1$ if $M_{i,j} \ne 0$ and $0$ otherwise. 
Let the initial vector $I_\T$ be the zero vector, except that $I_\T(q_0) = 1$ and let the final vector be $F_\T$ be the zero vector, except that $F_\T(q_a) = 1$. 
Let $\T$ have the following transitions:
\begin{itemize}
    \item $q_0 \xrightarrow{a:1} q_0, \text{ and }   q_a \xrightarrow{a:1} q_a$ for each $a\in\Sigma$;
    \item $q_0\xrightarrow{\$:1} \id$, where $\id$ is the identity matrix over $Q$;
    \item For every $M \in \{0,1\}^{Q\times Q}$, $a\in\Sigma$, let $N = M \overline{M(a)} $, we have  $M \xrightarrow{a:1} \overline{N}$;
    \item If $M$ is p-triangular we have $M \xrightarrow{\$:1} q_a$;
\end{itemize}
Note that the automaton is deterministic such that $q_0\xrightarrow{ u\$v\$w:1}q_a$ if $M(v)$ is p-triangular, in which case the weight of the run is 1 as $F(q_a) = 1$, and otherwise no run reaches the final state.
Observe $|\T| =  2^{|Q|\times |Q|} + 2$ and $||\T|| = 1$.

\item $\B_n$ will operate in three components separated by reading $\$$.
The first and third components, reading $u$ and $w$ of $u\$v\$w$ respectively are direct copies of $\A$. The second component, reading $v$ but behaving like $v^n$, being the most interesting. 
For every state $q\in Q$, we let the copy $q^1$ of $q$ represents the state in the first component and a copy $q^3$ represents the state in the third component. The transitions in the first and third components are as in $\A$.

The second component simulates $n$ runs simultaneously using the state space $Q^{n-1} \times Q^{n}$. A state $(g_1,\dots,g_{n-1}, q_1,\dots,q_n)$, with $g_i,q_i\in Q$, denotes that the state of the $i$th run is $q_i$, and the guess that the $i$th run will terminate in state $g_i$ at the end of $v$, and thus the $i+1$st run will start in $g_i$. Only runs with the correct guess will proceed to the third component.

Between the first and middle component we have:
\[
q^1 \xrightarrow{\$:1} (g_1,\dots, g_{n-1},q, g_1,\dots,g_{n-1}) \text{ for all } g_1\dots g_{n-1} \in Q \text{ and } q\in Q.
\]
Within the middle component we have:
\[
(g_1,\dots,g_{n-1}, q_1,\dots,q_n) \xrightarrow{a: x_1\cdot \ldots \cdot x_n} (g_1,\dots, g_{n-1},q_1',\dots, q_n') \text{ for all } q_i \xrightarrow{a: x_i} q_i' \in \A.
\]
From the middle component to the third we have:
\[
(g_1,\dots,g_{n-1}, q_1,\dots,q_n) \xrightarrow{\$:1} q_n \text{ if } g_i = q_i \text{ for all } 1\le i\le n-1. 
\]
Let the initial and finial vectors, $I_{B_n}$ and $F_{B_n}$, be the zero vectors, except $I_{B_n}(q^1) = I(q)$ for every $q\in Q$ and $F_{B_n}(q^3) = F(q)$ for every $q\in Q$.

Consider a word $uv^n w$, where $u =u_1\dots u_{|u|}$, $v =v_1\dots v_{|v|}$ and $w =w_1\dots w_{|w|}$.
We observe, there is a bijection between runs in $\A$ on $uv^nw$ and runs in $\B$ on $u\$v\$w$.

The bijection is as follows:
Let the following be a run in $\A$ on $uv^nw$:
\begin{itemize}
    \item On $u$, the subrun $ q_{u,1}\to\ldots\to q_{u,|u|}\to q_{u,|u|+1}$ and having weights $I(q_{u,1}),x_{u,1},\ldots, x_{u,|u|}$,
    \item On $v^n$ the subrun $q_{v,1,1}\to\ldots q_{v,1,|v|+1} = q_{v,2,1}\to\ldots  \ldots\to q_{v,n,|v|+1}$, where $ q_{u,|u|}= q_{v,1,1}$, and having weights $x_{v,1,1},\ldots, x_{v,1,|v|}, x_{v,2,1},\ldots, x_{v,n,|v|}$,
    \item On $w$ the subrun $q_{w,1}\to\ldots\to q_{w,|w|+1}$, where $q_{v,n,|v|+1} = q_{w,1}$. and having weights $x_{w,1},\ldots x_{w,|w|+1}, F(q_{w,|w|+1})$.
\end{itemize}
This run is in bijection with the following run on $u\$v\$w$ in $\B_n$.
\begin{itemize}
\item On $u$ the subrun $q^1_{u,1}\to\ldots\to q^1_{u,|u|}\to q^1_{u,|u|+1}$ has weights $I(q^1_{u,1}),x_{u,1},\ldots, x_{u,|u|}$.
\item Which moves, with weight $1$ on $\$$ to $(q_{v,2,1},\ldots, q_{v,n,1},q_{v,1,1},\ldots, q_{v,n,1})$
\item On $v$ the subrun: 
$(q_{v,2,1},\ldots, q_{v,n,1},q_{v,1,1},\ldots, q_{v,n,1})\xrightarrow{}(q_{v,2,1},\ldots, q_{v,n,1},q_{v,1,2},\ldots, q_{v,n,2})\xrightarrow{}\ldots\xrightarrow{} (q_{v,2,1},\ldots, q_{v,n,1},q_{v,1,|v|+1},\ldots, q_{v,n,|v|+1})$
has weights
$(x_{v,1,1}\cdot\ldots\cdot x_{v,n,1}),\ldots, (x_{v,1,|v|}\cdot\ldots\cdot x_{v,n,|v|})$, and
\item Which moves, with weight $1$ on $\$$ to $q^3_{w,1} =  q_{v,n,|v|+1})$, and
\item On $w$ the subrun $q^3_{w,1}\to\ldots\to q^4_{w,|w|}\to q^1_{w,|w|+1}$, with weights $x_{w,1},\ldots, x_{w,|w|},F(q^3_{w,|w|+1})$.
\end{itemize}
Finally, we observe the weights have the same value: 
\begin{multline*}
    I(q_{u,1})\cdot x_{u,1}\cdot\ldots\cdot x_{u,|u|}\cdot x_{v,1,1}\cdot\ldots\cdot x_{v,1,|v|}\cdot x_{v,2,1}\cdot\ldots  \ldots\cdot x_{v,n,|v|}\cdot x_{w,1}\cdot\ldots x_{w,|w|+1} \cdot F(q_{w,|w|+1})
    = \\ 
    I(q^1_{u,1})\cdot x_{u,1}\cdot\ldots\cdot x_{u,|u|}\cdot1 \cdot  (x_{v,1,1}\cdot\ldots\cdot x_{v,n,1}) \cdot \ldots \cdot  (x_{v,1,|v|}\cdot\ldots\cdot x_{v,n,|v|}) \cdot 1\cdot x_{w,1}\cdot\ldots x_{w,|w|+1}\cdot F(q^3_{w,|w|+1}).
\end{multline*}

\item  $\C_i$ is a copy of $\A$ with two additional states $q_0,q_f$, in which $I_{\C_i}(q_0) =  F_{\C_i}(q_f) = 1$ (and all other entries of $I_{\C_i},F_{\C_i}$ are zero), and
\[
q_0 \xrightarrow{\$: 1} i,\quad i \xrightarrow{\$: 1} q_f,\text{ and}\]\[ q_0 \xrightarrow{a: 1} q_0 ,\quad  q_f \xrightarrow{a: 1} q_f \text{ for all }a\in\Sigma.\qedhere
\]
\end{enumerate}
\end{proof}
\end{toappendix}

We are now ready to show how to decide pumpability in polynomial space. 
Let $\A$ be a weighted automaton of size $m$,
let $\$$ be a fresh symbol that is not in the alphabet of $\A$,
and let $\mathcal{T}$, $(\B_n)_{n \in \mathbb{N}}$ and $(\C_{i})_{1 \leq i \leq m}$
be the weighted automata constructed in Lemma~\ref{lemma:iter}.
Using the result of \ref{lemma:combination} that the difference and product of functions recognised by weighted automata are themselves effectively expressible by weighted automata, let \[\pump{\A} = \mathcal{T} \cdot \prod_{i=1}^m (\B_{m+1} - \B_{m}\C_i).\]

\begin{lemma}
$\pump{\A}$ maps every word to $0$ if and only if $\A$ is weakly pumpable.
\end{lemma}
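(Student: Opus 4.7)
The plan is to compute $\pump{\A}(w')$ explicitly for every word $w' \in (\Sigma \cup \{\$\})^*$ and then match the vanishing condition against the definition of weak pumpability. By \cref{lemma:combination}, the product/difference construction yields an automaton whose value is the pointwise product of the component functions, so we have
\[
\pump{\A}(w') = \mathcal{T}(w') \cdot \prod_{i=1}^m \bigl(\B_{m+1}(w') - \B_m(w')\,\C_i(w')\bigr).
\]

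First I would dispose of words $w'$ that do not have the form $u\$v\$w$ with $u,v,w \in \Sigma^*$: by \cref{lemma:iter} we have $\mathcal{T}(w') = 0$ on all such inputs, so $\pump{\A}(w') = 0$ regardless of the (otherwise unspecified) values of $\B_n$ and $\C_i$. Similarly, if $w' = u\$v\$w$ but $M(v)$ is not p-triangular, \cref{lemma:iter} again gives $\mathcal{T}(w') = 0$ and the product vanishes.

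The remaining case is $w' = u\$v\$w$ with $M(v)$ p-triangular. Here $\mathcal{T}(w') = 1$ and \cref{lemma:iter} specialises the other factors to $\B_{m+1}(w') = \A(uv^{m+1}w)$, $\B_m(w') = \A(uv^m w)$ and $\C_i(w') = (M(v))_{ii}$. Thus
\[
\pump{\A}(u\$v\$w) = \prod_{i=1}^m \Bigl(\A(uv^{m+1}w) - (M(v))_{ii}\cdot \A(uv^m w)\Bigr).
\]
This product vanishes if and only if at least one factor is zero, i.e.\ there exists some index $i$ with $\A(uv^{m+1}w) = (M(v))_{ii}\cdot \A(uv^m w)$. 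By the remark following the definition of p-triangular matrices, the multiset $\{(M(v))_{11},\ldots,(M(v))_{mm}\}$ coincides (up to reordering) with the set of eigenvalues, i.e.\ with the diagonal entries of the upper-triangular form of $M(v)$; these are exactly the values admissible as the ``entry $d$ of the diagonal of $M(v)$'' in the definition of weak pumpability.

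Combining the three cases, $\pump{\A}(w') = 0$ for all $w'$ if and only if for every triple $u,v,w \in \Sigma^*$ with $M(v)$ p-triangular there is some diagonal entry $d$ of $M(v)$ satisfying $\A(uv^{m+1}w) = d\cdot \A(uv^m w)$, which is precisely the condition that $\A$ is weakly pumpable. There is no real obstacle here: the work was already done in \cref{lemma:iter}, and the only subtle point is to confirm that iterating $i$ from $1$ to $m$ over $(M(v))_{ii}$ genuinely enumerates all candidate values of $d$, which follows directly from the p-triangular remark.
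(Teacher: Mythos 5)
Your proof is correct and takes essentially the same approach as the paper, which proves the statement in one line by unpacking the definition of $\pump{\A}$ and invoking \cref{lemma:iter} exactly as you do. The only small superfluity is the final appeal to the eigenvalue remark: the definition of weak pumpability asks for an entry $d$ of the diagonal of $M(v)$, which is literally $(M(v))_{ii}$ for some $1 \le i \le m$, so iterating $i$ from $1$ to $m$ already covers all candidates without reference to eigenvalues.
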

\begin{proof}
By definition, $\pump{\A}$ maps every word to $0$ if and only if for every triple $u,v,w \in \Sigma^*$
such that $\mathcal{T}(u\$v\$w) > 0$ (\ie $M(v)$ is p-triangular),
there exists $1 \leq i \leq m$ such that $\B_{m+1}(u\$v\$w) = \B_{m}(u\$v\$w)\C_i(u\$v\$w)$,
that is,
\[
    \textup{there exists $1 \leq i \leq m$ satisfying } \A(uv^{m+1}w) = (M(v))_{ii} \cdot \A(uv^{m}w).\qedhere
\]
\end{proof}
It remains to confirm that we can check whether $\pump{\A}$ maps every word to $0$ in space polynomial in $m$.
First, remark that the size of $\pump{\A}$ is exponential in $m$:
\[
|\pump{\A}| = 
|\mathcal{T}| \cdot \prod_{i=1}^m (|\B_{m+1}| + |\B_{m}||\C_i|) =
(2^{m^2}+3) \cdot (m^{4m+1})^m
\leq
m^{6m^2} \qquad \text{(assuming $m\ge 2$).}
\]
and the norm is also exponential in $|\A|$, this entails that the weight of any edge can be encoded in polynomial space:
\begin{equation}
    \label{eq:normcomp}
||\pump{\A}|| = 
||\mathcal{T}|| \cdot \prod_{i=1}^m \max\{||\B_{m+1}|| , ||\B_{m}|||\C_i||\} =
 1\cdot (||\A||^{m+1})^m = ||\A||^{m^2+m}.
\end{equation}

We show that zeroness of the exponential size automaton $\pump{\A}$ can be decided in PSPACE. It is known that deciding equivalence of $\Q$-weighted automata is in NC2~\cite{Tzeng96,kiefer13}. Recall, NC is the class of problems decidable using a circuit of polynomial size and polylogarithmic depth, with branching width at most two. Such problems can be solved sequentially in polylogarithmic space~\cite{Ruzzo81}.

In particular, this means the zeroness problem can be decided in polylogarithmic space in the size of the automaton. Thus applying the  polylogarithmic space zeroness algorithm to the automaton $\pump{A}$ requires only polylogarithmic space with respect to the exponential size automaton $\pump{\A}$, equivalently polynomial space with respect to $\A$. To conclude we show that any transition weight in $\pump{\A}$ can be computed in polynomial space, so that the equivalence procedure has access to any edge weight of $\pump{\A}$, and thus any bit of the representation of $\pump{\A}$, in PSPACE.

\begin{lemma}
    Given two states $q,q'$ of $\pump{\A}$ and $a\in \Sigma \cup \{\$\}$, the transition weight $M_{\pump{\A}}(a)_{q,q'}$ can be computed in polynomial space. 
\end{lemma}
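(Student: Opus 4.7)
The plan is to verify polynomial-space computability of transition weights separately for each of the building blocks $\mathcal{T}$, $\B_{m+1}$, $\B_m$, and $\C_i$ (for $1 \leq i \leq m$) appearing in $\pump{\A} = \mathcal{T}\cdot\prod_{i=1}^m(\B_{m+1}-\B_m\C_i)$, and then to lift these bounds through the difference and product operations by iterating the quantitative part of \cref{lemma:combination}, which guarantees that if transition weights of $\A_1$ and $\A_2$ can be evaluated in space $O(\log\|\A_1\|)$ and $O(\log\|\A_2\|)$ respectively, then those of $\A_1-\A_2$ and $\A_1\cdot\A_2$ are computable in space $O(\log(\|\A_1\|\cdot\|\A_2\|))$.

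For the atomic pieces: $\C_i$ has only $m+2$ states, so its transition relation can be read off directly in polynomial time. The automaton $\B_n$ for $n\in\{m,m+1\}$ has states that are either single states of $\A$ or tuples of $2n-1=O(m)$ such states; a state thus occupies polynomial space, and a non-initial/final transition weight is a product of $n$ weights of $\A$, each of constant bit-size, yielding a value of bit-size $O(m\log\|\A\|)$ computable in polynomial space. The automaton $\mathcal{T}$ has states that are either $q_0,q_a$ or Boolean matrices in $\{0,1\}^{Q\times Q}$, each of size $O(m^2)$ bits; to compute a transition weight, one either applies the update rule $M \xrightarrow{a:1} \overline{M\cdot\overline{M(a)}}$, which requires a single Boolean matrix multiplication in polynomial time, or tests whether $M$ is p-triangular before firing $M \xrightarrow{\$:1} q_a$. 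The latter test reduces, because of the non-negativity assumption, to checking that the directed graph whose edges are the non-diagonal $1$-entries of $M$ is acyclic, which is polynomial time via condensation into strongly connected components.

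Having settled the base cases, it remains to stitch them together. A state of $\pump{\A}$ decomposes as a tuple $(t, s_1, \ldots, s_m)$ where $t$ is a state of $\mathcal{T}$ and each $s_i$ is a state of $\B_{m+1}-\B_m\C_i$; the latter is either a state of $\B_{m+1}$ or a pair of states of $\B_m$ and $\C_i$, and each component has polynomial-size representation. To compute $M_{\pump{\A}}(a)_{q,q'}$ we recurse down this tree of products and differences, at each node invoking \cref{lemma:combination} to combine the weights of the two children; the recursion depth is $O(\log m)$ (balanced) or $O(m)$ (unbalanced), either way polynomial. The arithmetic never exceeds the norm bound $\|\pump{\A}\| = \|\A\|^{m^2+m}$ computed in \cref{eq:normcomp}, so each intermediate value is stored in $O((m^2+m)\log\|\A\|)$ bits, which is polynomial in $|\A|$.

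The only subtle point is ensuring that the bookkeeping during the recursion stays within polynomial space: at each node of the syntactic tree of $\pump{\A}$ we store the pair of sub-states being combined and the partially-accumulated product/difference, all of polynomial bit-size by the norm computation above. Since the base-case weights (for $\mathcal{T}$, $\B_n$, $\C_i$) are each computable in polynomial space, and \cref{lemma:combination} introduces only a logarithmic-in-the-norm space overhead per combination, the total space usage for computing $M_{\pump{\A}}(a)_{q,q'}$ is polynomial in $|\A|$, as required.
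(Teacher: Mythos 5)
Your proposal is correct and follows essentially the same approach as the paper's (much terser) proof: verify the base cases $\T$, $\B_{m+1}$, $\B_m$, $\C_i$, then lift through the quantitative part of Lemma~\ref{lemma:combination} using the norm bound from Equation~\eqref{eq:normcomp}. The one detail you spell out that the paper leaves implicit is how the p-triangularity test on a state of $\T$ is decided --- your reduction to acyclicity of the graph of off-diagonal $1$-entries is correct and relies precisely on the non-negativity assumption ensuring the Boolean structure captures the zero pattern of $M(v)$.
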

\begin{proof}
    First observe the same is true for $\T,\B_{m+1},\B_{m}$ and $\C_i$ for all $1\le i\le m$. The automaton
    $\pump{\A}$ is built applying \cref{lemma:combination} polynomially many times. By induction, at each step the norm is at most exponential (in total bounded by~\cref{eq:normcomp}) and the weight can be computed recursively in polynomial space using the second part of \cref{lemma:combination}.
\end{proof}

\subsection{Deciding blind pumpability}
We show that we can also decide blind pumpability in PSPACE,
again by reducing to the zeroness problem for weighted automata.
Let us compare pumpability and blind pumpability:\begin{itemize}[nolistsep]
    \item A pumpable automaton requires, for any $(u,v,w)$ triple with $M(v)$ p-triangular,
    the existence of $d$ such that $\A(uv^{m + n}w) = d^n \cdot \A(uv^{m}w)$ for all $n \in \mathbb{N}$.
    \item A blindly pumpable automaton requires, for any $(u,v)$ pair with $M(v)$ p-triangular,
    the existence of $d$ such that $\A(uv^{m + n}w) = d^n \cdot \A(uv^{m}w)$ for all $w\in \Sigma^*$ and $n \in \mathbb{N}$.
\end{itemize}
That is, in order to be blindly pumpable an automaton needs to be pumpable,
and on top of that for any $(u,v)$ pair the choice of $d$ has to be the same for all $w$.
We show how to decide this property, assuming it is already known that $\A$ is pumpable using the previous section.
We start with the assumption that it is already known that $\A$ is pumpable, by first applying the algorithm of the previous section. We will then encode the requirement that the choice of $d$ be the same for any two suffixes into a zeroness problem (as in the previous section). In order to do this we we generalise the automata $T$ and $B_i$ of \cref{lemma:iter} to read two suffixes:

\begin{lemma}
    There exists an automaton $\T'$ such that $\T'(u\$v\$w\$w') = 1 $ if $M(v)$ is p-triangular,
    and $\T'$ is zero on all other inputs.
    There exist automata $\B_{1,n},\B_{2,n}$ such that $\B_{1,n}(u\$v\$w\$w') = \B_n(u\$v\$w)$ and $\B_{2,n}(u\$v\$w\$w')= \B_n(u\$v\$w')$.
\end{lemma}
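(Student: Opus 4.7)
The plan is to extend the constructions of \cref{lemma:iter} by appending or inserting a ``dummy consumption'' phase that reads the extra block with weight $1$, so that the overall computed value matches the corresponding three-part automaton. In all three cases the cost is additive and small, preserving the polynomial-space accessibility of transition weights that the surrounding zeroness-based decision procedure requires.

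For $\T'$, I would take the automaton $\T$ from \cref{lemma:iter}(1), which accepts $u\$v\$w$ with weight $1$ precisely when $M(v)$ is p-triangular, and split its accepting phase across the extra $\$$. Concretely, rename the accepting state $q_a$ to a non-final state with the same self-loops on $\Sigma$, then add a fresh state $q_a'$ with $q_a \xrightarrow{\$:1} q_a'$, self-loops $q_a' \xrightarrow{a:1} q_a'$ for every $a \in \Sigma$, and $F_{\T'}(q_a') = 1$. This accepts exactly the four-part words $u\$v\$w\$w'$ for which $M(v)$ is p-triangular, and rejects everything else.

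For $\B_{1,n}$, I would take $\B_n$ and add one fresh state $q_\text{end}$. For every state $q^3$ of the third component, install a transition $q^3 \xrightarrow{\$:F(q)} q_\text{end}$ and set the new final vector to be $1$ on $q_\text{end}$ and $0$ elsewhere; add self-loops $q_\text{end} \xrightarrow{a:1} q_\text{end}$ for every $a \in \Sigma$. This moves the contribution of the original final weight $F(q)$ onto the third $\$$ transition, and then the suffix $w'$ is read with weight $1$. The resulting value on $u\$v\$w\$w'$ coincides exactly with $\B_n(u\$v\$w) = \A(uv^n w)$.

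The main obstacle is $\B_{2,n}$, because the middle component of $\B_n$ carries nontrivial tuple information $(g_1,\ldots,g_{n-1},q_1,\ldots,q_n)$ that must survive the irrelevant block $w$ before the third component can simulate $\A$ on $w'$. I would introduce ``hold'' states $\hat{q}$ for each $q \in Q$ and redirect the second $\$$ of $\B_n$: from $(g_1,\ldots,g_{n-1},q_1,\ldots,q_n)$ with the matching guard $g_i = q_i$, transition on $\$$ with weight $1$ to $\hat{q_n}$; add self-loops $\hat{q} \xrightarrow{a:1} \hat{q}$ for every $a \in \Sigma$ (consuming $w$ with weight $1$); and add $\hat{q} \xrightarrow{\$:1} q^3$, handing control to the third component of $\B_n$ which then processes $w'$ as usual. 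Correctness follows by exhibiting a weight-preserving bijection between runs of $\B_n$ on $u\$v\$w'$ and runs of $\B_{2,n}$ on $u\$v\$w\$w'$: the first two phases are unchanged, the hold phase contributes weight $1$ regardless of $w$, and the third phase is identical. The size overhead is $|\T'| \le |\T|+1$, $|\B_{1,n}| \le |\B_n|+1$, $|\B_{2,n}| \le |\B_n|+|Q|$, and the norms remain within the same bound as $\B_n$, so each transition weight is still computable in polynomial space as needed for the zeroness test applied downstream.
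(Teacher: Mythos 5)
Your proposal is correct and takes the same approach the paper intends; the paper's proof consists of a single sentence asserting the constructions are easy modifications of Lemma~\ref{lemma:iter}, and your explicit constructions (splitting the accepting phase of $\T$ across an extra $\$$, deferring $F(q)$ onto the third $\$$-transition for $\B_{1,n}$, and inserting hold states $\hat{q}$ for $\B_{2,n}$) supply exactly the missing details, with correct weight bookkeeping and size/norm overheads that preserve the polynomial-space computability needed downstream.
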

\begin{proof}
    These automata are easily obtained by modifying the constructions in~\cref{lemma:iter}.
\end{proof}
\begin{lemma}
Let $m$ be the size of $\A$ and let $\mathcal{Q}_\A = \T'\cdot (\B_{1,m+1}\cdot\B_{2,m} - \B_{1,m}\cdot\B_{2,m+1})$. 
Then $\A$ is blindly pumpable if and only if $\A$ is pumpable and $\mathcal{Q}_\A$ maps every word to zero.
\end{lemma}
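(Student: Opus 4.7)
The plan is to unpack both directions of the equivalence by analysing $\mathcal{Q}_\A$ on inputs of the shape $u\$v\$w\$w'$. By construction, $\T'(u\$v\$w\$w') = 1$ if $M(v)$ is p-triangular and $0$ otherwise (and $\T'$ is zero on all inputs not of this shape), while $\B_{1,n}(u\$v\$w\$w') = \A(uv^nw)$ and $\B_{2,n}(u\$v\$w\$w') = \A(uv^nw')$ by the previous lemma. Hence on inputs $u\$v\$w\$w'$ with $M(v)$ p-triangular,
\[
\mathcal{Q}_\A(u\$v\$w\$w') = \A(uv^{m+1}w)\cdot\A(uv^mw') \;-\; \A(uv^mw)\cdot\A(uv^{m+1}w'),
\]
and $\mathcal{Q}_\A$ is identically zero on every other input. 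So zeroness of $\mathcal{Q}_\A$ is exactly the statement that the above bilinear expression vanishes for all $(u,v,w,w')$ with $M(v)$ p-triangular.

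For the forward direction, assume $\A$ is blindly pumpable. Then $\A$ is trivially pumpable. Fixing $u,v$ with $M(v)$ p-triangular, blind pumpability supplies a single diagonal entry $d$ of $M(v)$ such that $\A(uv^{m+1}w) = d \cdot \A(uv^mw)$ holds \emph{uniformly} for every suffix $w$. Substituting this identity for both $w$ and $w'$ into the displayed expression yields $d\cdot\A(uv^mw)\A(uv^mw') - \A(uv^mw)\cdot d\A(uv^mw') = 0$, as required.

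For the backward direction, assume $\A$ is pumpable and $\mathcal{Q}_\A$ is identically zero. Fix $u,v$ with $M(v)$ p-triangular. Pumpability supplies, for each suffix $w$, a diagonal entry $d_w$ of $M(v)$ such that $\A(uv^{m+n}w) = d_w^n\cdot\A(uv^mw)$ for every $n \in \mathbb{N}$. Now vanishing of $\mathcal{Q}_\A$ gives $\A(uv^{m+1}w)\A(uv^mw') = \A(uv^mw)\A(uv^{m+1}w')$; substituting $\A(uv^{m+1}w) = d_w\A(uv^mw)$ and similarly for $w'$ rearranges to $(d_w - d_{w'})\A(uv^mw)\A(uv^mw') = 0$. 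Therefore $d_w = d_{w'}$ whenever both $\A(uv^mw)$ and $\A(uv^mw')$ are non-zero.

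It remains to extract one common $d$ valid for \emph{every} $w$. If some $w_0$ satisfies $\A(uv^mw_0) \neq 0$, set $d := d_{w_0}$; for any $w$ with $\A(uv^mw) \neq 0$ the previous step gives $d_w = d$, and for any $w$ with $\A(uv^mw) = 0$ the identity $\A(uv^{m+n}w) = d^n\cdot\A(uv^mw) = 0$ is automatic (since pumpability forces $\A(uv^{m+n}w)=0$ for all $n$ in this case). If instead $\A(uv^mw) = 0$ for every $w$, pick $d$ to be any diagonal entry of $M(v)$ (which exists since $M(v)$ is square); the required equation $\A(uv^{m+n}w) = d^n\cdot 0 = 0$ is then trivial. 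The only subtle point is precisely this handling of degenerate suffixes (where $d_w$ is not uniquely pinned down by pumpability), and as shown it is harmless because both sides of the defining equation vanish.
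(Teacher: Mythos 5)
Your proof is correct and follows essentially the same approach as the paper: expand $\mathcal{Q}_\A$ on inputs of the form $u\$v\$w\$w'$, observe that its vanishing is the cross-multiplied form of the ratio equality, and use pumpability plus the non-degeneracy case analysis to extract a uniform $d$. Your write-up is slightly more explicit than the paper's (in particular the factorisation into $(d_w - d_{w'})\A(uv^mw)\A(uv^mw') = 0$ and the careful treatment of the all-zero case), but there is no substantive difference in method.
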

\begin{proof}
We verify for every $u,v$ and every $w,w'$ that the same $d$ is used to assert that 
$ \A(uv^{m + n}w) = d^n \cdot \A(uv^{m}w)$ and $ \A(uv^{m + n}w') = d^n \cdot \A(uv^{m}w')$. Since $\A$ is pumpable, the same $d$ is used for every $n$, so it is sufficient to verify the property only at $n =1$.
In particular, whenever $\A(uvw)\ne 0$, we require that \begin{equation}
\label{eq:blindlypumpableeq}    
\frac{ \A(uv^{m + 1}w)}{\A(uv^{m}w)} = \frac{ \A(uv^{m + 1}w')}{\A(uv^{m}w')} \text{ for all $w\in \Sigma^*$}.
\end{equation}
Note that, if $\A(uv^{m}w)=0$, then by pumpability $\A(uv^{m+n}w)=0$.
Then $\A(uv^{m+n}w)= d^n \A(uv^{m}w)=0$ for any $d$,
thus in particular the same choice of $d$ can be made as $\A(uv^{m+n}w')= d^n \A(uv^{m}w')=0$.

Observe that $\mathcal{Q}_\A$ is zero when either:
\begin{itemize}[nolistsep]
    \item $\T'$ is zero, that is $u,v,w,w'$ do not need to satisfy \cref{eq:blindlypumpableeq}, or,
    \item $\A(uv^{m+k}w)=0$, or $\A(uv^{m+k}w')=0$ for $k\in\{0,1\}$, or
    \item for every $u,v,w,w'$ \cref{eq:blindlypumpableeq} holds.\qedhere
\end{itemize}
\end{proof}
Like $\mathcal{P}_\A$, the size and norm of $\mathcal{Q}_\A$ is exponential, and we can test zeroness of $\mathcal{Q}_\A$ in PSPACE.
\section{Conclusion}
As mentioned in the introduction our PSPACE upper bounds are not constructive. If one would like to construct an equivalent deterministic automaton its size would be bounded by a tower of exponents (see \cref{remark:deterministic}, and recall that the constant $\ramsey$ is obtained from \cref{lem:tower}). We cannot extract the unambiguous automaton from our techniques\footnote{Note that it can be constructed, just not necessarily from our techniques: since from the decision procedure one can be sure of its existence, we can enumerate unambiguous weighted automaton and test each for equivalence.}. Recall that for unambiguisation in the proof of \cref{prop:pumpIsUnamb} we rely on Bell and Smertnig's result: \cref{theorem:Bell}. One could imagine a direct construction as we provide for the deterministic automaton. The issue is that for unambiguous automata one would need to keep track of all nonzero runs. Given an unambiguous automaton it is known that the number of such runs is bounded~\cite{WeberS91}, but the size of the unambiguous automaton could be arbitrary big.
A future work question is whether one could solve the determinisation and unambiguisation problems constructively in elementary time and space.

We do not provide any lower bound and we are unaware of such a result, even for the general class of weighted automata over rationals. We write a simple observation why obtaining lower bounds seems to be difficult:
Suppose one wants to encode a problem, \eg satisfiability of a SAT formula.
One would like to define a weighted automaton $\A$
that behaves like a deterministic (unambiguous) weighted automaton $\D$
except if the formula is satisfied,
which unlocks some nondeterministic (ambiguous) behaviour.
The issue is that all natural encodings can be verified with an equivalence query
to the deterministic (unambiguous) automaton $\D$, which over the rationals is in NC$^2$~\cite{Tzeng96}.

\end{document}